\DeclareMathOperator{\tr}{Tr} 
\DeclareMathOperator{\ipart}{Im}
\DeclareMathOperator*{\res}{Res}
\numberwithin{equation}{section}
\newcommand{\p}{\partial}
\newcommand{\abs}[1]{\left \lvert #1 \right \rvert}
\theoremstyle{plain}
\newtheorem{theorem}{Theorem}[section]
\newtheorem{proposition}{Proposition}[section]
\newtheorem{lemma}{Lemma}[section]
\newtheorem{corollary}{Corollary}[section]
\theoremstyle{definition}
\newtheorem{remark}{Remark}[section]
\begin{document}
\title{A matrix model with a singular weight and Painlev\'e
  III\footnote{The authors acknowledge financial support by the EPSRC
    grant EP/G019843/1.}}  \author{L. Brightmore, F. Mezzadri and
  M. Y. Mo.}  \date{}
\maketitle
\begin{abstract}
We investigate the matrix model with weight
\begin{equation*}
  w(x) := \exp\left(-\frac{z^2}{2x^2}+\frac{t}{x} 
  -\frac{x^2}{2} \right)
\end{equation*}
and unitary symmetry. In particular we study the double scaling limit
as $N \to \infty$ and $(\sqrt{N} t, Nz^2 ) \to (u_1,u_2)$, where $N$
is the matrix dimension and the parameters $(u_1,u_2)$ remain
finite. Using the Deift-Zhou steepest descent method we compute the
asymptotics of the partition function when $z$ and $t$ are of order
$O\bigl(N^{-1/2}\bigr)$. In this regime we discover a phase transition
in the $(z,N)$-plane characterised by the Painlev\'e III equation.
This is the first time that Painlev\'e III appears in studies of
double scaling limits in Random Matrix Theory and is associated to the
emergence of an essential singularity in the weighting function.  The
asymptotics of the partition function is expressed in terms of a
particular solution of the Painlev\'e III equation. We derive
explicitly the initial conditions in the limit
$Nz^2\rightarrow u_2$ of this solution.
\end{abstract}
\vspace{.25cm}

\hspace{.26cm} 2010 MSC: 15B52, 35Q15.
\maketitle
\tableofcontents
\section{Introduction}
\label{introduction}
\subsection{Background}
\label{background}
The purpose of this article is to study the asymptotics as $N \to
\infty$ of the  partition function
\begin{equation}
  \label{eq:main_average}
E_N(z,t):=  \frac{1}{N!} \int_{\mathbb{R}^N} \prod_{j=1}^N
 \exp\left(-\frac{z^2}{2x_j^2} + \frac{t}{x_j} -\frac{x^2_j}{2}\right)
   \prod_{1 \le j < k \le N}\abs{x_k - x_j}^2 d^Nx,  
\end{equation}
where $z \in \mathbb{R} \setminus \left \{0 \right \}$ and $0\le t
<\infty$; the particular case $z=t=0$ corresponds to the partition
function of the Gaussian Unitary Ensemble (GUE).  This multiple
integral 
belongs to a general class 
\begin{equation}
  \label{eq:partition_functions}
 \frac{1}{N!} \int_{J^N} \prod_{j=1}^N \exp\bigl(-V(x)\bigr)
\prod_{1 \le j < k \le N}\abs{x_k -    x_j}^2 d^Nx, \quad J \subseteq
\mathbb{R}, 
\end{equation}
where $w(x)=\exp\bigl(-V(x)\bigr)$ is the \textit{weighting function}
and $V(x)$ is known as the \textit{potential.}  These integrals have
been the subject of extensive investigations in Random Matrix Theory
(RMT), because they contain all the information on the correlations of
the eigenvalues and are the starting point to study their linear
statistics as well as global fluctuations of the spectra.  

The potential of the GUE is $V(x) = - x^2/2$; therefore, $E_N(z,t)$
can be thought of as the partition function of a matrix model obtained
by perturbing the GUE potential with a first and a second order pole,
with $t$ and $z$ measuring the strength of the perturbations. Then, a
natural question arises: what happens as the
average~\eqref{eq:main_average} approaches the GUE partition function?
The main result of this paper is that as $N \to \infty$ while
$N^{\frac{1}{2}}t $ and $N^{\frac{1}{2}}z$ converge to finite
constants, a phase transition emerges where the asymptotics of
$E_N(z,t)$ is characterized by a solution of the Painlev\'e III (PIII)
equation.

In most applications the potential $V(x)$ is required to have some
regularity properties; for example, imposing that $V(x)$ should be
real analytic together with appropriate boundary conditions suffices
in many cases, as it guarantees that the limiting mean density of the
eigenvalues, known as \emph{equilibrium measure}, is supported on a
finite union of intervals.  However, recently matrix models whose
weight $w(x)$ has an essential singularity have appeared in several
area of mathematics and physics, like number theory, quantum transport
and finite-temperature field theory (see,
\textit{e.g.},~\cite{BFB97,BS08,CI10,Luk01,MS12b}).  For such
singular potentials the asymptotic analysis of the partition function,
or of any statistics of the spectra, becomes substantially involved
and no studies of the double scaling limits are available.

The eigenvalues of a matrix ensemble with partition
function~\eqref{eq:partition_functions} form a determinantal point
process.  It is remarkable that with an appropriate choice of the
scaling limit, as $N \to \infty$ the kernel of this process becomes
universal and depends only on the local properties of the equilibrium measure
(see, \emph{e.g.},~\cite{Dei99book} and references therein).  Finding
the universal kernel can be reduced to the asymptotics analysis in the
same scaling limit of the polynomials orthogonal with respect to
$w(x)$.  By standard theory of orthogonal polynomials~\cite[Chap. 2.2]{Sze39} and
by the results in~\cite{BEH06}, the asymptotics of $E_N(z,t)$ can
expressed in terms of the same system of orthogonal polynomials.

In order for the integral~\eqref{eq:main_average} to converge either
$z\neq 0$ or $z=t=0$, when it is simply the partition function of the
GUE whose equilibrium measure is the \textit{semicircle law,} which is
supported in an interval symmetric with respect to the origin. In the
limit $N \to \infty$ the parameter $t$ does not contribute to the
equilibrium measure because with the correct scaling it becomes
asymptotically negligible.  Thus, the limiting values $z=0$ is
bifurcation point: away from it, the second order pole in the
potential splits the support of the equilibrium measure in two
intervals symmetric with respect to the origin~\cite{MM09}. Previous
studies of double scaling limits in matrix models have concerned
various types of critical points of the equilibrium measure. When it
vanishes quadratically inside the support, the universality of the
spectra correlations is characterized in the double scaling limit by
the second Painlev\'e equation~\cite{BI99,BI03,CK06,CKV08}.  In this
case too, two intervals in the support of the equilibrium measure
coalesce into one at the critical point; however, the main difference
from the model of the present paper is that the potential is real
analytic and has not any singularity at the critical point.  This
changes the nature of the problem entirely.  When the equilibrium
measure vanishes as the power $5/2$ at an endpoint of its support,
then in the double scaling limit universality is identified by a
fourth order analogue of Painlev\'e one~\cite{CV07}.

This is the first time that a critical phenomenon associated to an
essential singularity in the weighting function has been studied. It
seems that the characterization of the double scaling limit by PIII is
specific to a pole emerging in the background of a smooth potential
and appears to be a new universal feature of the spectra of unitary
matrix ensembles.  This property is likely to be shared by other
matrix models whose weighting function has the same type of
singularity.



The average~\eqref{eq:main_average} was introduced by Berry and
Shukla~\cite{BS08} in their study of the random function 
\begin{equation}
  \label{eq:t_fun}
  Q_N(x) := \frac{\Lambda_N^{\prime 2}(x)}{\Lambda_N^{\prime 2}(x) -
    \Lambda_N(x)\Lambda_N^{\prime \prime }(x)},
\end{equation}
where $\Lambda_N(x) := \prod_{j=1}^N(x - x_j)$ and $x_1,\ldots,x_N$ are
a set of random variables --- in general not independent.  The value
distribution $P(Q_N)$ is important mainly for two reasons: firstly, it
is a sensitive indicator of the degree of the repulsion of two
neighbouring $x_j$'s, in the sense that the rate of decay of $P(Q_N)$
is a measure of the rigidity of $x_1,\ldots,x_N$; secondly, if
$\Lambda_N(x)$ is replaced by the Riemann zeta function (or more
precisely the Hardy function, which is real on the critical line),
then the Riemann hypothesis implies that $Q_N(x) >0$. Therefore,
$P(Q_N)$ provides valuable information on the statistics of the
zeros of the Riemann zeta function.

The connection between RMT and the theory of the Riemann zeta function
suggests to replace $\Lambda_N(x)$ with the characteristic polynomial
of a random matrix from the GUE.  In this case an explicit formula for
the probability density $P(Q_N)$ is very difficult to find.  However,
all the information on $P(Q_N)$ is contained in the
integral~\eqref{eq:main_average}.  It is straightforward to see from
the definition that $E_N(z,t)$ is a real analytic function of $z$ and $t$
and is even.  This symmetry suggests the introduction of the power
expansion
\begin{equation}
  \label{eq:p_ex_p}
  E_{N}(z,t)= \sum_{m=0}^\infty E_{N\, 2m}(z)t^{2m},
\end{equation}
which for $z\neq 0$ converges uniformly in $t$ in any closed subset of
$\mathbb{R}_+$ including the origin. Therefore, $E_N(z,t)$ can be
interpreted as the generating function of the coefficients
$E_{N\,2m}(z)$. Berry and Shukla~\cite{BS08} showed that the moments
of $P(Q_N)$ are given by
\begin{equation}
  \label{eq:moments}
  M_{m}= 2^{1-m}\left(\prod_{j=m}^{2m}j\right) \int_0^\infty z^{2m-1}E_{N\,2m}(z)
  dz.   
\end{equation}

In a previous article~\cite{MM09} we showed that in the range 
$c_1 N^{-\frac12} < \abs{z} < c_2N^{\frac14}$, where $c_1,c_2 >0$ 
are independent of $N$, we have   
\begin{equation}
\label{eq:asymdet}
\begin{split}
  E_{N}(z,t)&=B_{N}\exp\left(\frac{z^2}{4}-
    \frac{9}{2^{\frac{10}{3}}}\left(N^{\frac{2}{3}}z^{\frac{4}{3}}-1\right)
    +\frac{t^2N^{\frac{1}{3}}}{2^{\frac{5}{3}}z^{\frac{4}{3}}}\right)
  \\
  &\quad  \times\bigl(1+o(1)\bigr), \quad N \to \infty
\end{split}
\end{equation}
and 
\begin{equation}
  \label{eq:main}
    E_{N \, 2m}(z) \sim B_{N} \exp\left(\frac{z^2}{4}
-\frac{9}{2^{\frac{10}{3}}}\left(N^{\frac{2}{3}}z^{\frac{4}{3}}-1\right)\right)
\frac{N^{\frac{m}{3}}}{2^{\frac{5m}{3}}m!z^{\frac{4m}{3}}}, \quad N
\to \infty,
\end{equation}
where  $B_{N}$ is the ensemble average
\begin{equation}
B_{N}:= \frac{1}{\left(2\pi\right)^{N/2}\prod_{j=1}^N j!}
\int_{\mathbb{R}^N}\prod_{j=1}^N 
\exp\left(-\frac{1}{2Nx_j^2} - \frac{x_j^2}{2}\right) \prod_{1 \le j
< k \le N}\abs{x_k - x_j}^2 d^Nx.
\end{equation}
These limits, however, are not uniform in $z$: in the regime
where $zN^{\frac{1}{2}}$ remains finite they are determined by PIII and
change drastically. Since the integral~\eqref{eq:moments}  extends to
zero, the fact that the asymptotics are not uniform cannot be ignored
when computing the moments of $P(Q_N)$.

As already mentioned, recently ensembles whose weighting function have
essential singularities have appeared in several
applications.  Usually, such models depend on external parameters
beside the matrix dimension $N$; therefore, double scaling limits play
a fundamental role in their asymptotic behaviour.  Since the universal
properties of the spectrum and the associated Painlev\'e transcendents
are identified by the singular points of the equilibrium measure,
which in turn are determined by the analytical properties of $V(x)$,
we would expect the same critical behaviour whenever $V(x)$ has the
same poles as the potential appearing in $E_N(z,t)$. Therefore, the
results in this paper have implications for other matrix models,
beside its direct application to the theory of the Riemann zeta
function suggested by Berry and Shukla~\cite{BS08}, and provides a
pathway to tackle similar asymptotic problems.

As an example consider the weight introduced by Chen and
Its~\cite{CI10},
\begin{equation}
  \label{eq:its_Lag}
  W_{\alpha}(x) = x^\alpha e^{-x - s/x}, \quad 0\le x < \infty,
  \quad \alpha > -1, \quad s>0.
\end{equation}
This is a singular perturbation of the Laguerre Unitary Ensemble. For
the unperturbed weight, \textit{i.e.} when $s=0$, Forrester and
Witte~\cite{FW02,FW06} discovered that the generating function of the
probability that an interval contains $k$ eigenvalues in the hard edge
scaling limit can be evaluated in terms of a P$\mathrm{III}^{\prime}$
transcendent in $\sigma$-form.  Chen and Its~\cite{CI10} studied the
polynomials orthogonal with respect to the weight~\eqref{eq:its_Lag}
and showed that for finite $N$ the partition function can be written
as an integral involving PIII.  In unpublished
work~\cite{CI09}\footnote{We are grateful to Professors Chen and Its
  for making their manuscript available to us.}  they also
investigated the asymptotics of these orthogonal polynomials.  When
$t=0$ in Eq.~\eqref{eq:main_average} the system of monic polynomials
orthogonal with respect to the weight of the partition function
$E_N(z,0)$ can be mapped into that orthogonal with respect to $W_{\pm
  \frac{1}{2}} (x)$ by a change of variables --- the respective
partition functions, however, would still be different.  The pole of
order one in the exponent of the weighting function of $E_N(z,t)$ does
not contribute to the equilibrium measure, as it is asymptotically
negligible~\cite{MM09}.  Therefore, in the general setting we would
expect that the matrix model with weighting $W_\alpha(x)$ should
manifest the same critical behaviour discovered in this article.


There exist other matrix models whose weighting function has essential
singularities and whose understanding could shed light on important
unsolved problems.  By setting $\alpha = 3n$ and $s=n$, the partition
function associated to $W_{3n}(x)$ becomes the moment generating
function of the probability density of the~\emph{Wigner delay
  time}~\cite{MS12b}.  This is the average time that an electron
spends when scattered by an open cavity and plays a fundamental role
in the theory of mesoscopic quantum dots. The distribution of the
Wigner delay time is far from being
understood~\cite{MS12b,TM13}. 

An other example of a similar problem is provided by the distribution
of the roots of the derivative of the characteristic polynomials of a
random unitary matrix. It has an integral representation that reduces
to a matrix average over the unitary group; the weighting function of
this average has essential singularities analogous to those in the
partition function $E_N(z,t)$~\cite{DFFHMP10,Mez03}.  This
distribution is very elusive; the calculation of more refined formulae
than those known at the moment would improve present results on the
percentage of the number of zeros of the Riemann zeta function on the
critical line.

\subsection{The Riemann-Hilbert Problem and the 
 Isomonodromic Deformations}
\label{se:RH_iso_def}
The asymptotics of the system of polynomials  orthogonal with respect
to the weighting function $\exp\left(-V_{z,t}(x)\right)$, where
\begin{equation}
\label{tzpotential}
V_{z,t}(x)=\frac{z^2}{2x^2} - \frac{t}{x} +\frac{x^2}{2},
\end{equation}
is characterized by two distinct regimes. It is well known that such
asymptotics is determined, after appropriate rescaling, by the
behaviour of the equilibrium measure, which is the solution of
a particular variational problem.  If $z=t=0$, then the equilibrium
measure is the semicircle law, which is supported in $[-2,2]$ and is
defined by
\begin{equation}
\label{eq:eqmeas1cut}
d\mu(y) := \frac{1}{2\pi}\sqrt{4 - y^2}\,dy, \qquad y := \frac{x}{\sqrt{N}}.
\end{equation}
In the limit $N \to \infty$ the simple pole in the potential
~\eqref{tzpotential} does not contribute to the equilibrium measure,
which depends only on $z$; in~\cite{MM09} we showed that in the
interval $c_1N^{-\frac12} < \abs{z} < c_2N^{\frac14}$, where $c_1$ and
$c_2$ are two positive constants, it is supported on two disjoint
intervals symmetric with respect to the origin. As $z \to 0$ the
quadratic singularity vanishes and the gap between the intervals
closes.  If this migration is fast enough, \textit{i.e.}  when
$zN^{1/2}$ remains finite as $N \to \infty$,
formulae~\eqref{eq:asymdet} and~\eqref{eq:main} cease to be valid and
a phase transition emerges. The main result of this paper is that in
this region of the phase space $(z,N)$ the asymptotics of $E_N(z,t)$
is expressed in terms of a solution of the PIII equation.

As we shall see in Sec.~\ref{se:RH}, for technical reasons we will use
the equilibrium measure~\eqref{eq:eqmeas1cut} and then deform the
relevant Riemann-Hilbert problem (RHP) by incorporating the essential
singularities in the jump matrix.

Let us introduce the scaling   
\begin{equation}
  \label{eq:u1u2}
   \left(u_{1,N},u_{2,N}\right) := (\sqrt{N}t,N z^2),
\end{equation}
with 
\begin{equation}
  \label{eq:limu1u2}
  (u_{1,N},u_{2,N}) = \left(u_1,u_2\right)\bigl(1 +
  O\left(1/N\right)\bigr), \quad N \to \infty,
\end{equation}
where $u_1\ge 0$ and $u_2 >0$ are finite. The rate of
convergence of $(u_{1,N},u_{2,N})$ is chosen so that it does not
affect the leading order asymptotics of $E_N(z,t)$ (see
Remark~\ref{re:error_term}).  Consider the rescaled weight
\begin{equation}
  \label{eq:rescaled_weight}
  w_N(y):=  \exp\Biggl(-N\left(\frac{u_{2,N}}{2N^3y^2} +
          \frac{y^2}{2}\right) +
    \frac{u_{1,N}}{Ny} \Biggr),
\end{equation}
where $y$ is rescaled as in~\eqref{eq:eqmeas1cut}.  Let $\pi_j(y)$
denote the monic polynomials orthogonal with respect to $w_N(y)$,
\textit{i.e.}
\begin{equation}
   \label{eq:orthogonality_2}
  \int_{-\infty}^\infty w_N(y) \pi_j(y)\pi_k(y)dy = h_j \delta_{jk},
  \quad j,k=1, 2,\dotsc,
\end{equation}
where the subscript in $\pi_j(y)$ refers to the degree of the
polynomial. This change of variables turns the
integral~\eqref{eq:main_average} into
\begin{equation}
  \label{eq:E_F}
  E_N(z,t) = N^{\frac{N^2}{2}}G_N(u_{1,N},u_{2,N}),
\end{equation}
where 
\begin{equation}
  \label{eq:tildeB}
  G_N(u_{1,N},u_{2,N})  := \frac{1}{N}\int_{\mathbb{R}^N}\prod_{j=1}^N
  w_N(y_j)\prod_{1\le j < k \le N}\abs{y_k  - y_j}^2d^Ny.
\end{equation}
Standard theory of orthogonal polynomials (see,
\textit{e.g.},~\cite[Chap.~2.2]{Sze39})  gives
\begin{equation}
  \label{eq:hankel_det_2}
  G_N(u_{1,N},u_{2,N}) = \det\left(\mu_{j + k}\right)_{j,k=0}^{N-1}
  = \prod_{j=0}^{N-1} h_j,
\end{equation}
where
\begin{equation}
  \label{eq:moments_2}
 \mu_j := \int_{-\infty}^\infty w_N(y) y^j dy.
  \quad j=0,1,\dotsc
\end{equation}

Our approach to the analysis of the
integral~\eqref{eq:main_average} is based on the steepest descent
method to compute the asymptotics of orthogonal polynomials introduced
by Deift and Zhou~\cite{DZ93} and further developed by Deift
\textit{et al.}~\cite{DKMVZ99b,DKMVZ99a} (see also~\cite{BI99} in
connection to the double scaling limit).  The starting point of this
technique is the characterization of the orthogonal polynomials in
terms of the solution of a RHP due
to Fokas \textit{et al.}~\cite{FIK91,FIK92}.  Define the matrix valued
function
\begin{equation}
 \label{eq:RH_sol}
   Y(y) :=
   \begin{pmatrix}
   \pi_N(y) & \frac{1}{2\pi i}
    \int_{-\infty}^\infty\frac{\pi_N(q)w_N(q)}{q-y}dq  \\
     \kappa_{N-1}\pi_{N-1}(y) &
     \frac{\kappa_{N-1}}{2\pi i}
      \int_{-\infty}^\infty\frac{\pi_{N-1}(q)w_N(q)}{q-y}dq
\end{pmatrix},
\end{equation}
where $\kappa_{N-1}=-2\pi i/h_{N-1}$. It solves the following RHP:
\begin{equation}
  \label{eq:RHP}
\begin{aligned}
1. \quad  & \text{$Y(y)$ is analytic in $\mathbb{C}/\mathbb{R}$},\\
2. \quad & Y_+(y)=Y_-(y)\begin{pmatrix} 1 & w_N(y)  \\ 0 & 1
\end{pmatrix}, \quad y\in\mathbb{R},\\
3. \quad  & Y(y)=\left(I+O(y^{-1})\right)\begin{pmatrix} y^N & 0
\\ 0 & y^{-N} \end{pmatrix}, \quad  y\rightarrow\infty,
\end{aligned}
\end{equation}
where $Y_+(y)$ and $Y_-(y)$ denotes the limiting value of $Y(y)$ as it
approaches the left- and right-hand side of the real axis.  The jump
matrix in the second condition is continuous, as the
weight~\eqref{eq:rescaled_weight} can be uniformly bounded on
$\mathbb{R}$; therefore, there is no need to specify any special
behaviour of $Y(y)$ near the origin in the definition~\eqref{eq:RHP}.

Bertola \textit{et al.}~\cite{BEH06} showed that the logarithmic
derivatives of $G_N(u_{1,N},u_{2,N})$ admit integral representations
involving $Y(y)$, which in our context can be phrased as
follows~\cite{MM09}:
\begin{lemma}[Bertola, Eynard and Harnad~\cite{BEH06}]
\label{thm:beh} 
The following differential identities
hold:
\begin{subequations}
\label{eq:diffid}
\begin{align}
\label{eq:diffidv1} \frac{\p \log G_N}{\p u_{1,N}}&=-\frac{1}{4\pi
iN}\oint_{y=0}\frac{1}{y}\tr\left(Y^{-1}(y)Y^{\prime}(y)\sigma_3\right)dy,\\
\label{eq:diffidv2} \frac{\p \log G_N}{\p u_{2,N}}&=\frac{1}{8\pi
iN^2}\oint_{y=0}\frac{1}{y^2}
\tr\left(Y^{-1}(y)Y^{\prime}(y)\sigma_3\right)dy,
\end{align}
\end{subequations}
where the contour of integration is a small loop around $y=0$
oriented counter-clockwise and $\sigma_3 = \left ( \begin{smallmatrix}
    1 & 0 \\ 0 & -1\end{smallmatrix} \right)$.
\end{lemma}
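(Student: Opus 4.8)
The plan is to reduce both identities to the classical formula $\log G_N=\sum_{j=0}^{N-1}\log h_j$ from \eqref{eq:hankel_det_2}, and then to rewrite the resulting integral over $\mathbb{R}$ as the stated loop integral around $y=0$ with the help of the jump relation in the R-H problem \eqref{eq:RHP}.

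First I would differentiate the normalisation constants $h_j$. Because $w_N$ decays faster than any power of $y$ both as $y\to0$ and as $y\to\pm\infty$ along $\mathbb{R}$, all the moments \eqref{eq:moments_2} converge, so the monic orthogonal polynomials $\pi_j$ and the constants $h_j$ in \eqref{eq:orthogonality_2} are well defined and depend smoothly on the parameters. Writing $v$ for either $u_{1,N}$ or $u_{2,N}$ and using that $\pi_j$ is monic — so $\p_v\pi_j$ has degree $\le j-1$ and is orthogonal to $\pi_j$ — one gets $\p_v\log h_j=h_j^{-1}\int_{\mathbb{R}}\pi_j(y)^2\,(\p_v\log w_N)(y)\,w_N(y)\,dy$. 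From \eqref{eq:rescaled_weight}, $\p_{u_{1,N}}\log w_N=1/(Ny)$ and $\p_{u_{2,N}}\log w_N=-1/(2N^2y^2)$; summing over $j$ and setting $K_N(y):=\sum_{j=0}^{N-1}\pi_j(y)^2/h_j$ gives
\begin{equation*}
\frac{\p\log G_N}{\p u_{1,N}}=\frac1N\int_{\mathbb{R}}\frac{w_N(y)K_N(y)}{y}\,dy,\qquad \frac{\p\log G_N}{\p u_{2,N}}=-\frac1{2N^2}\int_{\mathbb{R}}\frac{w_N(y)K_N(y)}{y^2}\,dy.
\end{equation*}
Next I would identify $w_NK_N$ with a boundary jump of $\tr(Y^{-1}Y'\sigma_3)$. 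Since the jump matrix in \eqref{eq:RHP} has unit determinant and $Y\sim\diag(y^N,y^{-N})$ at infinity, $\det Y\equiv1$, hence $Y^{-1}Y'$ is traceless and $\tr(Y^{-1}Y'\sigma_3)=2(Y^{-1}Y')_{11}$. Writing $Y_+=Y_-J$ with $J$ the jump matrix, one has $Y_+^{-1}Y_+'=J^{-1}(Y_-^{-1}Y_-')J+J^{-1}J'$, and a short computation with $J^{-1}J'$ and $J\sigma_3J^{-1}$ yields $\tr(Y_-^{-1}Y_-'\sigma_3)-\tr(Y_+^{-1}Y_+'\sigma_3)=2w_N(y)\,(Y_-^{-1}Y_-')_{21}(y)$ on $\mathbb{R}\setminus\{0\}$. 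The first column of $Y$ is continuous across $\mathbb{R}$, with entries $\pi_N$ and $\kappa_{N-1}\pi_{N-1}$, so $(Y^{-1}Y')_{21}=\kappa_{N-1}(\pi_N\pi_{N-1}'-\pi_{N-1}\pi_N')$; the Christoffel--Darboux identity at coincident points gives $\pi_N\pi_{N-1}'-\pi_{N-1}\pi_N'=-h_{N-1}K_N$, so with $\kappa_{N-1}$ as in \eqref{eq:RH_sol} one finds $(Y^{-1}Y')_{21}=2\pi i\,K_N(y)$ and therefore
\begin{equation*}
w_N(y)K_N(y)=\frac1{4\pi i}\left[\tr\big(Y_-^{-1}Y_-'\sigma_3\big)-\tr\big(Y_+^{-1}Y_+'\sigma_3\big)\right],\qquad y\in\mathbb{R}\setminus\{0\}.
\end{equation*}

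Then I would pass from the integral over $\mathbb{R}$ to the loop integral. Put $g(y):=y^{-1}\tr(Y^{-1}(y)Y'(y)\sigma_3)$; it is analytic on $\mathbb{C}\setminus\mathbb{R}$, and by the previous step its boundary values satisfy $g_+-g_-=-4\pi i\,w_N(y)K_N(y)/y$ on $\mathbb{R}\setminus\{0\}$. Two features control the contour deformation: from $Y=(I+Y_1y^{-1}+\cdots)\diag(y^N,y^{-N})$ at infinity one computes $\tr(Y^{-1}Y'\sigma_3)=2Ny^{-1}+O(y^{-2})$, so $g(y)=2Ny^{-2}+O(y^{-3})$ as $y\to\infty$; and along $\mathbb{R}$ the jump $g_+-g_-$ decays faster than any power both at the essential singularity $y=0$ (where $w_N(y)=\exp(-u_{2,N}/(2N^2y^2)+\cdots)$) and at $y=\pm\infty$, hence it is integrable on all of $\mathbb{R}$ and extends continuously through the origin. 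Deforming the small positively oriented loop $\gamma_\epsilon$ around $0$ out to a large circle $\Gamma_R$, Cauchy's theorem applied to the annulus cut along $\mathbb{R}$ gives $\oint_{\gamma_\epsilon}g\,dy=\oint_{\Gamma_R}g\,dy+\int_{\epsilon<|s|<R}(g_+-g_-)(s)\,ds$. Letting $R\to\infty$ kills the $\Gamma_R$-term since $g=O(y^{-2})$, and letting $\epsilon\to0$ is legitimate because the jump is flat at $0$; thus $\oint_{y=0}g\,dy=\int_{\mathbb{R}}(g_+-g_-)(s)\,ds=-4\pi i\int_{\mathbb{R}}w_N(s)K_N(s)s^{-1}\,ds$, which together with the first display above is \eqref{eq:diffidv1}. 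Identity \eqref{eq:diffidv2} follows in exactly the same way with $g$ replaced by $g(y)/y=y^{-2}\tr(Y^{-1}Y'\sigma_3)$, whose jump along $\mathbb{R}$ is $-4\pi i\,w_NK_N/s^2$ and which is $O(y^{-3})$ at infinity.

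The step requiring most care will be the passage near the origin. The orthogonal-polynomial derivative computation and the Christoffel--Darboux bookkeeping are routine; the subtle point is that, since $w$ has an essential singularity at $y=0$, the jump of $g$ across $\mathbb{R}$ is not merely small there but \emph{flat}, and it is precisely this flatness that lets the shrinking loop $\gamma_\epsilon$ absorb the entire real-line integral without generating an additional boundary contribution at $y=0$; dually, the R-H normalisation at infinity is what disposes of the large-circle term.
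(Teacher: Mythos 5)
Your proof is correct, but be aware that the paper itself contains no proof of Lemma \ref{thm:beh} to compare against: the identities are quoted from Bertola--Eynard--Harnad \cite{BEH} (see also \cite{MM}). Your derivation is the standard mechanism behind that result, and all the constants check out: differentiating $\log G_N=\sum_{j}\log h_j$ with the monicity trick gives $\frac{1}{N}\int w_NK_N y^{-1}dy$ and $-\frac{1}{2N^2}\int w_NK_Ny^{-2}dy$; the jump computation $\tr(Y_-^{-1}Y_-'\sigma_3)-\tr(Y_+^{-1}Y_+'\sigma_3)=2w_N(Y_-^{-1}Y_-')_{21}$ together with the confluent Christoffel--Darboux formula gives $(Y^{-1}Y')_{21}=2\pi iK_N$; and the deformation argument, using $\tr(Y^{-1}Y'\sigma_3)=2Ny^{-1}+O(y^{-2})$ at infinity and the flatness of $w_N$ at $y=0$, reproduces exactly the prefactors $-\frac{1}{4\pi iN}$ and $\frac{1}{8\pi iN^2}$. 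Two small remarks. First, $(Y^{-1}Y')_{21}=2\pi iK_N$ requires $\kappa_{N-1}=-2\pi i h_{N-1}^{-1}$, which is what the normalization $Y(y)=(I+O(y^{-1}))\,\mathrm{diag}(y^N,y^{-N})$ forces; the value $\kappa_{N-1}=-2\pi ih_{N-1}$ printed after \eqref{eq:RH_sol} is a typo, and you have (implicitly and correctly) used the corrected constant -- with the literal printed value your chain would produce a spurious factor $h_{N-1}^2$. Second, your reading of $\oint_{y=0}$ as the $\epsilon\to0$ limit of integrals over shrinking loops is consistent with the remark the authors place immediately after the lemma, where the residue is defined through the uniform asymptotic expansion of $Y$ near the origin: because the jump matrix is $I$ plus a term flat at $y=0$, the two definitions differ only by contributions vanishing to all orders, which is precisely the flatness point you single out at the end.
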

Note that, although the function $Y(y)$ has a jump discontinuity on
the real axis, it has a uniform asymptotic expansion near the origin,
which is used to compute the residues in the right hand sides of
\eqref{eq:diffid}.  Thus, the differential
identities~\eqref{eq:diffid} allow us to compute the asymptotics of
$G_N(u_{1,N},u_{2,N})$ in terms of that of $Y(y)$.

A standard technique in the Deift-Zhou steepest descent is the \textit{opening
  of the lens}. The $g$-function is used to construct a sequence of
transformations that turn~\eqref{eq:RHP} into a RHP which is amenable to an
asymptotic analysis.  As $N \to \infty$ the solution of the modified RHP outside
small neighbours of the critical points of the equilibrium measure can be
approximated in the same way as in~\cite[Chap.~7.3]{Dei99book}
and~\cite{DKMVZ99b}.  In our problem such critical points are the edges of
support of the equilibrium measure  and the origin, where the essential singularity appears.  These
aspects of the Riemann-Hilbert analysis are discussed in Sec~\ref{se:RH}.

The challenge for the weight~\eqref{eq:rescaled_weight} comes from the essential
singularity. Inside a small disc around the origin the asymptotic approximation
of the RHP breaks down; thus, it is replaced by an exactly solvable
``model'' RHP, whose solution $\Phi$, known as the local parametrix, matches the
asymptotic solution outside such a disc.  A widely used
technique in investigations of double scaling limits in RMT (see,
\textit{e.g.,}~\cite{BI03,CK06,CKV08,CV07}) is to express $\Phi$ in terms of a
special solution of a nonlinear differential equation, which in turn gives the
compatibility conditions for the following system of linear ODEs:
\begin{equation}
\label{eq:comp}
\p_{\zeta}\Phi(\zeta,u)=A(\zeta,u)\Phi(\zeta,u),
\quad\p_{u}\Phi(\zeta,u)=B(\zeta,u)\Phi(\zeta,u),
\end{equation}
where $\zeta \in \mathbb{C}$ and $u$ is a parameter measuring the
strength of the perturbation from the critical point. The matrices
$A(\zeta,u)$ and $B(\zeta,u)$ are rational functions of $\zeta$. In
our case the parameter $u$ is replaced by the vector $(u_1,u_2)$.

The compatibility conditions for \eqref{eq:comp}
are among isomonodromic deformations which include
the Painlev\'e equations. In previous studies of
double scaling limits in one matrix models, one
parameter was sufficient to represent the whole
family of perturbations from the critical
point. In the RHP that we
study, however, the dependence of the local
parametrix on two perturbation parameters instead
of one increases the technical difficulties
considerably.  Indeed, the compatibility
conditions for the linear system analogous to
\eqref{eq:comp} give us a nonlinear PDE and not an
ODE. In Sec.~\ref{se:local} we use the Hamiltonian
theory of isomonodromic deformations~\cite{JMU81}
to express such a PDE in terms of a time-dependent
Hamiltonian system ODEs in the variables $u_{1}$
and $u_{2}$. This simplifies the
problem substantially from the computational point
of view.

The differential identities~\eqref{eq:diffid} give
a link between the partition function $E_N(z,t)$
and the RHP~\eqref{eq:RHP}; in
Secs.~\ref{se:final} and~\ref{se:asy_Hank} we
compute their asymptotics in terms of the solution
of the Hamiltonian system. This completes the proof of
Theorem~\ref{thm:main1}.  The boundary conditions
of the system of ODEs are provided by the
asymptotics of the local parametrix as $u_{2}\to
0$, which we compute in Sec.~\ref{se:asymu2}.

Sec.~\ref{se:red} concerns the reduction of the
Hamiltonian system to the PIII equation. In order
to achieve this goal we need to make two
observations.  Firstly, when $t=0$, and therefore
by Eqs.~\eqref{eq:u1u2} and~\eqref{eq:limu1u2}
$u_1=0$ too, the symmetries of the partition
function $E_N(t,z)$ and the initial conditions
computed in Sec.~\ref{se:asymu2} simplify the
system of ODEs, reducing it to a nonlinear second
order ODE.  The Taylor coefficients in
Eq.~\eqref{eq:p_ex_p}, and in turn the partition
function~\eqref{eq:main_average}, can be expressed
in terms of the solution of this nonlinear ODE.
Secondly, Chen and Its~\cite{CI10} showed that the
solution of RHP for the orthogonal polynomials
associated to the weighting
function~\eqref{eq:its_Lag} is described by the
PIII equation. When $\alpha=\pm 1/2$ in the
weight~\eqref{eq:its_Lag} and $t=0$ in the
potential~\eqref{tzpotential}, the two respective
systems of orthogonal polynomials are related by a
change of variables.  Thus, the corresponding RHPs
can be mapped into each other. These two facts and
a lengthy calculation allow us to identify the
second order ODE that we discovered with a scaling
limit of the PIII equation in~\cite{CI10}.  The
ODE associated to the problem in this paper
belongs to the PIII family too, but it is
different, as the parameters that define it are
not those of the ODE in~\cite{CI10}.

\section*{Acknowledgements}
We are very grateful to Professor Sir Michael Berry for first
suggesting us to study this problem and to Professor Alexander Its
for many invaluable discussions and suggestions on the material in this
article.

\section{Statement of Results}
\label{se:st_res}

The differential identities~\eqref{eq:diffid} are the starting point
to compute an asymptotic formula for the partition function
$E_N(z,t)$. When the perturbation parameters $z$ and $t$ are
$O(N^{-1/2})$, Eqs.~\eqref{eq:diffidv1} and~\eqref{eq:diffidv2} can be
expressed at leading order in terms of a particular solution of
non-linear Hamiltonian equations, which are isomonodromic deformations
of the compatibility conditions for the ODEs~\eqref{eq:comp}.  The
phase space of this Hamiltonian system is four-dimensional and the
time variables are the parameters $(u_1,u_2)$ defined in
Eq.~\eqref{eq:limu1u2}. Denote by $\mathbf{Z}(u_1,u_2) =
(P_1,Q_1,P_2,Q_2)$ such a solution.  Then, $\mathbf{Z}(u_1,u_2)$ is
determined uniquely by the monodromy data of the local parametrix at
the origin of the RHP~\eqref{eq:RHP}, which in turn gives the
behaviour of $\mathbf{Z}(u_1,u_2)$ as $u_2\rightarrow 0$, thereby
providing the initial conditions.  The asymptotic formulae of
$\mathbf{Z}(u_1,u_2)$ in the limit as $u_2 \to 0$ are rather
involved and are given by Theorem~\ref{thm:init}.


\begin{theorem}
\label{thm:main1}
Let $u_1$ and $u_2$ be the parameters defined in
Eq.~\eqref{eq:limu1u2}. At leading order as $N\to \infty$ the
differential identities~\eqref{eq:diffidv1} and~\eqref{eq:diffidv2}
are
\begin{subequations}
\label{eq:logder}
\begin{align}
\label{eq:logder_1}
\frac{\p\log G_N}{\p u_{1,N}}&
= - H_1(\mathbf{Z},u_1,u_2)+O\left(N^{-1}\right),\\
\label{eq:logder_2}
\frac{\p\log G_N}{\p u_{2,N}}&=-H_2(\mathbf{Z},u_1,u_2) +O(N^{-1}),
\end{align}
\end{subequations}
uniformly for $u_1 \in E_1$ and $u_2 \in E_2$, where $E_1$ and $E_1$
are two closed interval in $\mathbb{R}_+$.  The vector
$\mathbf{Z}(u_1,u_2) = (P_1,Q_1,P_2,Q_2)$ is a particular solution of
the time-dependent Hamilton equations
\begin{equation}
\label{eq:hameq}
\frac{\p P_k}{\p u_j}=-\frac{\p \left(H_j+h_j\right)}{\p Q_k},
\qquad \frac{\p Q_k}{\p u_j}=\frac{\p \left(H_j+h_j\right)}%
{\p P_k}, \qquad j,k=1,2
\end{equation}
with Hamiltonian functions
\begin{subequations}
\label{eq:hamcan}
\begin{align}
H_1&=2i\frac{P_2}{u_2}+\frac{1}{2}u_1Q_2-\frac{1}{2}u_2Q_1Q_2
-\frac{1}{4u_2}u_1^2Q_1+\frac{1}{2}u_1Q_1^2
-\frac{1}{4}u_2Q_1^3 \notag\\
\label{eq:H1}
 & \quad +\frac{2}{u_2}P_1P_2Q_2+\frac{1}{u_2}P_1^2Q_1,\\
H_2&=-\frac{iP_1}{u_2}+\frac{iP_2Q_1}{u_2}+\frac{1}{8}
     u_2Q_1^2Q_2+\frac{1}{8}u_2Q_2^2-\frac{1}{2u_2}P_1^2Q_2\notag \\
&\quad -\frac{1}{2u_2}P_2^2Q_2^2-\frac{iP_2u_1}{u_2^2}
  +\frac{1}{8u_2^2}u_1^3Q_1-\frac{1}{4u_2}
u_1^2Q_1^2\notag \\
\label{eq:H2}
&\quad +\frac{1}{8}u_1Q_1^3-\frac{u_1}{u_2^2}P_1P_2Q_2
 -\frac{u_1}{2u_2^2}P_1^2Q_1-\frac{u_1^2Q_2}{8u_2},
 \end{align}
 as well as
 \begin{equation}
\label{eq:h1h2}
h_1=\frac{P_1}{u_2}, \qquad h_2=-\frac{P_2Q_2}{u_2}-\frac{P_1Q_1}{2u_2}-\frac{u_1}{2u_2^2}P_1.
\end{equation}
\end{subequations}
\end{theorem}

 

\begin{remark}
\label{re:error_term}
The definitions of $(u_{1,N},u_{2,N})$ and of their limits $(u_{1},u_{2})$ (see Eqs.~\eqref{eq:u1u2} and~\eqref{eq:limu1u2}) imply that the error terms in Eqs.~\eqref{eq:logder_1} and~\eqref{eq:logder_2} are not affected by the rate of convergence of $(u_{1,N},u_{2,N})$.  
\end{remark}

All the information on the partition function $E_N(z,t)$ is contained
in the coefficients of the generating function~\eqref{eq:p_ex_p},
which can be studied by looking at the projection
\begin{equation}
\label{eq:hameq2}
\frac{\p P_k}{\p u_2}=-\frac{\p \left(H_2+h_2\right)}{\p Q_k},
\qquad \frac{\p Q_k}{\p u_2}=\frac{\p \left(H_2+h_2\right)}%
{\p P_k}, \qquad k=1,2,
\end{equation}
at $u_1=0$. The expressions of the initial conditions of this
subsystem simplifies considerably  compared to the formulae in
Theorem~\ref{thm:init}.
\begin{corollary}
\label{co:asym_u2}
Let $u_1=0$.  As $u_2\to 0$  we have
\begin{subequations}
\label{eq:asympcan}
\begin{align}
\label{eq:asympcan_a}
P_1&=i\Biggl(\frac{\sqrt{u_2}}{\sqrt{2\pi}}+\frac{2u_2}{\pi}
+\left(\frac{2^{\frac{5}{2}}}{\pi^{\frac{3}{2}}}-\frac{\sqrt{2}}{\pi^{\frac{3}{2}}}
-\frac{3}{\sqrt{2\pi}}\right)u_2^{\frac{3}{2}}\notag \\
& \quad
+\left(\frac{2}{3}+\frac{8}{\pi^2}-\frac{32}{9\pi}\right)u_2^2\Bigr)
+O\left(u_2^{\frac{5}{2}}\right),\\
\label{eq:asympcan_b}
Q_1&=i\Biggl(\sqrt{\frac{2}{u_2\pi}}-2+\frac{4}{\pi}
+\left(\frac{2^{\frac{3}{2}}}{\sqrt{\pi}}+\frac{2^{\frac{7}{2}}}{\pi^{\frac{3}{2}}}
-\frac{3\sqrt{2}}{\sqrt{\pi}}-\frac{2^{\frac{3}{2}}}{\pi^{\frac{3}{2}}}
\right)u_2^{\frac{1}{2}} \notag \\
& \quad +\left(\frac{16}{\pi^2}-\frac{64}{9\pi}\right)u_2\Biggr)
+O\left(u_2^{\frac{3}{2}}\right),
\end{align}
\begin{align}
P_2&=O\left(u_2^3\right),\\
Q_2&=-\frac{2^{\frac{3}{2}}}{\sqrt{\pi
u_2}}+2-\frac{4}{\pi}+\left(\frac{3\cdot
2^{\frac{5}{2}}}{\pi^{\frac{3}{2}}}+\frac{7\cdot
2^{\frac{3}{2}}}{3\sqrt{\pi}}-\frac{2^{\frac{5}{2}}}{\sqrt{\pi}}
-\frac{2^{\frac{9}{2}}}{\pi^{\frac{3}{2}}}\right)u_2^{\frac{1}{2}}\notag \\
& \quad +\left(\frac{32}{9\pi}-\frac{8}{\pi^2}\right)u_2
+O\left(u_2^{\frac{3}{2}}\right).
\end{align}
\end{subequations}
\end{corollary}
The subsystem~\eqref{eq:hameq2} is equivalent to a fourth order ODE.
However, from the integral~\eqref{eq:tildeB} we see that the average
$G_N(u_{1,N},u_{2,N})$ is an even function in $u_{1,N}$, and hence the
derivatives $\p_{u_{1,N}}^{2k+1}\log G_N$ are zero when
$u_{1,N}=0$. Then, Eq.~\eqref{eq:logder_1} gives
\begin{equation}
\label{eq:reduction}
\p_{u_1}^{2k}H_1\bigr \rvert_{u_1=0}=0,\qquad k\in\{0\}\cup\mathbb{N}.
\end{equation} 
The first two equations in \eqref{eq:reduction} provide two possible
sets of relations between the variables $P_1$, $Q_1$, $P_2$ and $Q_2$,
namely
\begin{subequations}
\begin{equation}
\label{eq:red1}
P_2 =0, \qquad   Q_2 =-\frac{u_2^2Q_1^2-4P_1^2}{2u_2^2},
\end{equation}
or
\begin{equation}
\label{eq:red2}
P_2 =-\frac{iu_2^4Q_1^4-8iu_2^2P_1^2Q_1^2+16iP_1^4-8u_2^2P_1}{8u_2^2Q_1},
\qquad Q_2 =\frac{4iP_1}{u_2^2Q_1^2-4P_1^2}.
\end{equation}
\end{subequations}
The condition $\p_{u_1}^4H_1\bigr \rvert_{u_1=0}=0$ does not yield any
new constraints; when $k>2$ Eqs.~\eqref{eq:reduction}
involve high powers of $P_1$ and $Q_1$ and cannot be solved
analytically. Equation~\eqref{eq:red1} is compatible with the
behaviour of the coordinates $P_1$ and $Q_1$ as $u_2\to 0$ in
Corollary~\ref{co:asym_u2}, while Eq.~\eqref{eq:red2} is
not. Therefore, at $u_1=0$, the correct algebraic relations among the
canonical coordinates are~\eqref{eq:red1}. 

For convenience let us make the change of variable $r:=
\sqrt{u_2}$. Then,  Eqs.~\eqref{eq:red1} reduce the Hamilton
equations~\eqref{eq:hameq} in Theorems~\ref{thm:main1} to the system
\begin{subequations}
\label{eq:2ndorder}
\begin{align}
  \frac{d P_1}{d r} &= - \frac{\partial
    \left(H+h\right)}{\partial Q_1}=
  \frac{Q_1(r^4Q_1^2-4P_1^2)}{4r^5}+\frac{P_1}{r},\\
  \frac{ d Q_1}{d r} &= \frac{\partial
    \left(H+h\right)}{\partial P_1}=
  \frac{P_1(r^4Q_1^2-4P_1^2)}{r^5}-\frac{Q_1}{r}-\frac{2i}{r},
\end{align}
\end{subequations}
where 
\begin{equation*}
\begin{split}
H(P_1,Q_1,r)=-\frac{(r^4Q_1^2-4P_1^2)^2}{16r^5}
- \frac{2iP_1}{r}, \qquad h = - \frac{P_1Q_1}{r}.
\end{split}
\end{equation*}

The following theorem is the main result of this article.
\begin{theorem}
\label{thm:main2}
Consider the PIII equation
\begin{equation}
\label{eq:PIIIsp}
v_Y^{\prime\prime} =  \frac{(v_Y^\prime)^2}{v_Y} -
\frac{v_Y^\prime}{r} +(-1)^N\frac{v_Y^2}{r}-\frac{2}{r}+ v_Y^3,
\end{equation}
with initial conditions 
\begin{equation}
\label{eq:vasymeven}
v_Y(r)=\sqrt{\frac{\pi}{2}}
+\frac{\pi-4}{2}r+\frac{\pi^2-4\pi+4}{2^{\frac{3}{2}}
\sqrt{\pi}}r^2+O\left(r^2\right),\quad
r\rightarrow 0,
\end{equation}
when $N$ is even, and
\begin{equation}
\label{eq:vasymodd}
v_Y(r)=\frac{1}{r}+
\sqrt{\frac{2}{\pi}}-\frac{2\pi-6}{3\pi}r+O(r^2),\quad
r\rightarrow 0,
\end{equation}
when $N$ is odd. Then, the trajectory
\begin{subequations}
\label{eq:can_coor_mon}
\begin{align}
P_1& =  \frac{ir}{ 2v_Y(r)} +
\frac{ir^2v_Y^2(r)}{4} -(-1)^N \frac{ir^2 v_Y^\prime(r)}{4},  \\
Q_1& = \frac{(-1)^Ni}{rv_Y(r)} -(-1)^N
\frac{iv_Y^2(r)}{2} + \frac{iv_Y^\prime(r)}{2},
\end{align}
\end{subequations}
solves the system of ODEs~\eqref{eq:2ndorder}, where the solution
$v_{Y}(r)$ of~\eqref{eq:PIIIsp} is specified uniquely by the monodromy
data of the local parametrix at the origin of the RHP~\eqref{eq:RHP}.
Finally, there exist two closed intervals $E_1,E_2\in \mathbb{R}_+$,
such that for $u_1 \in E_1$ and $u_2 \in E_2$ the asymptotics of the
partition function~\eqref{eq:tildeB}  is given by
\begin{equation*}
  \ln \left[\frac{G_N(u_{1,N},u_{2,N})}{G_N(0,0)}\right]=
  F(u_1,u_2)\Bigl(1+O\bigl(N^{-1/2}\bigr)\Bigr), \quad N \to \infty,
\end{equation*}
where
\begin{equation}
\label{eq:exponent}
F(u_1,u_2) := - \int_{0}^{\sqrt{u_2}}H(r')dr' -\sum_{j=1}^{\infty}\frac{\p^{2j-1}
 H_1}{\p u_1^{2j-1}}\biggr\rvert_{u_1=0}\frac{u_1^{2j}}{(2j)!},
\end{equation}
and $H(r)$ is computed along the trajectory~\eqref{eq:can_coor_mon}.
\end{theorem}

\begin{remark}
Recall that PIII is the following second
order nonlinear ODE
\begin{equation}
\label{eq:PIII}
v^{\prime\prime}=\frac{\left(v^{\prime}\right)^2}{v}
-\frac{v^{\prime}}{r}+\frac{1}{r}\left(a
v^2+b\right)+c v^3+\frac{d}{v}.
\end{equation}
Hence, Eq.~\eqref{eq:PIIIsp} is a special case of~\eqref{eq:PIII} with
parameters $a=(-1)^{N}$, $b=-1$ $c=1$ and $d=0$.
\end{remark}

\begin{remark}
  Since $H_1(u_1,u_2)$ is an odd function of $u_1$, the series in the
  right-hand side of Eq.~\eqref{eq:exponent} can be expressed as
  \begin{equation}
  \label{obj}
    \sum_{j=1}^{\infty}\frac{\p^{2j-1}
      H_1}{\p u_1^{2j-1}}\biggr\rvert_{u_1=0}\frac{u_1^{2j}}{(2j)!}=     
    \int_0^{u_1} H_1(u_1',u_2)du_1', 
  \end{equation}
  as it is also apparent from Eq.~\eqref{eq:logder_1}.  The series
  expansion, however, provides a direct way of evaluating the integral in the right-hand side of~\eqref{obj},
  as the derivatives $\partial_{u_1}^{2j-1} H_1 \bigr \rvert_{u_1=0}$
  can be computed explicitly by combining the Hamilton
  equations~\eqref{eq:hameq} with the algebraic relations
  \eqref{eq:red1}. For example, we have
\begin{equation*}
\begin{split}
\frac{\p H_1}{\p
u_1}\biggr \rvert_{u_1=0}&=\frac{P_1^2}{u_2^2}-\frac{Q_1^2}{4},\\
\frac{\p^3 H_1}{\p
u_1^3}\biggr \rvert_{u_1=0}&=\left(\frac{3}{u_2^2}
-\frac{1}{8u_2^4}\right)\left(u_2^2Q_1^2-4P_1^2\right)^2+\frac{2iP_1}{u_2^2}.
\end{split}
\end{equation*}
\end{remark}

\begin{remark}
  A priori the boundary conditions~\eqref{eq:vasymeven}
  and~\eqref{eq:vasymodd} may not identify the solution of
  Eq.~\eqref{eq:PIIIsp} uniquely.  However, the PIII transcendent
  $v_Y(r)$ that enters into the formulae for the canonical
  coordinates~\eqref{eq:can_coor_mon} is specified unambiguously by
  the parametrix $\hat{\Psi}_0(\zeta)$ introduced in
  Eq.~\eqref{Yasym0}.  The connection between $v_Y(r)$ and
  $\hat{\Psi}_0(\zeta)$ is discussed in Sec.~\ref{sse:rel_PIII}.
  
\end{remark}

\section{Riemann-Hilbert Analysis}
\label{se:RH}
In this section we apply the Deift-Zhou steepest descent
analysis to the RHP (\ref{eq:RHP}).

\subsection{First Transformation of the RHP}
Define the $g$-function
\begin{equation}\label{eq:1cutg}
g(y):=\int_{\mathbb{R}}\log(y-q)d\mu(q),
\end{equation}
where $d\mu(q)$ is the equilibrium measure for the potential
$V(y)= y^2/2$, defined in Eq.~\eqref{eq:eqmeas1cut}.
It satisfies the constraints
\begin{subequations}
\label{eq:ineq}
\begin{align}
&2\int_{-\infty}^\infty\log \left \lvert y-q \right \rvert d\mu(q)-\frac{y^2}{2}= l, \quad
y\in[-2,2],\\
&2\int_{-\infty}^\infty\log \left \lvert y-q\right \rvert d\mu(q)-\frac{y^2}{2}< l, \quad
 y\in\mathbb{R}\setminus[-2,2],
\end{align}
\end{subequations}
for some constant $l$. 

Define
\begin{subequations}
\label{eq:Tg}
\begin{align}
\label{eq:Tg1}
T(y)&:=e^{\frac{-Nl\sigma_3}{2}}Y(y)e^{-Ng(y)\sigma_3}e^{\frac{Nl\sigma_3}{2}},\\
\tilde{g}(y)& :=\frac{y^2}{4}-g(y)+\frac{l}{2},\\
F(y)&:=-\frac{1}{2}\left(\frac{u_{2,N}}{2N^2y^2}-\frac{u_{1,N}}{Ny}\right),
\end{align}
\end{subequations}
where $Y(y)$ is given in Eq.~\eqref{eq:RH_sol}. Then,  $T(y)$ solves
the RHP
\begin{equation*}
\begin{split}
&1. \quad \text{$T(y)$ is analytic in $\mathbb{C}\setminus\mathbb{R}$},\\
&2. \quad T_+(y)=T_-(y)J_T(y),\quad y\in\mathbb{R},\\
&3. \quad T(y)=I+O(y^{-1}),\quad y\rightarrow\infty.
\end{split}
\end{equation*}
The jump matrix $J_T(y)$ is given by
\begin{equation*}
J_T(y):=\begin{pmatrix} 
e^{-N\left(g_+(y)-g_-(y)\right)} &
e^{-N\left(\tilde{g}_+(y)+\tilde{g}_
-(y)\right)+2F(y)}  \\
0 &e^{N(g_+(y)-g_-(y))}
\end{pmatrix},\quad y\in\mathbb{R}.
\end{equation*}
Note that since the essential singularities can be controlled on the
real line, the jump matrix $J_T(y)$ is continuous on $\mathbb{R}$.
\subsection{Opening of the Lenses}
\label{se:lens} 
We now perform a standard technique in the steepest
decent method (see, \textit{e.g.},~\cite{BI99,DKMVZ99b,DKMVZ99a}). 
\begin{figure}[h]
\centering
\includegraphics[width=5.5in]{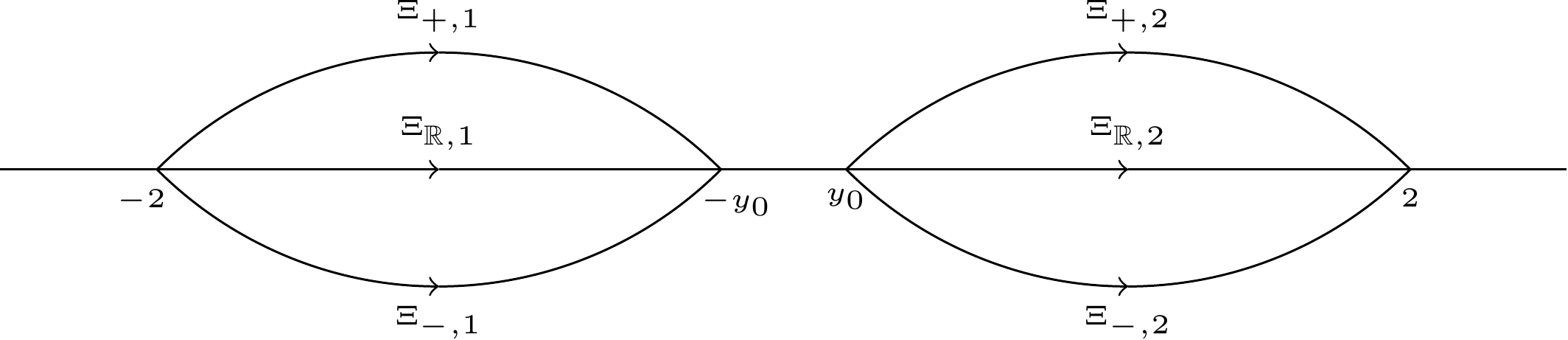}
\caption{The opening of the lenses in the interval $[-2,2]$.}
\label{fig:lens}
\end{figure}
Take a point $y_0$ sufficiently close to the
origin and define the lens contours as in
Fig.~\ref{fig:lens}.  The precise location of
$y_0$ will be specified in
Sec.~\ref{sse:conformal_map}, where we study the
local parametrix near zero.  The boundaries of the
lens regions $L_{\pm,j}$ are $\p L_{\pm,j}
=\Xi_{\pm,j} \cup \Xi_{\mathbb{R},j}$ and we denote by
$\Xi$ the union of the lens contours in
Fig.~\ref{fig:lens}. Write
\begin{equation}
\label{eq:S}
S(y):= \begin{cases}
         T(y), & \text{$y$ outside the lenses,} \\
         T(y)\begin{pmatrix} 1 & 0  \\
-e^{2N\tilde{g}(y)-2F(y)} &1
\end{pmatrix}, & \text{$y\in L_{+,j}, \quad j=1,2,$} \\
         T(y)\begin{pmatrix} 1 & 0  \\
e^{2N\tilde{g}(y)-2F(y)} &1
\end{pmatrix}, & \text{$y\in L_{-,j}, \quad j=1,2$.}
       \end{cases}
\end{equation}
 Then, $S(y)$ satisfies the RHP
\begin{equation}\label{eq:RHS}
\begin{split}
&1. \quad \text{$S(y)$ is analytic in
  $\mathbb{C}\setminus\left(\mathbb{R} \cup \Xi\right)$},\\
&2. \quad S_+(y)=S_-(y)J_S(y),\quad y\in
\mathbb{R} \cup \Xi ,\\
&3. \quad S(y)=I+O(y^{-1}),\quad y\rightarrow\infty.
\end{split}
\end{equation}
The jump matrix $J_S(y)$ is defined piecewise: 
\begin{equation}
\label{eq:Sjump1}
J_{S}(y):=\begin{cases}
\begin{pmatrix} 1 & 0  \\
e^{2N\tilde{g}(y)-2F(y)} &1
\end{pmatrix}, & y\in \Xi_{\pm,j},\quad j=1,2,\\
\begin{pmatrix}0&e^{-\frac{u_{2,N}}{2N^2y^2}+\frac{u_{1,N}}{Ny}}\\
-e^{\frac{u_{2,N}}{2N^2y^2}-\frac{u_{1,N}}{Ny}}&0\end{pmatrix},
& y\in \Xi_{\mathbb{R},1}\cup \Xi_{\mathbb{R},2},\\
\begin{pmatrix} e^{N(\tilde{g}_+(y)-\tilde{g}_-(y))} & e^{-2N\tilde{g}(y)+2F(y)}  \\
0 &e^{-N(\tilde{g}_+(y)-\tilde{g}_-(y))}\end{pmatrix}, & y\in
\mathbb{R}\setminus\left(\Xi_{\mathbb{R},1} \cup \Xi_{\mathbb{R},2}\right).
\end{cases}
\end{equation}

Equation~\eqref{eq:Sjump1} combined with the inequalities
\eqref{eq:ineq} imply that outside of some discs $D_{\pm 2}$ and $D_0$
of sufficiently small radius $\delta$ centred at the points $\pm 2$
and zero, $J_S(y)$ can be approximated by
\begin{equation}
\label{eq:jump_S}
J_S(y)=\begin{pmatrix}0&e^{-\frac{u_{2,N}}{2N^2y^2}+\frac{u_{1,N}}{Ny}}\\
-e^{\frac{u_{2,N}}{2N^2y^2}-\frac{u_{1,N}}{Ny}}&0\end{pmatrix}
=\Bigl(I+O\left(N^{-1}\right)\Bigr)
\begin{pmatrix}0&1\\
-1&0\end{pmatrix}
\end{equation}
for $y \in [-2,2]$ and the identity on the rest of the contour. This
suggests the following approximation to $S(y)$ outside of $D_0$ and
$D_{\pm 2}$:
\begin{equation}\label{eq:sinf}
\begin{split}
&1. \quad \text{$S^{\infty}(y)$ is analytic in $\mathbb{C}\setminus[-2,2]$};\\
&2. \quad S_+^{\infty}(y)=S_-^{\infty}(y)\begin{pmatrix}0&1\\
-1&0\end{pmatrix},\quad y\in[-2,2];\\
&3. \quad S^{\infty}(y)=I+O(y^{-1}),\quad y\rightarrow\infty.
\end{split}
\end{equation}
The outer parametrix that solves \eqref{eq:sinf} can be constructed as
in \cite[Chap.~7.6]{Dei99book} and~\cite{DKMVZ99b}:
\begin{equation}
\label{eq:para}
S^{\infty}(y)=\begin{pmatrix}\frac{\gamma+\gamma^{-1}}{2}
&\frac{\gamma-\gamma^{-1}}{2i}\\
-\frac{\gamma-\gamma^{-1}}{2i}&
\frac{\gamma+\gamma^{-1}}{2}\end{pmatrix},\quad\gamma=\left(\frac{y-2}{y+2}\right)^{\frac{1}{4}},
\end{equation}
where the branch cut is chosen to be on $[-2,2]$ and $\gamma\sim 1$
as $y\rightarrow\infty$.

\subsection{Local Parametrices Near the Points $\pm 2$}
The approximation of $S(y)$ by
$S^{\infty}(y)$ fails near the points $\pm 2$. Therefore, we must find exact solutions to the
RHP for $S(y)$ and match them with $S^{\infty}(y)$ as $y$ moves away
from the edges of the interval $[-2,2]$. These local parametrices can
be constructed using Airy functions as in~\cite[Chap.~7.6]{Dei99book}
and in~\cite{BI99,DKMVZ99b,DKMVZ99a} and we shall not repeat the
derivation here.

\section{Local Parametrix Near the Origin}
\label{se:local}
The essential singularity in the weight~\eqref{eq:rescaled_weight}
means that the approximation of $S(y)$ by $S^{\infty}(y)$ breaks down
near the origin too, as becomes apparent from Eq.~\eqref{eq:jump_S}.
Therefore, we must solve the RHP for $S(y)$ in a neighbourhood of the
origin exactly and match it with $S^{\infty}(y)$ away from the
singularity.

Take a small neighborhood of the origin $D_0$ containing the interval
$(-y_0,y_0)$. We
want to solve the following RHP inside $D_0$:
\begin{equation}
\label{eq:localpara0}
\begin{split}
1.\quad &\text{$S^{(0)}(y)$ is analytic in
$D_0\setminus \left(D_{0}\cap\Xi\right)$},\\
2.\quad &S_+^{(0)}(y)=S_-^{(0)}(y)J_S(y),\qquad y\in D_{0}\cap \Xi,\\
3.\quad &S^{(0)}(y)=\left(I+O(N^{-1})\right)S^{\infty}(y),\qquad
z\in\p D_{0},
\end{split}
\end{equation}
where the jump matrix is 
\begin{equation}
\label{eq:Sjump0}
J_S(y):=\begin{cases}
\begin{pmatrix} 1 & 0  \\
e^{2N\tilde{g}(y)+\frac{u_{2,N}}{2N^2y^2}-\frac{u_{1,N}}{Ny}} &1
\end{pmatrix},& y\in \Xi_{\pm,j}\cap D_0,\quad j=1,2,\\
\begin{pmatrix}0&e^{-\frac{u_{2,N}}{2N^2y^2}+\frac{u_{1,N}}{Ny}}\\
-e^{\frac{u_{2,N}}{2N^2y^2}-\frac{u_{1,N}}{Ny}}&0\end{pmatrix},
& y\in (-\delta,-y_0)\cup(y_0,\delta),\\
\begin{pmatrix} e^{2N\tilde{g}_+(y)} 
& e^{-\frac{u_{2,N}}{2N^2y^2}+\frac{u_{1,N}}{Ny}}  \\
0 &e^{-2N\tilde{g}_+(y)}\end{pmatrix}, & y\in
(-y_0,y_0).
\end{cases}
\end{equation}
\subsection{Conformal Maps Inside $D_0$}
\label{sse:conformal_map}
In order to study the model
problem~\eqref{eq:localpara0}, we 
introduce a conformal map $\zeta$ from $D_0$ into
an open neighbourhood of the origin that maps the
lens contours $\Xi$ as in Fig.~\ref{fig:conf} and
sends $\p D_0$ to $\zeta=\infty$ as $N \to \infty$.  

Consider the map
\begin{equation}
\label{eq:zeta0}
y \mapsto  -iN\bigl(\tilde{g}_+(y)-\tilde{g}_+(0)\bigr)=N\left(y+O(y^3)\right),
\quad y \in \mathbb{R} \cap D_0.
\end{equation}
Note that $\tilde{g}_+(y)\in i\mathbb{R}$ for $y\in \mathbb{R} \cap
D_0 $. For small but finite $\delta$ we denote by $\zeta$ the analytic
continuation of~\eqref{eq:zeta0} to the complex plane for all $y \in
D_0$. In the limit $N \to \infty$,  $\zeta$ maps the disc $D_0$ to the
whole complex plane.  Equation~\eqref{eq:zeta0} implies that, in order
for the images $\zeta(\pm y_0)$ to remain at a finite distance from
the essential singularity as $N \to \infty$, we need to choose $y_0 =
O(1/N)$.
\begin{figure}[h]
\centering 
\includegraphics[width=6in]{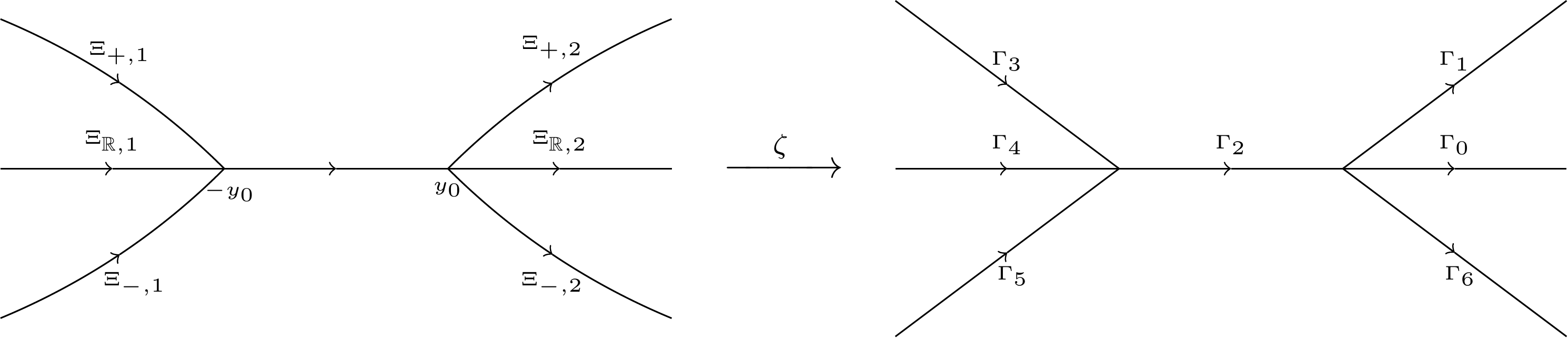}
\caption{The mapping of the contours in the disc $D_0$ under the conformal map
$\zeta$ in the limit $N \to \infty$}\label{fig:conf}
\end{figure}

Define $U_1(y)$ and $U_2(y)$ in $D_0$ as
\begin{equation}
\label{eq:etavarsigma}
U_1(y):=u_{1,N}\frac{\zeta}{Ny}=u_{1,N}\bigl(1+O(y^2)\bigr), \qquad
U_2(y):=u_{2,N}\frac{\zeta^2}{N^{2}y^2}=u_{2,N}\bigl(1+O(y^2)\bigr).
\end{equation}
For small enough $\delta$, $U_1(y)$ and $U_2(y)$ are conformal inside
$D_0$ and 
\begin{equation}\label{eq:potzeta}
2N\tilde{g}(y)+\frac{u_{2,N}}{2N^2y^2}-\frac{u_{1,N}}{Ny}=\pm
2i\zeta+\frac{U_2(y)}{2\zeta^2}-\frac{U_1(y)}{\zeta}\pm2N\tilde{g}_+(0),\quad
\pm \ipart(\zeta)>0.
\end{equation}

Let $\Gamma$ be the union of the contours in the right-hand side of
Fig.~\ref{fig:conf}.  As $N\to \infty$ let us introduce the following
RHP in the $\zeta$-plane:
\begin{equation}
\label{eq:localzeta}
\begin{split}
&1.\quad \text{$P(\zeta)$ is analytic in
$\mathbb{C}\setminus\Gamma$},\\
&2.\quad P_+(\zeta)=P_-(\zeta)J_P(\zeta),\qquad \zeta\in \Gamma,\\
&3.\quad P(\zeta)=\left(I+O(\zeta^{-1})\right)\begin{pmatrix}0&1\\
-1&0\end{pmatrix},\qquad
\ipart(\zeta)>0,\quad \zeta\rightarrow\infty, \\
&4. \quad P(\zeta)=I+O(\zeta^{-1}),\qquad \ipart(\zeta)<0,\quad
\zeta\rightarrow\infty.
\end{split}
\end{equation}
The jump matrices are 
\begin{equation}
\label{eq:JP}
J_P(\zeta) :=
\begin{cases}
\begin{pmatrix} 1 & 0  \\
e^{2i\zeta+\frac{U_2(y)}{2\zeta^2}-\frac{U_1(y)}{\zeta}} &1
\end{pmatrix}, &  \zeta\in \Gamma_1\cup\Gamma_3,\\
\begin{pmatrix} 1 & 0  \\
e^{-2i\zeta+\frac{U_2(y)}{2\zeta^2}-\frac{U_1(y)}{\zeta}}
&1\end{pmatrix}, & \zeta\in\Gamma_5\cup\Gamma_6,\\
\begin{pmatrix}0&e^{-\frac{U_2(y)}{2\zeta^2}+\frac{U_1(y)}{\zeta}}\\
-e^{\frac{U_2(y)}{2\zeta^2}-\frac{U_1(y)}{\zeta}}&0\end{pmatrix}, &
\zeta\in \Gamma_0\cup\Gamma_4,\\
\begin{pmatrix} e^{2i\zeta} & e^{-\frac{U_2(y)}{2\zeta^2}+\frac{U_1(y)}{\zeta}}  \\
0 &e^{-2i\zeta}\end{pmatrix}, & \zeta\in \Gamma_2.
\end{cases}
\end{equation}

The local parametrix that solves the RHP~\eqref{eq:localpara0} can be
obtained from the solution of \eqref{eq:localzeta} through the relations
\begin{subequations}
\label{eq:S0P}
\begin{align}
S^{(0)}(y)&=S^{\infty}(y)e^{-N\tilde{g}_+(0)\sigma_3}\begin{pmatrix}0&-1\\
1&0\end{pmatrix}P(\zeta)e^{N\tilde{g}_+(0)\sigma_3},\quad
\ipart(y)>0,\\
S^{(0)}(y)&=S^{\infty}(y)e^{N\tilde{g}_+(0)\sigma_3}
P(\zeta)e^{-N\tilde{g}_+(0)\sigma_3},\quad\ipart(y)<0,
\end{align}
\end{subequations}
with $N\tilde{g}_+(0)\in i\mathbb{R}$.

\subsection{Existence of the Local Parametrix}
Before proceeding we need to prove the following existence theorem.
\begin{theorem}
The solution to the RHP~\eqref{eq:localzeta} exists and is unique. 
\end{theorem}
\begin{proof}
Define
\begin{equation*}
\hat{P}(\zeta): =\begin{cases}
                   P(\zeta)\begin{pmatrix}0&-1\\1&0\end{pmatrix}, & 
                  \ipart(\zeta)>0,\\
                   P(\zeta), & \ipart(\zeta)<0.
                 \end{cases}
\end{equation*}
Then, $\hat{P}(\zeta)$ solves the RHP
\begin{equation}\label{eq:hatP}
\begin{split}
&1.\quad \text{$\hat{P}(\zeta)$ is analytic in
$\mathbb{C}\setminus\Gamma$},\\
&2.\quad \hat{P}_+(\zeta)=\hat{P}_-(\zeta)J_{\hat{P}}(\zeta),\quad \zeta\in \Gamma,\\
&3.\quad \hat{P}(\zeta)=I+O(\zeta^{-1}),\quad\zeta\rightarrow\infty,
\end{split}
\end{equation}
where the jump matrices $J_{\hat{P}}(\zeta)$ are given by
\begin{equation}
\label{eq:JhatP}
J_{\hat{P}}(\zeta):=
\begin{cases}
\begin{pmatrix} 1 & -e^{2i\zeta+\frac{U_2(y)}{2\zeta^2}-\frac{U_1(y)}{\zeta}}  \\
0 &1
\end{pmatrix}, &  \zeta\in \Gamma_1\cup\Gamma_3,\\
\begin{pmatrix} 1 & 0  \\
e^{-2i\zeta+\frac{U_2(y)}{2\zeta^2}-\frac{U_1(y)}{\zeta}}
&1\end{pmatrix}, &\zeta\in\Gamma_5\cup\Gamma_6,\\
\begin{pmatrix}e^{-\frac{U_2(y)}{2\zeta^2}+\frac{U_1(y)}{\zeta}}&0\\
0&e^{\frac{U_2(y)}{2\zeta^2}-\frac{U_1(y)}{\zeta}}\end{pmatrix},
& \zeta\in \Gamma_0\cup\Gamma_4,\\
\begin{pmatrix} e^{-\frac{U_2(y)}{2\zeta^2}+\frac{U_1(y)}{\zeta}} & -e^{2i\zeta}  \\
e^{-2i\zeta} &0\end{pmatrix}, & \zeta\in \Gamma_2.
\end{cases}
\end{equation}

Let $J_{\hat P,k}(\zeta)$ denote the jump matrix on $\Gamma_k$. In
order to prove the existence and uniqueness of $\hat P(\zeta)$ we
refer to the theory developed by Zhou~\cite{Zho89}, which for our
purposes is equivalent to the following statements:
\begin{enumerate}
\item the RHP~\eqref{eq:hatP} and the jump matrices~\eqref{eq:JhatP}
  do not have singularities at the self-intersection points $\pm y_0$
  of the contours;
\item the jump matrices $J_{\hat P,k}(\zeta)$ satisfy the cyclic
  conditions
\begin{equation*}
J_{\hat{P},2}^{-1}J_{\hat{P},6}J_{\hat{P},0}J_{\hat{P},1}=I\quad
\text{and} \quad 
J_{\hat{P},2}^{-1}J_{\hat{P},5}J_{\hat{P},4}J_{\hat{P},3}=I
\end{equation*}
at the points $\pm y_0$;
\item there is no non-trivial function $\hat{P}_0(\zeta)$ that
  satisfies properties 1. and 2. of the RHP~\eqref{eq:hatP} and
  behaves as $O(\zeta^{-1})$ as $\zeta\rightarrow\infty$.
\end{enumerate}

\begin{figure}[h]
\centering
\includegraphics[width=4.0in]{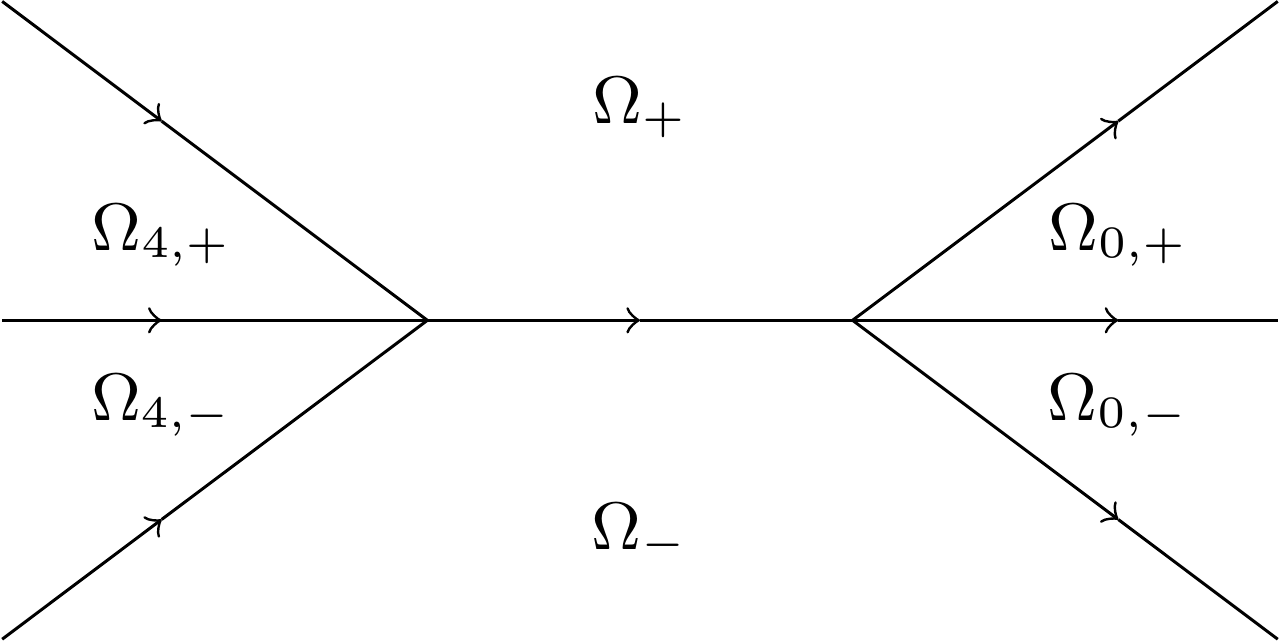}
\caption{The regions in the $\zeta$-plane}
\label{fig:zp_reg}
\end{figure}

The conditions a. and b. can be easily verified by direct
inspection. In order to prove c., suppose that such a $\hat
P_0(\zeta)$ exists and denote the region between $\Gamma_1$ ($\Gamma_3$)
and $\Gamma_0$ ($\Gamma_4$) by $\Omega_{0,+}$ ($\Omega_{4,+}$) and the
region between $\Gamma_0$ ($\Gamma_4$) and $\Gamma_6$ ($\Gamma_5$) by
$\Omega_{0,-}$ ($\Omega_{4,-}$) (see Fig.~\ref{fig:zp_reg}).  Then, define 
\begin{equation*}
X(\zeta) := 
\begin{cases}
\hat{P}_0(\zeta),&
\zeta\in\mathbb{C}\setminus\left(\Omega_{+} \cup \Omega_{-}\right),\\
\hat{P}_0(\zeta)\begin{pmatrix} 1 &
  -e^{2i\zeta+\frac{U_2(y)}{2\zeta^2}
-\frac{U_1(y)}{\zeta}}  \\
0 &1 \end{pmatrix}, & \zeta\in\Omega_{0,+}\cup\Omega_{4,+},\\
\hat{P}_0(\zeta)\begin{pmatrix} 1 & 0  \\
-e^{-2i\zeta+\frac{U_2(y)}{2\zeta^2}-\frac{U_1(y)}{\zeta}}
&1\end{pmatrix},& \zeta\in\Omega_{0,-}\cup\Omega_{4,-}.
\end{cases}
\end{equation*}
The matrix $X(\zeta)$ has the following properties~\cite{DKMVZ99b}:
\begin{equation}
\label{eq:Xi}
\begin{split}
&1.\quad \text{$X(\zeta)$ is analytic in
$\mathbb{C}\setminus\mathbb{R}$ and continuous down to $\mathbb{R}$},\\
&2.\quad X_+(\zeta)=X_-(\zeta)\begin{pmatrix} e^{-\frac{U_2(y)}{2\zeta^2}+\frac{U_1(y)}{\zeta}} & -e^{2i\zeta}  \\
e^{-2i\zeta} &0\end{pmatrix},\quad \zeta\in \mathbb{R},\\
&3.\quad X(\zeta)=O(\zeta^{-1}),\quad\zeta\rightarrow\infty.
\end{split}
\end{equation}

Let $\mathcal{C}$ be a close contour in the upper-half plane
consisting of the real line and a large semicircle
$\mathcal{S}$. Consider the integral
\begin{equation}
\label{eq:cauchy}
  \oint_{\mathcal{C}} X(\zeta) \overline{X(\overline{\zeta})}d\zeta = 
\int_{\mathcal{S}} X(\zeta)\overline{X(\overline{\zeta})} d\zeta
+ \int_{\mathbb{R}} X_{+}(\zeta)\overline{X_{-} (\overline{\zeta})}d\zeta. 
\end{equation}
Since $X(\zeta)\overline{X(\overline{\zeta})}$ is analytic in
$\mathbb{C}_+$ and of order $O(\zeta^{-2})$ as
$\zeta\rightarrow\infty$, we have 
\[
\int_{\mathcal{S}} X(\zeta)\overline{X(\overline{\zeta})} d\zeta
\rightarrow 0. 
\]
The second condition in Eq.~\eqref{eq:Xi} gives
\begin{equation}
\label{intcauchy}
\int_{\mathbb{R}}X_+(\zeta)\overline{X_-(\overline{\zeta})}d\zeta=\int_{\mathbb{R}}X_-(\zeta)\begin{pmatrix} e^{-\frac{U_2(y)}{2\zeta^2}+\frac{U_1(y)}{\zeta}} & -e^{2i\zeta}  \\
e^{-2i\zeta}
&0\end{pmatrix}\overline{X_-(\overline{\zeta})}d\zeta.
\end{equation} 
The essential singularity in the jump matrix does not affect this
integral, since the exponential 
\[
\exp\left(
  - \frac{U_2(y)}{2\zeta^2}+ \frac{U_1(y)}{\zeta}\right)
\] 
can be uniformly bounded in a small interval on the real line
containing the origin; therefore, Cauchy's theorem implies that
integral~\eqref{intcauchy} is zero.
By adding this matrix  to its Hermitian conjugate, we see that
\begin{equation*}
\int_{\mathbb{R}}X_-(\zeta)\begin{pmatrix} 2e^{-\frac{U_2(y)}{2\zeta^2}+\frac{U_1(y)}{\zeta}} & 0  \\
0 &0\end{pmatrix}\overline{X_-(\overline{\zeta})}d\zeta=0.
\end{equation*}
This implies that the first column of $X_-(\zeta)$ is identically
zero. From the jump conditions in (\ref{eq:Xi}), it follows that the
second column of $X_+(\zeta)$ is identically zero too. 

Now, let the first column of $X$ be $X_{1}$ and the second column of
$X$ be $X_{2}$ and define the vector 
\begin{equation*}
\begin{split}
f(\zeta):=\left\{
           \begin{array}{ll}
             X_{2}(\zeta), & \hbox{$\ipart(\zeta)<0$,} \\
             X_{1}e^{2i\zeta}, & \hbox{$\ipart(\zeta)>0$.}
           \end{array}
         \right.
\end{split}
\end{equation*}
From Eq.~\eqref{eq:Xi} we see that $f(\zeta)$ is analytic in the
whole complex plane and behaves as $\zeta^{-1}$ as
$\zeta\rightarrow\infty$. Hence, by Liouville's theorem, we have
$f(\zeta)=0$. This shows that there is no non-trivial function
$\hat{P}_0$ that satisfies 1. and 2. in~ \eqref{eq:hatP} and such that
$\hat{P}_0=O(\zeta^{-1})$ as $\zeta\rightarrow\infty$. Therefore,
the RHP~\eqref{eq:localzeta} is uniquely solvable.
\end{proof}

\subsection{Painlev\'e Type Differential Equations}
\label{ss:ptode}
We will now transform the RHP~\eqref{eq:localzeta} into one with
constant jumps but with essential singularities at $\zeta=0$ and
$\zeta=\infty$. Then, theory of isomonodromy deformations developed by
Jimbo, Miwa, and Ueno~\cite{JMU81} can be applied to derive differential
equations that give the solution to \eqref{eq:localzeta}.

Let us deform the jump contours of \eqref{eq:localzeta} as in
Fig.~\ref{fig:local2} and let $\Gamma_{\pm}$ be the semicircles in the uppe/lower half planes.  Note
that $\partial \Omega_{2,\pm} = \Gamma_{\pm} \cup \Gamma_2$.
Then, define the function
\begin{equation}
\label{eq:phi}
\Phi(\zeta) :=
\begin{cases}
P(\zeta)\begin{pmatrix}0&-e^{-\frac{U_2(y)}{2\zeta^2}+\frac{U_1(y)}{\zeta}}\\
e^{\frac{U_2(y)}{2\zeta^2}-\frac{U_1(y)}{\zeta}}&0\end{pmatrix},
 & \zeta\in \Omega_{0,+}\cup \Omega_{4,+}\\
P(\zeta)\begin{pmatrix}0&-e^{-\frac{U_2(y)}{2\zeta^2}+\frac{U_1(y)}{\zeta}}\\
e^{\frac{U_2(y)}{2\zeta^2}-\frac{U_1(y)}{\zeta}}&e^{2i\zeta}\end{pmatrix}, &
\zeta\in \Omega_1,\\
P(\zeta)e^{-2i\zeta\sigma_3}, & \zeta\in  \Omega_{2,+},\\
P(\zeta), & \zeta\in  \Omega_{4,-}\cup \Omega_{2,-} \cup \Omega_{0,-},\\
P(\zeta)\begin{pmatrix} 1 & 0  \\
e^{-2i\zeta+\frac{U_2(y)}{2\zeta^2}-\frac{U_1(y)}{\zeta}}
&1\end{pmatrix}, &\zeta\in \Omega_3.
\end{cases}
\end{equation}

\begin{figure}[h]
\centering
\includegraphics[width=4.0in]{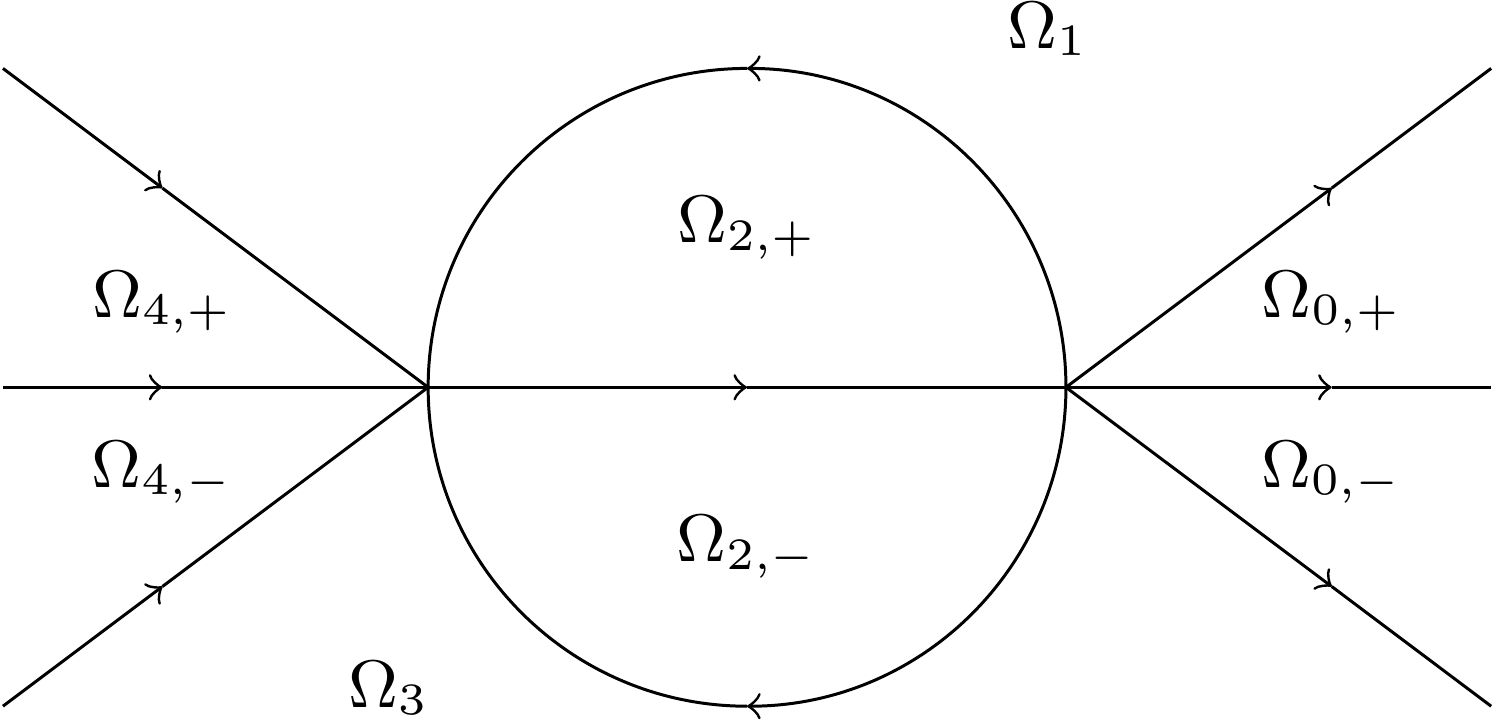}
\caption{The contour deformation for the RHP~\eqref{eq:RHPphi}}
\label{fig:local2}
\end{figure}

By direct inspection we can see that the matrix
$\Phi(\zeta)$ satisfies the RHP
\begin{equation}\label{eq:RHPphi}
\begin{split}
&1.\quad \text{$\Phi(\zeta)$ is analytic in
$\mathbb{C}\setminus\left(\Gamma_+\cup\Gamma_-\cup\Gamma_2\right)$},\\
&2.\quad \Phi_+(\zeta)=\Phi_-(\zeta)J_{\Phi}(\zeta),\quad \zeta\in \Gamma_+\cup\Gamma_-\cup\Gamma_2,\\
&3.\quad \Phi(\zeta)=I+O(\zeta^{-1}),\quad \zeta\rightarrow\infty,
\end{split}
\end{equation}
where the jump matrices $J_{\Phi}(\zeta)$  are given by
\begin{equation}
\label{eq:JPhi}
J_{\Phi}(\zeta):=
\begin{cases}
\begin{pmatrix} 1 & 0  \\
e^{-2i\zeta+\frac{U_2(y)}{2\zeta^2}-\frac{U_1(y)}{\zeta}}
&1\end{pmatrix}, & \zeta\in\Gamma_-,\\
\begin{pmatrix}1&e^{2i\zeta-\frac{U_2(y)}{2\zeta^2}+\frac{U_1(y)}{\zeta}}\\
-e^{-2i\zeta+\frac{U_2(y)}{2\zeta^2}-\frac{U_1(y)}{\zeta}}&0\end{pmatrix},
& \zeta\in \Gamma_+,\\
\begin{pmatrix} 1 & e^{2i\zeta-\frac{U_2(y)}{2\zeta^2}+\frac{U_1(y)}{\zeta}}  \\
0 &1\end{pmatrix},& \zeta\in \Gamma_2.
\end{cases}
\end{equation}

The function 
\begin{equation}
\label{eq:yzeta}
\Psi(\zeta):=\Phi(\zeta)e^{\left(i\zeta-\frac{U_2(y)}{4\zeta^2}
+\frac{U_1(y)}{2\zeta}\right)\sigma_3}
\end{equation}
satisfies a RHP with constant jumps on
$\Gamma_+\cup\Gamma_-\cup\Gamma_2$:
\begin{equation}\label{eq:JY}
J_{\Psi}(\zeta) :=
\begin{cases}
\begin{pmatrix} 1 & 0  \\
1 &1\end{pmatrix}, &  \zeta\in\Gamma_-,\\
\begin{pmatrix}1&1\\
-1&0\end{pmatrix},&  \zeta\in \Gamma_+,\\
\begin{pmatrix} 1 & 1  \\
0 &1\end{pmatrix}, &  \zeta\in \Gamma_2.
\end{cases}
\end{equation}
Furthermore, it has essential singularities at the points $\zeta=0$ and
$\zeta=\infty$:
\begin{subequations}
\label{eq:Yasym}
\begin{align}
\label{Yasym0}
\Psi(\zeta)&=\hat{\Psi}_0(\zeta)e^{\left(-\frac{U_2(y)}{4\zeta^2}
+\frac{U_1(y)}{2\zeta}\right)\sigma_3},\quad
\zeta\rightarrow 0,\\
\label{YasymInf}
\Psi(\zeta)&=\hat{\Psi}_{\infty}(\zeta)e^{i\zeta\sigma_3},\quad
\zeta\rightarrow \infty,
\end{align}
\end{subequations}
where $\hat{\Psi}_0(\zeta)$ and $\hat{\Psi}_{\infty}(\zeta)$ have
the asymptotic expansions
\begin{subequations}
\label{eq:psihat}
\begin{align}
\label{eq:psihat0}
\hat{\Psi}_0(\zeta)& =\Psi_0^{(0)}+\Psi_1^{(0)}\zeta+\dotsb, \quad
\zeta \to 0,\\
\hat{\Psi}_{\infty}(\zeta)&=I+\Psi_1^{(\infty)}\zeta^{-1}+\dotsb,
\quad \zeta \to \infty.
\end{align}
\end{subequations}
Note that the determinants of  $\hat{\Psi}_0$ and $\hat{\Psi}_{\infty}$
are both equal to one.

From \eqref{eq:etavarsigma} it follows that $U_j(y)$, $j=1,2$,
behaves as 
\begin{equation*}
U_j(y)=u_{j,N}\left(1+O\left(\frac{\zeta^2}{N^2}\right)\right)=
u_j\left(1+O\left(\frac{\zeta^2}{N^2}\right)\right),\quad N \to \infty,
\end{equation*}
where $u_{j,N}$ and $u_j$ were defined in \eqref{eq:u1u2}
and~\eqref{eq:limu1u2}, respectively. Let $\Phi_{\mathrm{c}}(\zeta)$
and $\Psi_{\mathrm{c}}(\zeta)$ denote the functions \eqref{eq:phi}
and~\eqref{eq:yzeta} with $U_j(y)$ replaced by
$u_j$. Since the jump matrices $J_{\Phi}(\zeta)$ are all bounded, the
matrix $\Phi(\zeta)\Phi_{\mathrm{c}}^{-1}(\zeta)$ will have jump
discontinuities on the contours $\Gamma_2$ and $\Gamma_{\pm}$ of order
$I+O(N^{-2})$ and behaves as $I+O(\zeta^{-1})$ as
$\zeta\rightarrow\infty$. By the general theory of the RHP (see
\textit{e.g.}, \cite[Sec.~7]{DKMVZ99b} and also 
Eq.~\eqref{eq:Rest}),
we have
\begin{equation}
\label{eq:gen_th}
\Phi(\zeta)\Phi_{\mathrm{c}}^{-1}(\zeta)
=I+O\left(\frac{1}{N^2(\lvert \zeta \rvert+1)}\right).
\end{equation}
Hence, the matrix $\Phi_{\mathrm{c}}(\zeta)$ approximates
$\Phi(\zeta)$ up to $O(N^{-2})$ as $N\rightarrow\infty$. In what
follows we shall replace the functions $U_j(y)$ by the constants $u_j$
and identify $\Phi_{\mathrm c}(\zeta)$ and $\Psi_{\mathrm c}(\zeta)$ with
$\Phi(\zeta)$ and $\Psi(\zeta)$.

Let us consider the RHP with jumps~\eqref{eq:JY} and
asymptotic behaviour~\eqref{eq:Yasym}. Since the matrix $\Psi(\zeta)$
has constant jumps on the complex plane, the matrices
\begin{subequations}
\label{eq:lax}
\begin{align}
\label{eq:lax_A}
A(\zeta)& :=\frac{\p \Psi(\zeta)}{\p\zeta}\Psi^{-1}(\zeta) \\
B_j(\zeta)& :=\frac{\p \Psi(\zeta)}{\p
  u_j}\Psi^{-1}(\zeta),
\quad j=1,2,
\end{align}
\end{subequations}
are rational functions with poles at $\zeta=0$ and $\zeta=\infty$
only. From \eqref{eq:Yasym} we see that their explicit expressions are
\begin{subequations}
\label{eq:AB}
\begin{align}
\label{eq:A}
A(\zeta)&=\left(\frac{1}{2}\left(\frac{u_2}{\zeta^3}-\frac{u_1}{\zeta^2}\right)\hat{\Psi}_0(\zeta)\sigma_3\hat{\Psi}_0^{-1}(\zeta)\right)_{pp}
+i\sigma_3=\sum_{j=1}^3\frac{A_j}{\zeta^j}+i\sigma_3,\\
\label{eq:B1}
B_1(\zeta)&=\left(\frac{1}{2\zeta}\hat{\Psi}_0(\zeta)\sigma_3\hat{\Psi}_0^{-1}(\zeta)\right)_{pp}
=\frac{A_3}{u_2\zeta},\\
\label{eq:B2}
B_2(\zeta)&=-\left(\frac{1}{4\zeta^2}\hat{\Psi}_0(\zeta)\sigma_3\hat{\Psi}_0^{-1}(\zeta)\right)_{pp}
=-\frac{A_3}{2u_2\zeta^2}-\frac{1}{2u_2\zeta}\left(A_2+\frac{u_1}{u_2}A_3\right),
\end{align}
\end{subequations}
where $(\cdot)_{pp}$ denotes the singular part at $\zeta=0$.

We can write the matrix $A$ as follows:
\begin{equation}\label{eq:Azeta}
\begin{split}
A(\zeta)&=\begin{pmatrix}\sum_{j=2}^3a_j\zeta^{-j}&\sum_{j=2}^3b_j\zeta^{-j}\\
\sum_{j=2}^3c_j\zeta^{-j}&-\sum_{j=2}^3a_j\zeta^{-j}\end{pmatrix}+
\begin{pmatrix}0&b_1\\
               c_1&0\end{pmatrix}\zeta^{-1}+i\sigma_3.
\end{split}
\end{equation}
The coefficient of $\zeta^{-1}$ can be extracted by
expanding~\eqref{eq:lax} near $\zeta=\infty$,
\begin{equation*}
\begin{split}
\frac{\p
\Psi(\zeta)}{\p\zeta}\Psi^{-1}(\zeta)&=\frac{\p\hat{\Psi}_{\infty}(\zeta)}%
{\p\zeta}\hat{\Psi}_{\infty}^{-1}(\zeta)+
i\hat{\Psi}_{\infty}(\zeta)\sigma_3\hat{\Psi}_{\infty}^{-1}(\zeta),\\
&=i\sigma_3+i\left[\Psi^{(\infty)}_1,\sigma_3\right]\zeta^{-1}+O(\zeta^{-2}).
\end{split}
\end{equation*}
From (\ref{eq:AB}) we see that the coefficients of $\zeta^{-6}$,
$\zeta^{-5}$ and $\zeta^{-4}$ in the determinant of $A(\zeta)$
coincide with those in
\[
\det \left(\frac{1}{2}\left(\frac{u_2}{\zeta^3}-\frac{u_1}{\zeta^2}\right)\sigma_3\right).
\]
This implies
\begin{subequations}
\label{eq:coefrel}
\begin{gather}
a_3^2+b_3c_3=\frac{1}{4}u_2^2,\\
2a_2a_3+b_2c_3+c_2b_3=-\frac{1}{2}u_1u_2,\\
b_1c_3+c_1b_3+b_2c_2+a_2^2=\frac{1}{4}u_1^2.
\end{gather}
\end{subequations}
Finally, the compatibility of the linear differential systems~\eqref{eq:lax} implies
\begin{subequations}
\label{eq:compat}
\begin{gather}
\label{eq:compat1}
\p_jA(\zeta)-\p_{\zeta}B_j(\zeta)+[A(\zeta),B_j(\zeta)]=0,\quad j=1,2,\\
\p_1B_2(\zeta)-\p_2B_1(\zeta)+[B_2(\zeta),B_1(\zeta)]=0,
\end{gather}
\end{subequations}
where $\p_j$ is the derivative with respect to $u_j$ and
$\p_{\zeta}$ the derivative with respect to $\zeta$. 

\subsection{Hamilton Equations}

The compatibility conditions~\eqref{eq:compat} constitute a system of
PDEs in the variables $u_1$ and $u_2$. The theory of isomonodromic
deformations of linear ODEs~\cite{JMU81} states that such PDEs are
equivalent to an integrable Hamiltonian system of ODEs, which can be
studied much more easily. The main idea is to express the terms
involving the commutators in Eqs.~\eqref{eq:compat1} as Hamiltonian
flows on a symplectic manifold.  

Consider the Laurent polynomial\footnote{For the sake of simplicity, we use the same notation
for the Lax pair~\eqref{eq:lax} and for the generic matrices~\eqref{eq:mapsA} and~\eqref{eq:Bmat}.  It will be evident from the context which ones we are referring to.}
\begin{equation}
\label{eq:mapsA}
A(\zeta) :=\sum_{j=1}^3A_j\zeta^{-j}+i\sigma_3,
\end{equation}
as well as
\begin{equation}
  \label{eq:Bmat}
  B(\zeta) := B_1\zeta^{-1} + B_2 \zeta^{-2}.
\end{equation}
In addition, require that the coefficients of the 
expansion~\eqref{eq:mapsA} should be of the form
\begin{equation}
\label{eq:lau_coef}
  A_1 = \begin{pmatrix} 0 & b_1 \\ c_1 &0 \end{pmatrix}, \qquad A_j
  = \begin{pmatrix} a_j & b_j \\ c_j & -a_j \end{pmatrix}, \qquad j=2,3.
\end{equation}
Note that the $A_j$'s are traceless; therefore, they belong to the Lie
algebra $\mathfrak{sl}_2(\mathbb{C})$.  We shall see that the
deformations
\begin{equation}
  \label{eq:flow}
  \dot A = \left [A,B\right]
\end{equation}
define a Hamiltonian flow.

The ODEs~\eqref{eq:flow} do not identify the pair of matrices $A$
and $B$ uniquely.  If we conjugate both $A$ and $B$ by a constant
matrix $G\in \mathfrak{sl}_2(\mathbb{C})$, which is either diagonal or a scalar multiple of $A_{1}$, then the form of the
coefficients~\eqref{eq:lau_coef} and the trajectories of the
flow~\eqref{eq:flow} remain invariant.  Thus, rather than the set of
matrices~\eqref{eq:mapsA}, we look at their conjugacy classes.  This
imposes two conditions on the matrix entries of the coefficients $A_j$
and reduces the independent parameters from eight to six.  In the
particular example~\eqref{eq:A}, these extra relations
are equivalent to Eqs.~\eqref{eq:coefrel}. For simplicity, and without
loss of generality, we shall not distinguish between the set of
matrices defined in~\eqref{eq:mapsA} and their equivalent classes.

To begin with we need to equip this six-dimensional manifold with a
Poisson structure.  Let the loop algebra $\mathfrak{g}$ be the set of
smooth maps $f:S^1\rightarrow\mathfrak{sl}_2(\mathbb{C})$, where $S^1$
is the unit circle. Then, split $\mathfrak{g}$ into the subalgebras
$\mathfrak{g}_+$ and $\mathfrak{g}_-$, where $\mathfrak{g}_+$
($\mathfrak{g}_-$) is the set of maps that admits holomorphic
extension to the inside (outside) of the unit circle;  the
maps in $\mathfrak{g}_-$ should also vanish at infinity. Using the pairing
\[
<X(\zeta),Y(\zeta)>=\res_{\zeta=0}\tr\left(XY\right),
\]
we can identify $\mathfrak{g}_-$ with the dual of $\mathfrak{g}_+$.
This structure allows us to define  Poisson brackets on
$\mathfrak{g}_-$ (see, \textit{e.g.}, \cite{FT87})
\begin{equation}
\label{eq:poisson}
\left\{f,g\right\}(A):=\left<A,[df(A),dg(A)]\right>,
\end{equation}
where $f$ and $g$ are functions on
$\mathfrak{g}_-=\mathfrak{g}_+^{\ast}$. 

Using the method of the moment map~\cite{AHP88,AHH93,Har94,HR95}, this
Poisson structure can be restricted to the space of matrices of the
form~\eqref{eq:mapsA}. The Poisson brackets between the matrix entries
$a_j$, $b_j$ and $c_j$ are degenerate and are given by
\begin{subequations}
\label{eq:bracket}
\begin{alignat}{3}
\{a_j,b_k\}&=-b_{j+k-1},&\qquad &  \{a_j,c_k\} =c_{j+k-1}, \qquad & j+k\leq
4,\\
\{b_j,c_k\}&=2a_{j+k-1},&& j+k-1\leq 4. &&
\end{alignat}
\end{subequations}
All the other brackets vanish. Equations~\eqref{eq:bracket} define a
Poisson manifold.

\begin{remark} 
  Although, strictly speaking, the matrices~\eqref{eq:mapsA} do not
  belong to the algebra $\mathfrak{g}_-$, because of the term
  $i\sigma_3$, one can think of the space of matrices~\eqref{eq:mapsA}
  as being parameterised by the entries $a_j$, $b_j$ and $c_j$ and use
  Eqs.~\eqref{eq:bracket} to define the Poisson brackets. In fact, the
  original construction in \cite{AHP88,AHH93} applies to much more
  general matrices.  The interested reader is invited to consult these
  references.
\end{remark}

The symplectic leaves in this Poisson manifold are the co-adjoint
orbits of the matrix $A$, specified by $\det(A)=\mathrm{const.}$ A
set of canonical coordinates on these symplectic leaves can be found
as follows~\cite{AHH93,VN84}. Let $\psi(\zeta)$ be an eigenvector of
$A(\zeta)$ with eigenvalue $\lambda(\zeta)$. Then, the poles $q_j$ of
$\psi(\zeta)$ and the values $p_j=\lambda(q_j)$ of the eigenvalue
$\lambda(\zeta)$ at $q_j$ satisfy the relations
\begin{equation*}
\{p_j,q_k\}=\delta_{jk},\quad \{p_j,p_k\}=\{q_j,q_k\}=0.
\end{equation*}
For example, let $\psi(\zeta)$ be an eigenvector of $A(\zeta)$
normalized by
\begin{equation*}
(1,0)\cdot\psi(\zeta)=1.
\end{equation*}
The poles $q_j$ of $\psi(\zeta)$ are the zeros of the
polynomial
\begin{equation*}
b_1\zeta^2+b_2\zeta+b_3=0,
\end{equation*}
and the eigenvalues at these points are
\begin{equation*}
p_j=-\sum_{k=2}^3a_kq_j^{-k}-i.
\end{equation*}
Therefore, the symplectic form on these symplectic leaves are given
by
\begin{equation*}
\begin{split}
-\sum_{j=1}^2\sum_{k=2}^3d a_kq_j^{-k}\wedge d
q_j&=\sum_{j=1}^2\left(\frac{1}{2}d a_3\wedge d q_j^{-2}+d
a_2\wedge d q_j^{-1}\right)\\
&=d\left(-a_3\frac{b_2}{b_3}+a_2\right)\wedge
d\left(-\frac{b_2}{b_3}\right)+d a_3\wedge
d\left(-\frac{b_1}{b_3}\right).
\end{split}
\end{equation*}
Hence, 
\begin{equation}\label{eq:canon}
P_1=-a_3\frac{b_2}{b_3}+a_2,\quad Q_1=-\frac{b_2}{b_3},\quad
P_2=a_3,\quad Q_2=-\frac{b_1}{b_3},
\end{equation}
are a set of canonical coordinates on the symplectic leaves and the parameters
$P_1$, $Q_1$, $P_2$, $Q_2$, $v_1$, $v_2$, where $v_1$ and $v_2$ are functions of the matrix elements $a_j$, $b_j$ and $c_j$, constitute a set of coordinates on the
Poisson manifold.


The equivalence between the deformations~\eqref{eq:flow} and
Hamiltonian systems is provided by the following proposition, which is a
particular case of the flows studied in~\cite{AHP88,AHH90,AHH93}.
\begin{proposition}
\label{pro:ham}
Let $A(\zeta)$ belong to the set of matrices defined
in~\eqref{eq:mapsA} and $B=(\zeta^{k}A)_-$, where $(\cdot)_-$ denotes
the projection onto $\mathfrak{g}_-$ and $k=1,2$.  Then, the
ODEs~\eqref{eq:flow} are Hamiltonian with respect to the Poisson brackets
\eqref{eq:poisson}. The Hamiltonian function is
\begin{equation*}
H_B=\frac{1}{2}\res_{\zeta=0}\tr\left(A^2\zeta^{k}\right).
\end{equation*}
\end{proposition}

The matrices $A(\zeta)$, $B_1(\zeta)$ and $B_2(\zeta)$ introduced in
Eqs.~\eqref{eq:AB} belong to class of matrices in the hypothesis of
Proposition~\ref{pro:ham}.  In particular, the commutators in
Eq.~\eqref{eq:compat1} define two flows 
\begin{equation}
\label{eq:flows_LP}
\dot{A}=[B_j,A], \quad j=1,2,
\end{equation}
whose Hamiltonians are
\begin{equation}
\label{eq:flowham}
H_1=\frac{1}{2u_2}\res_{\zeta=0}\tr\left(A^2\zeta^2\right) \quad
\text{and} \quad 
H_2=-\frac{1}{2u_2}\res_{\zeta=0}\tr\left(\frac{1}{2}A^2\zeta\right)
-\frac{u_1}{2u_2}H_1.
\end{equation}

Equations~\eqref{eq:canon} and \eqref{eq:coefrel} allow us to express the
matrix entries $a_j$, $b_j$ and $c_j$ in terms of the canonical
variables $P_1$, $Q_1$, $P_2$, $Q_2$ and the parameters $u_1$, $u_2$,
which together provide a set of coordinates for the Poisson manifold.
In turn, the Hamiltonians $H_1$ and $H_2$ can be written in terms of
these coordinates.  The outcomes of this calculation are
formulae~\eqref{eq:H1} and~\eqref{eq:H2}.

The terms $\p_{\zeta}B_j$ in Eq.~\eqref{eq:compat1} imply that
deformations~\eqref{eq:flows_LP}, and hence the Hamiltonian flows
generated by $H_1$ and $H_2$, are not sufficient to describe the
compatibility conditions~\eqref{eq:compat1}. However, using a technique
developed by Mazzocco and Mo~\cite{MM07}, we can 
show that Eq.~\eqref{eq:compat1} is
equivalent to a time-dependent Hamiltonian system.  

We are now in a position to prove the following.
\begin{theorem}
\label{thm:hamuk} 
Let $H_j$ and $h_j$, $j=1,2$ be the Hamiltonians given in
Theorem~{\rm \ref{thm:main1}}, Eqs.~\eqref{eq:hamcan}.
The deformations of the matrix $A(\zeta)$ defined by
\begin{equation}
\label{eq:def_repeat}
\p_jA(\zeta)-\p_{\zeta}B_j(\zeta)+[A(\zeta),B_j(\zeta)]=0,\quad
j=1,2,
\end{equation}
are equivalent to the time-dependent Hamiltonian equations
\begin{equation}
\label{eq:heq_sh}
\p_jP_k=-\frac{\p (H_j+h_j)}{\p Q_k},\quad \p_jQ_k=\frac{\p
(H_j+h_j)}{\p P_k},\quad j,k=1,2.
\end{equation}
\end{theorem}
 \begin{proof}
   Let $\nu_1,\dotsc,\nu_4,u_1,u_2$ be a new set of coordinates on the
   Poisson manifold and denote by $\p_1^{\nu}$ and $\partial^\nu_2$
   the partial derivatives that keep the variables $\nu_1,\dotsc,\nu_4$
   fixed. We can choose these coordinates so that they are related to
   the matrix entries $a_j$, $b_j$ and $c_j$ by the equations
\begin{subequations}
\label{eq:explicitder}
\begin{alignat}{3}
\p_{1}^{\nu}a_3&=\p_{1}^{\nu}b_3=\p_{1}^{\nu}c_3=0, &\quad&
\p_1^{\nu}a_2=-\frac{a_3}{u_2}, \qquad
\p_1^{\nu}b_2=-\frac{b_3}{u_2},\\
\p_{1}^{\nu}a_1&=\p_{1}^{\nu}b_1=\p_{1}^{\nu}c_1=0,
& &\p_1^{\nu}c_2=-\frac{c_3}{u_2}, \qquad\p_2^{\nu}c_3=\frac{c_3}{u_2}, \\
\p_{2}^{\nu}a_1& =\p_{2}^{\nu}b_1=\p_{1}^{\nu}c_1=0, & & 
\p_2^{\nu}a_3 =\frac{a_3}{u_2}, \qquad \;\: \: \p_2^{\nu}b_3=\frac{b_3}{u_2},\\
\p_2^{\nu}a_2&=\frac{1}{2u_2}\left(a_2+\frac{u_1}{u_2}a_3\right), &&
\p_2^{\nu}b_2=\frac{1}{2u_2}\left(b_2+\frac{u_1}{u_2}b_3\right), \\
\p_2^{\nu}c_2&=\frac{1}{2u_2}\left(c_2+\frac{u_1}{u_2}c_3\right). && 
\end{alignat}
\end{subequations}
Using Eqs.~\eqref{eq:coefrel}, we can check that this definition is
compatible with $\p_j^{\nu}u_i=\delta_{ij}$. 

Note that Eqs.~\eqref{eq:explicitder} imply 
\begin{equation}
\label{eq:Ader}
\p_1^{\nu}A=\p_{\zeta}B_1 \quad \text{and} \quad \p_2^{\nu}A=\p_{\zeta}B_2.
\end{equation}
Therefore, we can write Eq.~\eqref{eq:def_repeat} as time-dependent
Hamiltonian equations:
\begin{equation}
\label{eq:timedep}
\p_jA=\{H_j,A\}+\p_j^{\nu}A.
\end{equation}

In order the find the evolution equations in the form
\[
\p_jP_k=-\frac{\p \mathcal{H}_j}{\p Q_k}, \qquad \p_jQ_k=\frac{\p
  \mathcal{H}_j}{\p P_k},
\]
for suitable Hamiltonians $\mathcal{H}_j$, we must replace
$\p_j^{\nu}$ by the partial differentiation $\p_j^{\mathrm{can}}$ that
keeps the canonical coordinates $P_1$, $Q_1$, $P_2$, $Q_2$ fixed.

We have
\begin{equation}
\label{eq:derrel}
\p_{j}^{\nu}=\p_{j}^{\mathrm{can}}+\sum_{k=1}^2\left(\p_{j}^{\nu}P_k
\p_{P_k}+\p_j^{\nu}Q_k\p_{Q_k}\right).
\end{equation}
From \eqref{eq:explicitder} and \eqref{eq:canon}, we can compute the
derivatives $\p_j^{\nu}P_k$ and $\p_j^{\nu}Q_k$:
\begin{subequations}
\label{eq:derPQ}
\begin{align*}
\p_1^{\nu}P_1&=\p_1^{\nu}P_2=\p_1^{\nu}Q_2=0,\quad
\p_1^{\nu}Q_1=\frac{1}{u_2},\\
\p_2^{\nu}P_2&=\frac{P_2}{u_2},\quad
\p_2^{\nu}Q_2=-\frac{Q_2}{u_2},\quad
\p_2^{\nu}P_1=\frac{1}{2u_2}P_1,\\
\p_2^{\nu}Q_1& =-\frac{1}{2u_2}Q_1-\frac{u_1}{2u_2^2}.
\end{align*}
\end{subequations}
Combining these formulae with \eqref{eq:derrel} we arrive at 
\begin{equation*}
\p_{j}^{\nu}A =\p_{j}^{\mathrm{can}}A+\{h_j,A\},
\end{equation*}
where
\begin{equation*}
h_1 =\frac{P_1}{u_2},\quad
h_2=-\frac{P_2Q_2}{u_2}-\frac{P_1Q_1}{2u_2}-\frac{u_1}{2u_2^2}P_1
\end{equation*}
and $\mathcal{H}_j= H_j + h_j$. Finally, Eq.~\eqref{eq:timedep} becomes
\begin{equation*}
\p_jA=\{\mathcal{H}_j,A\}+\p_j^{\mathrm{can}}A.
\end{equation*}
\end{proof}

\begin{remark}
  The Hamilton equations~\eqref{eq:heq_sh} are ODEs in the variables
  $u_1$ and $u_2$. A special solution of these ODEs, specified by the
  jump conditions~\eqref{eq:JY}, will give the local parametrix via
  \eqref{eq:lax}, \eqref{eq:yzeta}, \eqref{eq:phi} and
  \eqref{eq:S0P}. In particular, we will show that the
  Hamiltonians $H_1$ and $H_2$ provide the leading order
  terms of the differential identities~\eqref{eq:diffid}.
\end{remark}

\section{Final Solution of the RHP}
\label{se:final}
Here we prove that the outer and local parametrices that we constructed
in Secs.~\ref{se:lens} and \ref{se:local}, respectively, are indeed
good approximations to the solution $S(y)$ of the RHP~\eqref{eq:RHS}.

Let us define
\begin{equation}
\label{eq:Rx} 
          R(y):= \begin{cases}
         S(y)\left(S^{\left(p\right)}(y)\right)^{-1}, & y
        \in D_{\pm 2}\cup D_0, \\
         S(y)\bigr(S^{\infty}(y)\bigr)^{-1}, & y 
         \in \mathbb{C}\setminus\left(D_{\pm 2}\cup
         D_0\right),
       \end{cases}
\end{equation}
where $S^{(p)}(y)$ are the local parametrices inside the neighborhoods
$D_{\pm2}$ and $D_0$. The function $R(y)$ has jump discontinuities on
the contour $\Gamma_R$ shown in Fig.~\ref{fig:sigma}.
\begin{figure}[ht]
\centering 
\includegraphics[width=6in]{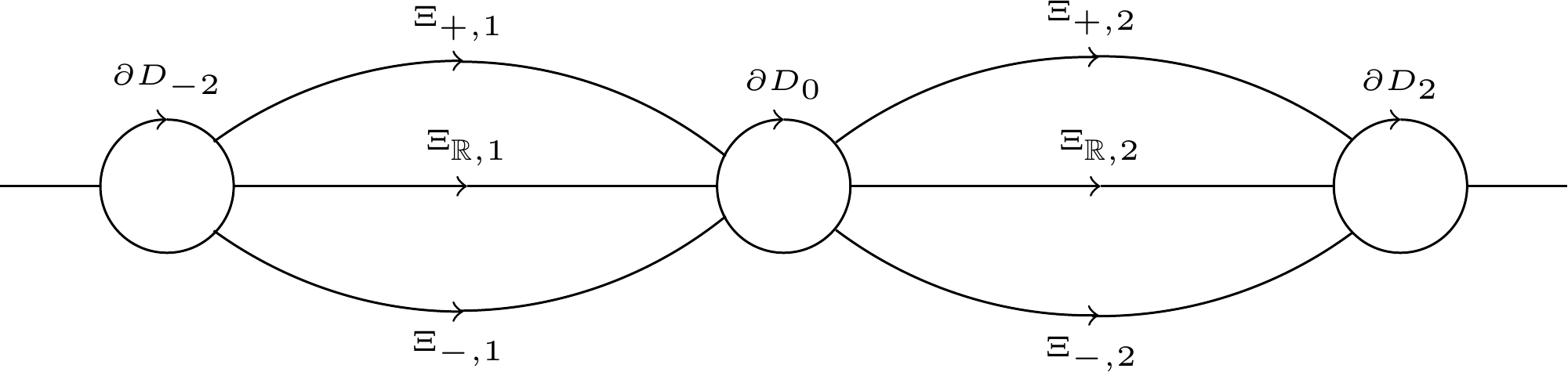}
\caption{The contour $\Gamma_R$.}\label{fig:sigma}
\end{figure}
In particular, $R(y)$ satisfies the RHP
\begin{equation}\label{eq:RHR}
\begin{aligned}
1. \quad & \text{$R(y)$ is analytic in $\mathbb{C}\setminus\Gamma_R$,} \\
2.\quad & R_+(y)=R_-(y)J_R(y),\quad y \in \Gamma_R,\\
3. \quad & R(y)=I+O(y^{-1}),  \quad y \to \infty.
\end{aligned}
\end{equation}

From the definition of $R(y)$, it follows that the
order of magnitude of the jump matrix $J_R(y)$ as $N \to \infty$ is
\begin{equation}
\label{eq:Jx} J_R(y)=
       \begin{cases}
         I+O(N^{-1}), & y\in\p D_{\pm 2}\cup \p D_0\cup\Gamma_0, \\
         I+O\left(e^{-N\eta}\right), & \text{for some fixed $\eta>0$
         and  $y \in  \Gamma_R \setminus 
         \left(\p D_{\pm 2}\cup \p D_0\cup\Gamma_0\right)$,}
       \end{cases}
\end{equation}
where $\Gamma_0$ is the subset of $[-2,2]$ that lies outside of
$D_{\pm2}$ and $D_0$. The general theory of the RHP (see,
\textit{e.g.}~\cite[Sec.~7]{DKMVZ99b}, and also Eq.~\eqref{eq:gen_th})
asserts that the third statement of the RHP~\eqref{eq:RHR}
and Eq.~\eqref{eq:Rx} lead to 
\begin{equation}
\label{eq:Rest}
\begin{split}
R(y)=I+O\left(\frac{1}{N\left(\abs{y}+1\right)}\right),
\end{split}
\end{equation}
uniformly in $\mathbb{C}$. Therefore, at leading order as $N\to
\infty$ the solution of the RHP~\eqref{eq:RHS} is given by
\begin{equation}
\label{eq:approxS} S(y)= \begin{cases}
  \bigl(I+O\left(N^{-1}\bigr)\right)S^{\left(p\right)}(y), & y \in
  D_{\pm 2}\cup D_0, \\
  \bigl(I+O\left(N^{-1}\bigr)\right)S^{\infty}(y), & y \in
  \mathbb{C} \setminus \left(D_{\pm 2}\cup D_0\right),
       \end{cases}
\end{equation}

Combining these expressions with equations~\eqref{eq:S}
and~\eqref{eq:Tg} we obtain an asymptotic formula for the solution
of the original RHP~\eqref{eq:RHP}.

\section{Asymptotics of the Hankel Determinant} 
\label{se:asy_Hank}
In this section we complete the proof of Theorem~\ref{thm:main1} by
computing the leading order terms of the differential
identities~\eqref{eq:diffid}. We shall see that their asymptotics are
determined by the behaviour of the local parametrix near the
origin. In turn, such formulae can be identified with the
Hamiltonians~\eqref{eq:H1} and~\eqref{eq:H2}.

\begin{lemma}
\label{le:diffid}
  Let $G_N(u_{1,N},u_{2,N})$ be the partition
  function~\eqref{eq:tildeB} and $\hat{\Psi}_0(\zeta)$ the 
  matrix introduced in~\eqref{Yasym0}.  At leading order as $N \to \infty$ we
  have
\begin{subequations}
\label{eq:lder}
\begin{align}
\frac{\p\log G_N}{\p u_{1,N}}&=-\frac{1}{2}\res_{\zeta=0}
 \frac{1}{\zeta}\tr\left(\hat{\Psi}_0^{-1}(\zeta)
  \hat{\Psi}^{\prime}_{0}(\zeta)\sigma_3d\zeta\right)+O(N^{-1}),\\
\frac{\p\log G_N}{\p u_{2,N}}&=\frac{1}{4}\res_{\zeta=0}
    \frac{1}{\zeta^2}\tr\left(\hat{\Psi}_0^{-1}(\zeta)
  \hat{\Psi}^{\prime}_{0}(\zeta)\sigma_3d\zeta\right)
+O(N^{-1}).
\end{align}
\end{subequations}
\end{lemma}
\begin{proof}
  The solution $Y(y)$ of the original RHP~\eqref{eq:RHP} can be
  expressed in terms of the $S(y)$ using the definitions \eqref{eq:Tg}
  and \eqref{eq:S}:
\begin{equation}
\label{eq:YS}
Y(y)=e^{\frac{Nl\sigma_3}{2}}S(y)e^{\left(Ng(y)-\frac{Nl}{2}\right)\sigma_3}.
\end{equation}
In addition, from Eq.~\eqref{eq:Rx} near the origin we have
$S(y)=R(y)S^{(0)}(y)$. Then, Eqs.~\eqref{eq:S0P} give 
\begin{subequations}
\label{eq:YP}
\begin{align}
Y_+(y)&=e^{\frac{Nl\sigma_3}{2}}R(y)S^{\infty}(y)e^{-N\tilde{g}_+(0)\sigma_3}
   \begin{pmatrix}0&-1\\1&0\end{pmatrix}\notag \\
& \quad \times P_+(\zeta) e^{N\left(\tilde{g}_+(0)+g_+(y)
  -\frac{Nl}{2}\right)\sigma_3},\quad \ipart(y)>0,\\
Y_-(y)&=e^{\frac{Nl\sigma_3}{2}}R(y)S^{\infty}(y)e^{-N\tilde{g}_+(0)\sigma_3}
  P_-(\zeta)e^{N\left(\tilde{g}_+(0)+g_-(y)-\frac{Nl}{2}\right)\sigma_3},\quad
\ipart(y)<0.
\end{align}
\end{subequations}
Now, from Eqs.~\eqref{eq:phi} and~\eqref{eq:yzeta} we have
\begin{equation*}
P_{\pm}(\zeta)=\Psi_{\pm}(\zeta)e^{\left(\mp
i\zeta+\frac{u_2}{4\zeta^2}-\frac{u_1}{2\zeta}\right)\sigma_3}.
\end{equation*}
Therefore, $P(\zeta)$ is related to
$\hat{\Psi}_0 (\zeta)$ by
\begin{equation}
\label{eq:PPsi0}
P_{\pm}(\zeta)=\hat{\Psi}_0(\zeta)e^{\mp i\zeta}.
\end{equation}
Define
\begin{equation}
\label{eq:Kpm}
K_+(\zeta): =e^{-N\tilde{g}_+(0)\sigma_3}\begin{pmatrix}0&-1\\1&0\end{pmatrix}
\hat{\Psi}_0(\zeta)e^{- i\zeta},\qquad
K_-(\zeta):=e^{-N\tilde{g}_+(0)\sigma_3}
\hat{\Psi}_0(\zeta)e^{i\zeta}.
\end{equation}
By combining Eqs.~\eqref{eq:YP} and~\eqref{eq:Kpm} we can write 
\begin{equation*}
\begin{split}
\tr\left(Y_{\pm}^{-1}\frac{dY_{\pm}}{dy}\sigma_3\right)dy&=
2Ng_{\pm}^{\prime}(y)dy\pm2id\zeta
+\tr\left(\hat{\Psi}_0^{-1}(\zeta)\hat{\Psi}^{\prime}_{0}(\zeta)
\sigma_3d\zeta\right)\\
 & \quad +\tr\left(K_{\pm}(\zeta)\left(S_{\pm}^{\infty}(y)\right)^{-1}
\left(S_{\pm}^{\infty}(y)\right)^{\prime}K_{\pm}(\zeta)\sigma_3dy\right)\\
& \quad +\tr\left(K_{\pm}^{-1}(\zeta)\left(S_{\pm}^{\infty}(y)\right)^{-1}
R^{-1}(y)R^{\prime}(y)S_{\pm}^{\infty}(y)K_{\pm}(\zeta)\sigma_3dy\right).
\end{split}
\end{equation*}

The final step consists in inserting this expression into into the
differential identities~\eqref{eq:diffid}.  Note that the definitions
of the jump matrices of the RHPs~\eqref{eq:RHS}
and~\eqref{eq:localpara0} combined with Eq.~\eqref{eq:Rx} imply that
the problem~\eqref{eq:RHR} does not have any oscillatory factors.
Therefore, we can safely differentiate the right-hand side of
Eq.~\eqref{eq:Rest} and see that $R'(y)= O(1/N)$.  Finally, the
conformal map~\eqref{eq:zeta0} leads to~\eqref{eq:lder}.
\end{proof}

The leading order terms of the identities~\eqref{eq:diffid} are the
derivatives of the $\tau$-function for the equations \eqref{eq:compat}
(see, \textit{e.g.},~\cite{JMU81}).  In unpublished work, Bertola,
Harnad, Hurtubise and Putsai~\cite{BHHP,H} have shown that they are
given by the Hamiltonians \eqref{eq:flowham}.  For completeness we
repeat the proof of this statement for the system that we study.
\begin{proposition} 
\label{pro:BHHP}
  The Hamiltonians~\eqref{eq:flowham} can be expressed as
\begin{subequations}
\label{eq:speres}
\begin{align}
H_1&=\frac{1}{2}\res_{\zeta=0}\frac{1}{\zeta}
\tr\left(\hat{\Psi}_0^{-1}(\zeta)\hat{\Psi}^{\prime}_{0}(\zeta)\
\sigma_3d\zeta\right),\\
H_2&=-\frac{1}{4}\res_{\zeta=0}\frac{1}{\zeta^2}
\tr\left(\hat{\Psi}_0^{-1}(\zeta)\hat{\Psi}^{\prime}_{0}(\zeta)
\sigma_3d\zeta\right).
\end{align}
\end{subequations}
\end{proposition}
\begin{proof} 
Let $\lambda(\zeta)$ be an eigenvalue of the matrix
  $A(\zeta)$ defined in~\eqref{eq:lax_A}. The expansion of
  $\lambda(\zeta)$ near $\zeta=0$ is
\begin{equation}
\label{eq:lambdexp}
\lambda(\zeta)=\frac{1}{2}u_2\zeta^{-3}-\frac{1}{2}u_1\zeta^{-2}+H_1
-2H_2\zeta+O(\zeta^2).
\end{equation}
Then, note that from Eq.~\eqref{eq:lax_A} we have
\begin{equation*}
\hat{\Psi}^{-1}_0(\zeta)\hat{\Psi}^{\prime}_0(\zeta)+\frac{1}{2}
\left(\frac{u_2}{\zeta^{3}}-\frac{u_1}{\zeta^{2}}\right)
\sigma_3=\hat{\Psi}^{-1}_0(\zeta)A\hat{\Psi}_0(\zeta).
\end{equation*}
Set $\hat{A}:=\hat{\Psi}^{-1}_0A\hat{\Psi}_0$ and write 
\begin{equation*}
\hat{A}=\frac{1}{2}
\left(\frac{u_2}{\zeta^{3}}-\frac{u_1}{\zeta^{2}}\right)\sigma_3+\begin{pmatrix}\Pi&F_{12}\\
F_{21}&-\Pi\end{pmatrix},
\end{equation*}
where $\Pi$ and $F_{ij}$ are bounded at $\zeta=0$. In particular, we
have
\begin{equation}
\label{eq:trpi}
\tr\left(\hat{\Psi}_0^{-1}(\zeta)\hat{\Psi}^{\prime}_{0}(\zeta)\sigma_3\right)=
\frac{u_2}{\zeta^{3}}-\frac{u_1}{\zeta^{2}}+2\Pi.
\end{equation}

Let 
\begin{equation*}
\upsilon_{\pm}:=\lambda\pm\left(\frac{1}{2}
\left(\frac{u_2}{\zeta^{3}}-\frac{u_1}{\zeta^{2}}\right)+\Pi\right).
\end{equation*}
Since $\lambda$ is an eigenvalue of $A$ and $\hat{A}$ is
conjugated to $A$, we have
\begin{equation*}
\det\left(\lambda-\hat{A}\right)=\upsilon_+\upsilon_--F_{12}F_{21}=0.
\end{equation*}
From the expansion \eqref{eq:lambdexp}, we see that $\upsilon_+ =
O(\zeta^{-3})$, while $F_{12}F_{21} =O(1)$. Therefore, $\upsilon_- =
O(\zeta^3)$. Hence, $\Pi$
behaves as
\begin{equation*}
\Pi=H_1-2H_2\zeta+O(\zeta^2), \quad \zeta \to 0.
\end{equation*}
This equation and~\eqref{eq:trpi} lead to Eqs.~\eqref{eq:speres}.
\end{proof}
Lemma~\ref{le:diffid} and Proposition~\ref{pro:BHHP} complete the
proof 
of Theorem \ref{thm:main1}.

\section{Initial Conditions}
\label{se:asymu2}

The initial conditions of the Hamiltonian system of ODEs in
Theorem~\ref{thm:main1}, Eq.~\eqref{eq:hameq}, are provided by the
asymptotic limit as $u_2 \to 0$ of the solution $\Phi(\zeta)$ of the
RHP~\eqref{eq:RHPphi}. The outcome of this calculation
is Theorem~\ref{thm:init}.  When $u_1=0$ these initial conditions
simplify considerably reducing to formulae~\eqref{eq:asympcan} in
Corollary~\ref{co:asym_u2}.

The main idea of the proof is to deform the local
paramatrix~\eqref{eq:psihat} with a matrix $\tilde{R}(\zeta)$ that
behaves as $I +O(\zeta^{-1})$ as $\zeta \to \infty$ and that satisfies
a RHP with jump discontinuities of order $I + O(\sqrt{u_2})$ as $u_2
\to 0$.  Therefore, $\tilde{R}(\zeta)$ can be expressed in terms of a
Neumann series in $\sqrt{u_2}$, whose coefficients can be
computed using the technique in~\cite[Sec.~7.2]{DKMVZ99b}.  The first
four terms of this asymptotic expansion provide the initial
conditions.  

Before we discuss the proof of Theorem~\ref{thm:init} we need to
introduce some preliminary definitions and results.   Throughout this
section we shall assume that the parameters $u_1$ and $u_2$ scale as
$u_1=\tilde{u}\sqrt{u_2}$, where $\tilde{u}$ remains finite. Let 
\begin{equation}
\label{eq:phix}
\phi(x): =-\frac{1}{2\pi
i}\int_{\mathbb{R}}\frac{e^{-\frac{q^2}{2}+\tilde{u}q}}{q-\frac{1}{x}}dq
\end{equation}
and consider the coefficients of the expansion of
$\phi(x)$ as $x \rightarrow\infty$ in the upper/lower half planes, namely
\begin{equation}\label{eq:phij}
\phi_{\pm,j} :=-\frac{1}{2\pi
i}\int_{E_{\pm}}q^{-j-1}e^{-\frac{q^2}{2}+\tilde{u}q}dq.
\end{equation}
Here $E_+$ ($E_-$) consists of the union of the  two intervals
$(-\infty,-\epsilon)$ and $(\epsilon,\infty)$, oriented from $-\infty$
to $\infty$, together with the semi circle of radius $\epsilon$
around the origin in the upper (lower) half plane. The exact value
of $\epsilon$ is not important, as the integrand is analytic away
from zero; therefore, by Cauchy's theorem, integrals with different
values of $\epsilon$ will yield the same result. 

The residue theorem leads to the identity
\begin{equation}\label{eq:phires}
\sum_{k=0}^{\infty}\phi_{+,k}\frac{u_2^{\frac{k}{2}}}{\zeta^k}
=\sum_{k=0}^{\infty}\phi_{-,k}\frac{u_2^{\frac{k}{2}}}{\zeta^k}
+e^{-\frac{u_2}{2\zeta^2}+\frac{u_1}{\zeta}}.
\end{equation}
This formula gives us a set of relations between the $\phi_{+,j}$'s
and $\phi_{-,j}$'s.  For example, for the first few coefficients we have
\begin{subequations}
\label{eq:phijjump}
\begin{align}
\phi_{+,0}-\phi_{-,0}&=1, & \phi_{+,1}-\phi_{-,1}&=\tilde{u},\\
\phi_{+,2}-\phi_{-,2}&=-\frac{1}{2}+\frac{\tilde{u}^2}{2}, &
\phi_{+,3}-\phi_{-,3}& =-\frac{1}{2}\tilde{u}+\frac{1}{6}\tilde{u}^3.
\end{align}
\end{subequations}
By integrating by parts, we can express the coefficients $\phi_{\pm,j}$
in terms of $\phi_{\pm,0}$.

The integrals~\eqref{eq:phij} are rather cumbersome to compute
directly. The following lemma provides an efficient tool to this purpose.
\begin{lemma}
\label{pro:recur}
The coefficients $\phi_{\pm,j}$  satisfy the recurrence
relation
\begin{equation}
\label{eq:recur}
\phi_{\pm,j}=\frac{1}{j}\left(\tilde{u}\phi_{\pm,j-1}-\phi_{\pm,j-2}\right),
\quad j\geq 2,
\end{equation}
with initial conditions
\begin{subequations}
\label{eq:phi01}
\begin{align}
\phi_{\pm,0}& =\pm\frac{1}{2}+\frac{i}{\sqrt{2\pi}}
\int_0^{\tilde{u}}e^{\frac{q^2}{2}}dq,\\
\phi_{\pm,1}& =\pm\frac{\tilde{u}}{2}+
\frac{i}{\sqrt{2\pi}}\left(\tilde{u}\int_0^{\tilde{u}}e^{\frac{q^2}{2}}dq
-e^{\frac{\tilde{u}^2}{2}}\right).
\end{align}
\end{subequations}
\end{lemma}
\begin{proof} 
The proof of Eq.~\eqref{eq:recur} is an immediate consequence of
integration by parts:
\begin{equation*}
\begin{split}
\phi_{\pm,j}&=-\frac{1}{2\pi i}\int_{E_{\pm}}q^{-j-1}
e^{-\frac{q^2}{2}+\tilde{u}q}dq
=-\frac{1}{2\pi i j}\int_{E_{\pm}}\left(\tilde{u}-q\right)q^{-j}
e^{-\frac{q^2}{2}+\tilde{u}q}dq\\
&={j}^{-1}\left(\tilde{u}\phi_{\pm,j-1}-\phi_{\pm,j-2}\right), \quad
\text{for $j\geq 2$.} 
\end{split}
\end{equation*}
When $j=1$ this formula gives
\begin{equation}\label{eq:pm1}
\phi_{\pm,1}=\tilde{u}\phi_{\pm,0}+\frac{1}{2\pi
i}\int_{\mathbb{R}}e^{-\frac{q^2}{2}+\tilde{u}q}dq=
\tilde{u}\phi_{\pm,0}-\frac{i}{\sqrt{2\pi}}e^{\frac{\tilde{u}^2}{2}}.
\end{equation}
To compute $\phi_{\pm,0}$, we expand the factor $e^{\tilde{u}q}$ in
the integrand in~\eqref{eq:phij} in a power series:
\begin{equation}
\label{eq:pm0}
\begin{split}
\phi_{\pm,0}&=-\frac{1}{2\pi
i}\int_{E_{\pm}}q^{-1}e^{-\frac{q^2}{2}}\sum_{k=0}^{\infty}\frac{\left(\tilde{u}q\right)^k}{k!}dq\\
&=\frac{i}{2\pi}\int_{E_{\pm}}q^{-1}e^{-\frac{q^2}{2}}dq+\frac{i}{2\pi }\sum_{k=1}^{\infty}\frac{\tilde{u}^k}{k!}\int_{\mathbb{R}}q^{k-1}e^{-\frac{q^2}{2}}dq\\
&=\frac{i}{2\pi}\int_{E_{\pm}}q^{-1}e^{-\frac{q^2}{2}}dq+\frac{i}{\sqrt{2\pi}}\sum_{k=0}^{\infty}\frac{\tilde{u}^{2k+1}}{(2k+1)
2^kk!}\\
&=\frac{i}{2\pi}\int_{E_{\pm}}q^{-1}e^{-\frac{q^2}{2}}dq+\frac{i}{\sqrt{2\pi}}\int_0^{\tilde{u}}e^{\frac{q^2}{2}}dq.
\end{split}
\end{equation}
To evaluate the integral
$\int_{E_{\pm}}q^{-1}e^{-\frac{q^2}{2}}dq$, note that if we
change the integration variable from $s$ to $-s$, then the path of
integration will change from $E_{\pm}$ to $-E_{\mp}$.
Therefore, we have
\begin{equation}\label{eq:symme}
\frac{i}{2\pi}\int_{E_{\pm}}q^{-1}e^{-\frac{q^2}{2}}dq=
-\frac{i}{2\pi}\int_{E_{\mp}}q^{-1}e^{-\frac{q^2}{2}}dq.
\end{equation}
Now, the residue theorem gives
\begin{equation*}
\frac{i}{2\pi}\int_{E_{+}
}q^{-1}e^{-\frac{q^2}{2}}dq-\frac{i}{2\pi}
\int_{E_{-}}q^{-1}e^{-\frac{q^2}{2}}dq=1.
\end{equation*}
This equation and Eq.~ \eqref{eq:symme} imply that
$\frac{i}{2\pi}\int_{E_{\pm}}q^{-1}e^{-\frac{q^2}{2}}dq=\pm\frac{1}{2}$.
Inserting this value into \eqref{eq:pm0} and combing it with
\eqref{eq:phi01} completes the proof.
\end{proof}

We are now in a position to state the main result of this section.
\begin{theorem}
\label{thm:init}
  Let $u_1=\tilde{u}u_2^{\frac{1}{2}}$ and suppose $\tilde{u}$
  exists and is finite as $u_2\rightarrow0$. Let the $\phi_{-,j}$'s be
  the coefficients introduced  in Eq.~\eqref{eq:phij}.
  Furthermore, define
  \begin{equation*}
    \chi\left(\tilde{u}\right) :=\frac{i}{\sqrt{2\pi}}
    \int_{0}^{\tilde{u}} e^{-\frac{q^2}{2}}dq
  \end{equation*}
  and denote by $\phi_1$ be the coefficient of $x$ in the expansion
  of $\phi(x)$ in a neighbourhood of $x=0$.  Then, the solution
  $\Phi(\zeta)$ of the RHP~\eqref{eq:RHPphi} determines the following
  initial conditions for the Hamilton equations in Theorem~{\rm
    \ref{thm:main1}}:
\begin{align*}
P_1&=\frac{\phi_1}{2\phi_{-,0}}u_2^{\frac{1}{2}}
-i\left(\frac{\phi_1\phi_{-,1}}{\phi_{-,0}^2}+2\phi_{-,0}+1\right)u_2
+\biggl(\frac{\phi_1(\phi_{-,2}-\phi_{-,1}^2)}{\phi_{-,0}^2}
-\frac{2\phi_1\phi_{-,1}^2}{\phi_{-,0}^3}\\
&\quad -\frac{\phi_{-,1}}{\phi_{-,0}} +4\phi_{-,1}-2\phi_1
+4\phi_{-,1}\phi_{-,0}\biggr)u_2^{\frac{3}{2}}\\
& \quad +2i\biggl(\frac{2\phi_1\phi_{-,1}^3}{\phi_{-,0}^4}
+\frac{\phi_{1}\phi_{-,1}\left(9\phi_{-,1}^2+9\phi_1\phi_{-,1}
-17\phi_{-,2}\right)-\phi_1^2\phi_{-,2}}{9\phi_{-,0}^3}
+ \frac{2\phi_{-,1}\left(3\phi_{-,1}+\phi_{1}\right)}{9\phi_{-,0}^2}\\
& \quad +\frac{10\phi_{-,1}^2-5\phi_1\phi_{-,1}+\phi_1^2}{3\phi_{-,0}} 
+6\phi_{-,1}^2-4\phi_1\phi_{-,1}+\frac{1}{3}+4\phi_{-,1}^2\phi_{-,0}\biggr)u_2^2
+O\left(u_2^{\frac{5}{2}}\right),\\
Q_1&=\frac{\phi_{-,1}}{\phi_{-,0}}u_2^{-\frac{1}{2}}
-2i\left(\frac{\phi_1\phi_{-,1}}{\phi_{-,0}^2}+1\right)
+2\left(\frac{\phi_1(\phi_{-,2}-\phi_{-,1}^2)}{\phi_{-,0}^2}
-\frac{2\phi_1\phi_{-,1}^2}{\phi_{-,0}^3}
-\frac{\phi_{-,1}}{\phi_{-,0}}-2\phi_{-,1}\right)u_2^{\frac{1}{2}}\\
&\quad +i\biggl(\frac{2\phi_1\phi_{-,1}^2\left(19\phi_{-,1}+17\phi_1\right)}%
{9\phi_{-,0}^4} +\frac{4\phi_1\left(9\phi_{-,1}^3
+9\phi_1\phi_{-,1}^2-\phi_1\phi_{-,2}\right)}{9\phi_{-,0}^3}\\
& \quad
+\frac{6\phi_{-,1}\left(4\phi_{-,1}+7\phi_1\right)}{9\phi_{-,0}^2}
+8\frac{\phi_1\phi_{-,1}}{\phi_{-,0}} +\frac{4}{3}
+\frac{8\phi_{-,0}}{3}\biggr)u_2+O\left(u_2^{\frac{3}{2}}\right),\\
P_2&=\chi(\tilde{u})
u_2+i\bigl(4\chi(\tilde{u})^2\tilde{u}-\tilde{u}
+4\chi(\tilde{u})\phi_1\bigr)u_2^{\frac{3}{2}}\\
& \quad + \bigl(\phi_1\tilde{u}+\left(3\tilde{u}^2
-12\phi_1^2\right)\chi(\tilde{u}) -24\phi_1\tilde{u}\chi(\tilde{u})^2
-12\tilde{u}^2\chi(\tilde{u})^3\bigr)u_2^2\\
&  \quad +i\left(4\phi_1^2\tilde{u}-\frac{8\tilde{u}^3}{9}+
\left(\left(\frac{140\tilde{u}^2-28}{9}\right)\phi_1
-32\phi_1^3\right)\chi(\tilde{u}) \right. \\
& \quad +\left(\frac{104\tilde{u}^3}{9}
-96\phi_1^2\tilde{u}\right)\chi(\tilde{u})^2
-96\phi_1\tilde{u}^2\chi(\tilde{u})^3
-32\tilde{u}^3\chi(\tilde{u})^4\biggr)u_2^{\frac{5}{2}}+O\left(u_2^3\right),\\
Q_2&=2i\frac{\phi_{-,1}}{\phi_{-,0}}u_2^{-\frac{1}{2}}
+4\left(\frac{\phi_{-,1}^2}{\phi_{-,0}^2}
+\frac{\phi_{-,1}^2-\phi_{-,2}}{\phi_{-,0}}\right)\\
& \quad -2i\biggl(\frac{3\phi_1\phi_{-,1}^2+\phi_{-,1}^3}{\phi_{-,0}^3}
+6\frac{\phi_1\phi_{-,1}^2}{\phi_{-,0}^2}
+\frac{2\tilde{u}\phi_{-,2}
+7\phi_{-,1}}{3\phi_{-,0}} +4\phi_{-,1}\biggr)u_2^{\frac{1}{2}}\\
& \quad
-\biggl(\frac{\phi_{-,1}^2\left(33\phi_{-,1}^2+79\phi_1\phi_{-,1}
+32\phi_1^2\right)}{9\phi_{-,0}^4} +\frac{6\phi_{-,1}^2
\left(11\phi_{-,1}^2+14\phi_1\phi_{-,1}+23\phi_1^2\right)}{9\phi_{-,0}^3}\\
& \quad - \frac{2\phi_1^2\phi_{-,2}}{9\phi_{-,0}^3}  +\frac{\phi_{-,1}\left(22\phi_{-,1}
+15\phi_1+24\phi_1\phi_{-,1}^2\right)}{3\phi_{-,0}^2}
+\frac{24\phi_{-,1}^2+52\phi_1\phi_{-,1}}{3\phi_{-,0}}\\
& \quad +8\phi_{-,1}^2 +\frac{5 +10\phi_{-,0}}{3}\biggr)u_2
+O\left(u_2^{\frac{3}{2}}\right).
\end{align*}
\end{theorem}

\begin{remark}
These formulae simplify considerably at $\tilde u_1=0$ and become the
initial conditions in Corollary~\ref{co:asym_u2}.  
\end{remark}

\begin{figure}[ht]
\centering
\includegraphics[scale=0.4]{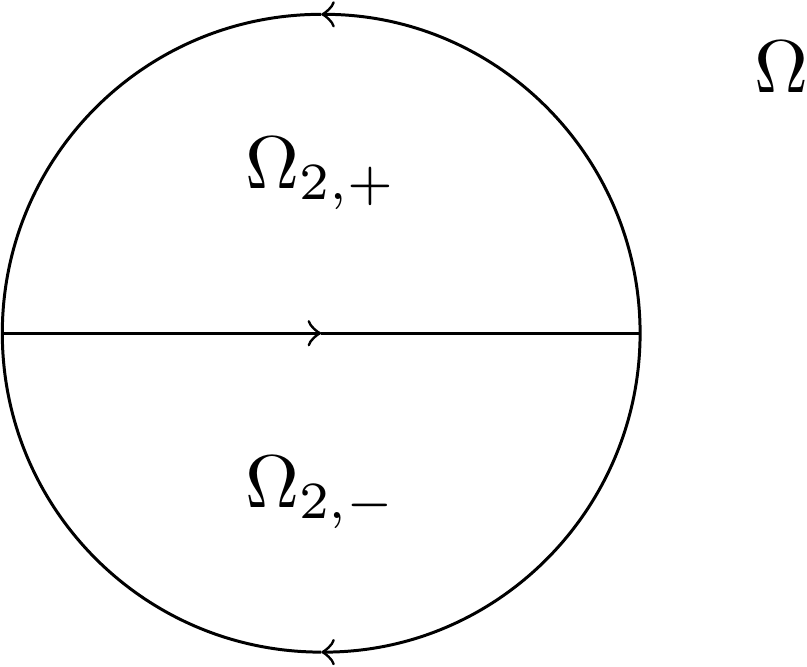}
\caption{The $\zeta$-plane.}
\label{fig:zetareg}
\end{figure}

\subsection{Proof of Theorem~\ref{thm:init}}
\label{ss:pyhminit}

Divide the $\zeta$-plane in the three regions $\Omega_{2,\pm}$ and 
$ \Omega = \mathbb{C}\setminus (\Omega_{2,+} \cup\Omega_{2,-})$ 
as in  Fig.~\ref{fig:zetareg}.   Recall that  $\partial \Omega_{2,\pm} = \Gamma_{\pm} \cup \Gamma_2$ (see Fig.~\ref{fig:local2}). Define  
\begin{equation}
\label{eq:phiout}
\Phi^{(o)}(\zeta) :=
\begin{cases}
I,&  \zeta\in \ \Omega,\\
\begin{pmatrix}1&e^{2i\zeta}\\
-e^{-2i\zeta}&0\end{pmatrix}, & \zeta\in\Omega_{2,+},\\
\begin{pmatrix}1&0\\-e^{-2i\zeta}&1\end{pmatrix}, & 
\zeta\in\Omega_{2,-}.
\end{cases}
\end{equation}
Let $\Delta_0$ be a small disc around the origin of fixed
radius and let  $\xi=u_2^{-\frac{1}{2}}\zeta$.  Then, introduce the function
\begin{equation}
\label{eq:philocal}
\Phi^{(p)}(\zeta) :=
\begin{cases}
\begin{pmatrix}1&e^{2i\zeta}\\-e^{-2i\zeta}&0\end{pmatrix}
\begin{pmatrix}1&\left(\phi(\xi)-\phi_{+,0}\right)e^{2i\zeta}\\
0&1\end{pmatrix}, & \ipart(\zeta)>0,\\
\begin{pmatrix}1&0\\-e^{-2i\zeta}&1\end{pmatrix}
\begin{pmatrix}1&\left(\phi(\xi)-\phi_{-,0}\right)e^{2i\zeta}\\0&1\end{pmatrix},
&  \ipart(\zeta)<0.
\end{cases}
\end{equation}

From Eqs.~\eqref{eq:phijjump}, \eqref{eq:philocal} and the property of
the Cauchy transform, one can check that $\Phi^{(p)}(\zeta)$ satisfies
the jump condition
\begin{equation*}
\Phi^{(p)}_+(\zeta)=\Phi^{(p)}_-(\zeta)
\begin{pmatrix}1&e^{2i\zeta-\frac{u_2}{2\zeta^2}+\frac{u_1}{\zeta}}\\
0&1\end{pmatrix}.
\end{equation*}
Hence, the matrix
\begin{equation}
\label{eq:Rphi}
R_{\Phi}(\zeta):= \begin{cases}
         \Phi(\zeta)\left(\Phi^{\left(p\right)}(\zeta)\right)^{-1}, &
\zeta \in \Delta_0, \\
 \Phi(\zeta)\left(\Phi^{(o)}(\zeta)\right)^{-1}, & 
\zeta \in \mathbb{C}\setminus\Delta_0
       \end{cases}
\end{equation}
 satisfies the RHP
\begin{equation}\label{eq:RHRphi}
\begin{aligned}
1. \quad & \text{$R_{\Phi}(\zeta)$ is analytic in 
$\mathbb{C}\setminus\Gamma_{R_{\Phi}}$,} \\
2.\quad & R_{\Phi,+}(\zeta)=R_{\Phi,-}(\zeta)J_{R_{\Phi}}(\zeta),
 \quad \zeta \in \Gamma_{R_{\Phi}},\\
3. \quad & R_{\Phi}(\zeta)=I+O(\zeta^{-1}), \quad  \zeta \to
\infty,
\end{aligned}
\end{equation}
where $\Gamma_{R_{\Phi}}$ is the contour in Fig.~\ref{fig:GammaRphi}. The jump
matrices are 
\begin{equation}
\label{eq:JRPhi}
J_{R_{\Phi}}(\zeta) :=
\begin{cases}
\begin{pmatrix}e^{-\frac{u_2}{2\zeta^2}+\frac{u_1}{\zeta}}
&e^{2i\zeta}\left(e^{-\frac{u_2}{2\zeta^2}+\frac{u_1}{\zeta}}-1\right)\\
0&e^{\frac{u_2}{2\zeta^2}-\frac{u_1}{\zeta}}\end{pmatrix}, & \zeta\in\Gamma_+,\\
\begin{pmatrix}1&0\\
e^{-2i\zeta}\left(e^{\frac{u_2}{2\zeta^2}-\frac{u_1}{\zeta}}-1\right)&1\end{pmatrix},
 & \zeta\in\Gamma_-,\\
\begin{pmatrix}e^{-\frac{u_2}{2\zeta^2}+\frac{u_1}{\zeta}}
&e^{2i\zeta}\left(e^{-\frac{u_2}{2\zeta^2}+\frac{u_1}{\zeta}}-1\right)\\
e^{-2i\zeta}\left(1-e^{-\frac{u_2}{2\zeta^2}+\frac{u_1}{\zeta}}\right)
&2-e^{-\frac{u_2}{2\zeta^2}+\frac{u_1}{\zeta}}\end{pmatrix},&
\zeta\in\Gamma_{2,-} \cup \Gamma_{2,+},\\
I+\sum_{k=1}^{\infty}\frac{\phi_{\pm,k}u_2^{\frac{k}{2}}}{\zeta^k}
\begin{pmatrix}1&e^{2i\zeta}\\
-e^{-2i\zeta}&-1\end{pmatrix}, &\zeta\in\p\Delta_0\cap\mathbb{C}_{\pm}.
\end{cases}
\end{equation}
\begin{figure}[ht]
\centering 
\includegraphics[scale=0.4]{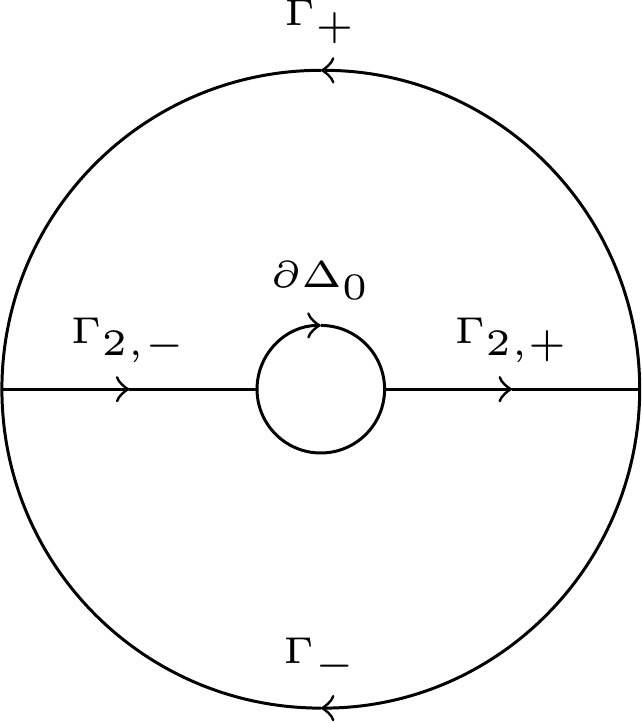}
\caption{The contour $\Gamma_{R_{\Phi}}$.}\label{fig:GammaRphi}
\end{figure}

Note that from the definition~\eqref{eq:Rphi}, \eqref{eq:yzeta}
and~\eqref{Yasym0} we have the relation
\begin{equation}
\label{eq:Phiasym}
\hat{\Phi}_0(\zeta)=R_{\Phi}(\zeta)\Phi^{(p)}(\zeta)e^{i\zeta\sigma_3}.
\end{equation}
One can check that both formulae in Eq.~\eqref{eq:philocal} provide
the correct expression for $\hat{\Phi}_0(\zeta)$.  If we can study the
behaviour of $R_{\Phi}(\zeta)$ as $u_2 \to 0 $, then expanding
$\Phi^{(p)}(\zeta)$ and using Eqs.~\eqref{eq:Phiasym} and~\eqref{eq:A}
give the asymptotic expansions for the matrix elements of $A(\zeta)$.
These, in turn, provide the initial conditions for the time evolution of
the canonical coordinates $P_1$, $Q_1$, $P_2$ and $Q_2$.  

Since $\tilde{u}$ is finite, the jump matrices $J_{R_{\Phi}}$ are of
order $I+O(\sqrt{u_2})$; hence, by the same reasoning adopted in
Secs.~\ref{ss:ptode} and~\ref{se:final}, Eqs.~\eqref{eq:gen_th}
and~\eqref{eq:Rest}, we have
\begin{equation}\label{eq:RPhi}
R_{\Phi}(\zeta)=I+O\left(\frac{\sqrt{u_2}}{\abs{\zeta}+1}\right).
\end{equation}
Indeed, a series expansion in $\sqrt{u_2}$ of the matrix
$R_{\Phi}(\zeta)$ can be computed using the method
in~\cite[Sec.~7.2]{DKMVZ99b}. 

 In order to find the necessary initial
conditions to solve the ODEs in Theorem \ref{thm:main1}, we will need
the expansion of $R_{\Phi}(\zeta)$ up to order
$O\left(u_2^{\frac{3}{2}}\right)$. To do so, let us first simplify the
RHP for  $R_{\Phi}(\zeta)$ with the transformation
\begin{equation*}
\tilde{R}(\zeta) :=
\begin{cases}
R_{\Phi}(\zeta),& \zeta\in
\Omega \cup\Delta_0,\\
R_{\Phi}(\zeta)\begin{pmatrix}e^{\frac{u_2}{2\zeta^2}-\frac{u_1}{\zeta}}&-e^{2i\zeta}\left(e^{-\frac{u_2}{2\zeta^2}+\frac{u_1}{\zeta}}-1\right)
\\0&e^{-\frac{u_2}{2\zeta^2}+\frac{u_1}{\zeta}}\end{pmatrix},&
\zeta\in\Omega_{2,+}\setminus\Delta_0,\\
R_{\Phi}(\zeta)\begin{pmatrix}1&0
\\e^{-2i\zeta}\left(e^{\frac{u_2}{2\zeta^2}-\frac{u_1}{\zeta}}-1\right)&1\end{pmatrix},&
\zeta\in\Omega_{2,-}\setminus\Delta_0.
\end{cases}
\end{equation*}
Then, the matrix $\tilde{R}(\zeta)$ satisfies the RHP
\begin{equation}\label{eq:RHRtilde}
\begin{aligned}
1. \quad & \text{$\tilde{R}(\zeta)$ is analytic in $\mathbb{C}\setminus\p\Delta_0$,} \\
2.\quad & \tilde{R}_{+}(\zeta)=\tilde{R}_{-}(\zeta)J_{\tilde{R}}(\zeta), \quad \zeta \in \p\Delta_0\\
3. \quad & \tilde{R}(\zeta)=I+O(\zeta^{-1}),  \quad \zeta \to
\infty.
\end{aligned}
\end{equation}
It follows from Eq.~\eqref{eq:phires}, that the jump matrix
$J_{\tilde{R}}(\zeta)$ is given by
\begin{equation*}
J_{\tilde{R}}(\zeta) =\begin{pmatrix}1+e^{\frac{u_2}{2\zeta^2}-\frac{\tilde{u}u_2^{\frac{1}{2}}}{\zeta}}\sum_{k=1}^{\infty}\phi_{-,k}\frac{u_2^{\frac{k}{2}}}{\zeta^k}&
e^{2i\zeta}\sum_{k=1}^{\infty}\phi_{-,k}\frac{u_2^{\frac{k}{2}}}{\zeta^k}\\
-e^{-2i\zeta+\frac{u_2}{2\zeta^2}-\frac{\tilde{u}u_2^{\frac{1}{2}}}{\zeta}}\sum_{k=1}^{\infty}\phi_{+,k}\frac{u_2^{\frac{k}{2}}}{\zeta^k}&
1-\sum_{k=1}^{\infty}\phi_{-,k}\frac{u_2^{\frac{k}{2}}}{\zeta^k}
\end{pmatrix}.
\end{equation*}
Note that $J_{\tilde{R}}(\zeta)$ is of order $I+O(\sqrt{u_2})$.
Consider the expansions
\begin{subequations}
\label{eq:ex_RJ}
\begin{align}
\label{eq:ex_R}
\tilde{R}(\zeta)& =I+\sum_{k=1}^{\infty}u_2^{\frac{k}{2}}\tilde{R}_{k}(\zeta),\\
J_{\tilde{R}}(\zeta)& =I+\sum_{k=1}^{\infty}
u_2^{\frac{k}{2}}J_{\tilde{R},k}(\zeta).
\end{align}
\end{subequations}
By comparing the coefficients of $u_2^{\frac{k}{2}}$ in the jump
conditions of $\tilde{R}(\zeta)$, we obtain linear jump
conditions for the coefficients $\tilde{R}_k(\zeta)$:
\begin{subequations}
\label{eq:jumpcoef}
\begin{align}
\tilde{R}_{+,1}&=J_{R,1}+\tilde{R}_{1,-},\\
\tilde{R}_{+,2}& =\left(\tilde{R}_{-,1}J_{\tilde{R},1}+J_{\tilde{R},2}\right)+\tilde{R}_{-,2},\\
\tilde{R}_{+,3}&=\left(\tilde{R}_{-,1}J_{\tilde{R},2}+\tilde{R}_{-,1}J_{\tilde{R},2}
+J_{\tilde{R},3}\right)+\tilde{R}_{-,3}.
\end{align}
\end{subequations}
These relations and the requirement that $\tilde{R}_k=O(\zeta^{-1})$ 
as $\zeta\rightarrow\infty$ define RHPs for the $\tilde{R}_k$'s that
we can solve.

From Eqs.~\eqref{eq:JRPhi} and \eqref{eq:phijjump}, we see that
$J_{\tilde{R},1}$ is given by
\begin{equation*}
\begin{split}
J_{\tilde{R},1}&=\frac{1}{\zeta}\begin{pmatrix}\phi_{-,1}&\phi_{-,1}e^{2i\zeta}\\-\phi_{+,1}e^{-2i\zeta}&-\phi_{-,1}\end{pmatrix},\quad
\zeta\in\p\Delta_0.
\end{split}
\end{equation*}
Let us write 
\begin{equation}
\label{eq:Ecal}
\mathcal{E}_{j}^{\pm}(\zeta): =\zeta^{-j-1}\left(\sum_{k=0}^j\frac{\left(
\pm 2i\zeta\right)^k}{k!}-e^{\pm 2i\zeta}\right)=\textrm{$O(1)$ in
$\zeta$}.
\end{equation}
By using~\eqref{eq:phijjump} and~\eqref{eq:jumpcoef}, one can 
verify that the solution to the RHP for $\tilde R_1$ is 
\begin{equation*}
\tilde{R}_1=
\begin{cases}
\begin{pmatrix}0&\phi_{-,1}\mathcal{E}_{0}^+(\zeta)\\
-\phi_{+,1}\mathcal{E}_{0}^-(\zeta)&0\end{pmatrix}, &
\zeta\in\Delta_0,\\
\begin{pmatrix}\frac{\phi_{-,1}}{\zeta}&\frac{\phi_{-,1}}{\zeta}\\
-\frac{\phi_{+,1}}{\zeta}&-\frac{\phi_{-,1}}{\zeta}\end{pmatrix}, &
\zeta\in\mathbb{C}\setminus\Delta_0.
\end{cases}
\end{equation*}

Let us now denote the jump matrix of the RHP for $\tilde R_2$ by
$\hat{J}_{\tilde{R},2}:  =\tilde{R}_{-,1}J_{\tilde{R},1}+J_{\tilde{R},2}$.
Then, we have
\begin{equation*}
\begin{split}
\hat{J}_{\tilde{R},2}&=\frac{1}{\zeta^2}\begin{pmatrix}\phi_{-,1}\left(\phi_{-,1}-\phi_{+,1}e^{-2i\zeta}\right)+\phi_{-,2}
&\phi_{-,1}^2\left(e^{2i\zeta}-1\right)+\phi_{-,2}e^{2i\zeta}\\
\phi_{+,1}\left(\phi_{+,1}e^{-2i\zeta}-\phi_{-,1}\right)-\phi_{+,2}e^{-2i\zeta}&
\phi_{+,1}\phi_{-,1}\left(1-e^{2i\zeta}\right)-\phi_{-,2}
\end{pmatrix}.
\end{split}
\end{equation*}
This expression  gives
\begin{equation*}
\label{eq:R2}
\tilde{R}_2 =
\begin{cases}
\begin{pmatrix}-\phi_{+,1}\phi_{-,1}\mathcal{E}_1^-(\zeta)&
\left(\phi_{-,1}^2+\phi_{-,2}\right)\mathcal{E}_1^+(\zeta)\\
\left(\phi_{+,1}^2-\phi_{+,2}\right)\mathcal{E}_1^-(\zeta)&
-\phi_{+,1}\phi_{-,1}\mathcal{E}_1^+(\zeta)\end{pmatrix}, &
\zeta\in\Delta_0,\\
\frac{1}{\zeta^2}\begin{pmatrix}\phi_{-,2}-\tilde{u}
\phi_{-,1}+2i\phi_{+,1}\phi_{-,1}\zeta&
\phi_{-,2}+2i\left(\phi_{-,1}^2+\phi_{-,2}\right)\zeta\\
\tilde{u}\phi_{+,1}-\phi_{+,2}+2i\left(\phi_{+,2}-\phi_{+,1}^2\right)\zeta&
-\phi_{-,2}-2i\phi_{+,1}\phi_{-,1}\zeta\end{pmatrix}, & 
\zeta\in\mathbb{C}\setminus\Delta_0.
\end{cases}
\end{equation*}

Proceeding in the same way, we can compute the matrix elements of the
jump matrix 
\begin{equation*}
\hat{J}_{\tilde{R},3}=\tilde{R}_{-,1}J_{\tilde{R},2}+\tilde{R}_{-,1}J_{\tilde{R},2}
+J_{\tilde{R},3}.
\end{equation*}
We obtain
\begin{align*}
\left(\hat J_{\tilde{R},3}\right)_{11}&=\zeta^{-3}\left(\iota_{-,0}
+2i\zeta\phi_{+,1}\phi_{-,1}^2+e^{-2i\zeta}\alpha_0-2i\zeta
e^{-2i\zeta}\phi_{+,1}\alpha_{-,1}\right),\\
\left(\hat J_{\tilde{R},3}\right)_{12}&=\zeta^{-3}\Bigl(\phi_{-,1}
\left(\tilde{u}\phi_{-,1}-2\phi_{-,2}\right)
-2i\zeta\phi_{-,1}\alpha_{-,1}+e^{2i\zeta}\iota_{-,0}\\
 &\quad +2i\zeta
e^{2i\zeta}\phi_{+,1}\phi_{-,1}^2\Bigr),\\
\left(\hat J_{\tilde{R},3}\right)_{21}&=\zeta^{-3}\left(\alpha_0+
2i\zeta\phi_{-,1}\alpha_{+,1}+ e^{-2i\zeta}\iota_{+,0}
+2i\zeta e^{-2i\zeta}\phi_{+,1}^2\phi_{-,1}\right),\\
\left(\hat J_{\tilde{R},3}\right)_{22}&=\zeta^{-3}\left(-\alpha_0
-\phi_{-,3}+2i\zeta \phi_{+,1}\phi_{-,1}^2+e^{2i\zeta}\alpha_0
+2i\zeta e^{2i\zeta}\phi_{-,1}\alpha_{+,1}\right),
\end{align*}
where $\alpha_0$, $\alpha_{\pm,1}$ and $\iota_{\pm,0}$ are  given by
\begin{align*}
\alpha_0&:=\tilde{u}\phi_{+,1}\phi_{-,1}-\phi_{-,1}\phi_{+,2}
-\phi_{+,1}\phi_{-,2},&
\alpha_{\pm,1} &:=\phi_{\pm,2}\mp\phi_{\pm,1}^2,\\
\iota_{\pm,0}&:=2\phi_{\pm,1}\phi_{\pm,2}\mp\phi_{\pm,3}-\tilde{u}\phi_{\pm,1}^2.
& & 
\end{align*}
Inside $\Delta_0$ the matrix $\tilde{R}_3$ is
\begin{equation}\label{eq:R3}
\begin{split}
\tilde{R}_3=\begin{pmatrix}
\alpha_0\mathcal{E}_2^-(\zeta)-2i\phi_{+,1}\alpha_{-,1}\mathcal{E}_1^-(\zeta)&
\iota_{-,0}
\mathcal{E}_2^+(\zeta)+2i\phi_{+,1}\phi_{-,1}^2\mathcal{E}_1^+(\zeta)\\
\iota_{+,0}\mathcal{E}_2^-(\zeta)
+2i\phi_{+,1}^2\phi_{-,1}\mathcal{E}_1^-(\zeta)&
\alpha_0\mathcal{E}_2^+(\zeta)+2i\phi_{-,1}\alpha_{+,1}\mathcal{E}_1^+(\zeta)
\end{pmatrix}.
\end{split}
\end{equation}
The coefficient $\tilde{R}_3$ outside of $\Delta_0$ is obtained by
adding  $\hat J_{\tilde{R},3}$ to formula~\eqref{eq:R3}.

We need to compute one extra term in the expansion of $\tilde
R(\zeta)$, which is used to derive the asymptotics of $Q_2$.    By
iterating this algorithm we arrive at
\begin{equation*}
\tilde{R}_4 =\begin{pmatrix}\alpha_{+,4}\mathcal{E}^-_3
-2i\alpha_{+,3}\mathcal{E}^-_2
+\alpha_{+,2}\mathcal{E}^-_1&\iota_{+,2}\mathcal{E}^+_3
-2i\alpha_0\phi_{-,1}\mathcal{E}^+_2+\iota_{+,1}\mathcal{E}^+_1\\
\iota_{-,2}\mathcal{E}^-_3-2i\alpha_0\phi_{+,1}\mathcal{E}^-_2+\iota_{-,1}\mathcal{E}^-_1&
-\alpha_{-,4}\mathcal{E}^+_3+2i\alpha_{-,3}\mathcal{E}^+_2
+\alpha_{-,2}\mathcal{E}^+_1
\end{pmatrix}, \quad \zeta \in \Delta_0,
\end{equation*}
where
\begin{align*}
\alpha_{\pm,2}&:=2(2\phi_{-,1}^2\phi_{+,1}^2\pm\iota_{\mp,0}\phi_{\pm,1}),\\
\alpha_{\pm,3}&:=\alpha_{+,1}\alpha_{-,1}
+\phi_{+,1}^2\phi_{-,1}^2\pm\phi_{\pm,1}\iota_{\mp,0},\\
\alpha_{\pm,4}&:=\phi_{\mp,1}\left(\phi_{\pm,1}\phi_{\pm,2}
\mp\phi_{\pm,3}\right)-\phi_{\pm,1}(\iota_{\mp,0}
+\alpha_0)\mp\alpha_{\mp,1}\phi_{\pm,2} \\
\iota_{\pm,1}&:=\mp
2(2\alpha_{\mp,1}\phi_{\pm,1}+\alpha_0)\phi_{\mp,1},\\
\iota_{\pm,2}&:=\pm\phi_{\mp,4}+\phi_{\mp,2}
\left(\phi_{\mp,2}-\tilde{u_1}\phi_{\mp,1}\right)+
\phi_{\mp,1}\bigl(\phi_{\mp,3}\pm(\iota_{\mp,0}+\alpha_0)\bigr).
\end{align*}
The expression of $\tilde{R}_4$ outside $\Delta_0$ is not needed and
will not be computed. 

By inserting the coefficients $\tilde R_1$, $\tilde R_2$, $\tilde R_3$
and $\tilde R_4$ into the expansion~\eqref{eq:ex_R} and then using
formula~\eqref{eq:Phiasym}, we arrive after long and tedious
calculations to the asymptotics of $P_1$, $Q_1$, $P_2$ and $Q_2$
at the desired order in $u_2$.  Such calculations can be performed
using a computer algebra package like \texttt{MAPLE}.

\section{Reduction to PIII}
\label{se:red}

This final section is devoted to express the ensemble average
$G_N(u_{1,N},u_{2,N})$ in \eqref{eq:tildeB} in terms of a special
solution to the PIII equation, thereby proving
Theorem~\ref{thm:main2}.

In Sec.~\ref{se:st_res} we pointed out that since the partition
function $E_N(z,t)$ defined in~\eqref{eq:main_average} is real
analytic in a neighbourhood of $t=0$, its analysis is equivalent to
study the coefficients of its Taylor expansion.  This is tantamount to
solve the projection of the Hamiltonian system~\eqref{eq:hameq} at
$u_1=0$, which in turn can be reduced to the second order system of
ODEs~\eqref{eq:2ndorder}. The purpose of this section is to prove that
such a system is equivalent to a special case of the Painlev\'e III
equation. In order to achieve this goal we need a result by Chen and
Its~\cite{CI10}, which we now present in some detail.

Consider the system of monic orthogonal polynomials $P_n(x)$  with
weight
\begin{equation}
\label{eq:its_w2}
W_{\alpha}(x)=\exp\left(-\frac{s}{x}-x\right)x^{\alpha}, \qquad x\in \mathbb{R}_+,
\end{equation}
where $\alpha >-1$ and $s>0$.  They satisfy the orthogonality conditions
\begin{equation*}
\int_{\mathbb{R}_+}P_n(x)P_m(x)W_{\alpha}(x)dx=h_n^{(p)}\delta_{mn}.
\end{equation*}
Then, introduce the matrix (c.f. \eqref{eq:RH_sol})
\begin{equation}
 \label{eq:RHX}
   X(x) :=
   \begin{pmatrix}
   P_n(x) & \frac{1}{2\pi i}
    \int_{-\infty}^\infty\frac{P_n(q)W_{\alpha}(q)}{q-x}dq  \\
     \kappa_{n-1}^{(p)}P_{n-1}(x) &
     \frac{\kappa_{n-1}^{(p)}}{2\pi i}
      \int_{-\infty}^\infty\frac{P_{n-1}(q)W_{\alpha}(q)}{q-x}dq
\end{pmatrix},
\end{equation}
where $\kappa_{n-1}^{(p)}=-2\pi i/h_{n-1}^{(p)}$. The function $X(x)$
solves the RHP
\begin{equation}
  \label{eq:RHPX}
\begin{aligned}
1. \quad  & \text{$X(x)$ is analytic in $\mathbb{C}/\mathbb{R}_+$},\\
2. \quad & X_+(x)=X_-(x)\begin{pmatrix} 1 & W_{\alpha}(x)  \\ 0 & 1
\end{pmatrix},\quad x\in\mathbb{R}_+,\\
3. \quad  & X(x)=\left(I+O(x^{-1})\right)\begin{pmatrix} x^n & 0
\\ 0 & x^{-n} \end{pmatrix}, \quad x\rightarrow\infty,\\
4. \quad &X(x)=O(1),\quad  x\rightarrow 0.
\end{aligned}
\end{equation}

Using standard arguments in the theory of integrable systems, similar
to those that we outlined is Sec.~\ref{ss:ptode}, Chen and
Its~\cite[Sec.~5]{CI10} showed that the Lax pair associated to $X(x)$
is that for the PIII equation. More precisely, in our
notation their result is the following.
\begin{theorem}[Chen and Its~\cite{CI10}]
\label{pro:CI} 
Define the function
\begin{equation}
\label{eq:CIPIII}
v_X(\omega) := \frac{h_n^{(p)}}{2\pi i \omega d_n}, \qquad 
\omega = \sqrt{s},\qquad d_n = X_{11}(0)X_{12}(0).
\end{equation}
Then, $v_X(\omega)$ satisfies the PIII equation
\begin{equation}
\label{eq:PIII_IC}
v_X^{\prime\prime}=\frac{\left(v_X^{\prime}\right)^2}{v_X}
-\frac{v^{\prime}}{\omega}-\frac{1}{\omega}\left(4\alpha
v^2_X+4(2n+1+\alpha)\right)+4 v^3_X-\frac{4}{v_X}.
\end{equation}
\end{theorem}

When $u_{1,N}=0$ the system of monic polynomials orthogonal with
respect to the weight~\eqref{eq:rescaled_weight} can be mapped to the
polynomials $P_n(x)$ with $\alpha=\pm1/2$ by a change of variables.  In
what follows, we shall use this observation to relate $X(x)$ to the
solution $Y(y)$ of the RHP~\eqref{eq:RHP}. Upon taking the double scaling 
limit, the matrix $Y(y)$ can be expressed in terms of a solution of the Hamiltonian
 system~\eqref{eq:2ndorder}.  
\begin{proposition}
\label{pro:PIII}
Set $r=\sqrt{u_2}$ and let $v_Y(r)$ be a solution of the PIII equation
\begin{equation}
\label{eq:PIII_vy}
v_Y^{\prime\prime} =  \frac{(v_Y^\prime)^2}{v_Y} -
\frac{v_Y^\prime}{r} +(-1)^N\frac{v_Y^2}{r}-\frac{2}{r}+ v_Y^3,
\end{equation}
with initial conditions
\begin{equation}
\label{eq:vasymeven2}
v_Y(r)=\sqrt{\frac{\pi}{2}}
+\frac{\pi-4}{2}r+\frac{\pi^2-4\pi+4}{2^{\frac{3}{2}}
\sqrt{\pi}}r^2+O\left(r^2\right),\quad
r\rightarrow 0,
\end{equation}
when $N$ is even, and 
\begin{equation}
\label{eq:vasymoddr2}
v_Y(r)=\frac{1}{r}+
\sqrt{\frac{2}{\pi}}-\frac{2\pi-6}{3\pi}r+O(r^2),\quad
r\rightarrow 0,
\end{equation}
if $N$ is odd. Then, the trajectory
\begin{equation*}
\begin{split}
P_1& =  \frac{ir}{ 2v_Y(r)} +
\frac{iu_2v_Y^2(r)}{4} -(-1)^N \frac{iu_2 
v_Y^\prime(r)}{4},  \\
Q_1& = \frac{(-1)^Ni}{rv_Y(r)} -(-1)^N
\frac{iv_Y^2(r)}{2} + \frac{iv_Y^\prime(r)}{2},
\end{split}
\end{equation*}
solve the Hamilton equations~\eqref{eq:2ndorder}.
\end{proposition}
This is the main result of this section and completes the proof of Theorem \ref{thm:main2}.  Proposition~\ref{pro:PIII} is proved in Secs. \ref{ss:relXY} and \ref{sse:rel_PIII}.

Note that, although both ODEs in Theorem \ref{pro:CI} and Proposition \ref{pro:PIII} belong
 to the family of PIII equations, they are different in
that they have different parameters.

\subsection{Relation Between $X(x)$ and $Y(y)$}
\label{ss:relXY}
Here we derive the relations between the entries of the matrix
$X(x)$ in \eqref{eq:RHX} and those of $Y(y)$ in \eqref{eq:RH_sol}.
We will consider the case where $N$ is odd and even separately.

\subsubsection{$N$ Even}

Let $\pi_j(y)$ be the orthogonal polynomials for the weight $w_N(y)$
in~\eqref{eq:rescaled_weight} and $P_n(x)$ be the orthogonal
polynomials for the weight~\eqref{eq:its_w2} with $\alpha =
-\frac{1}{2}$.

Notice that $w_N(y) = w_N(-y)$ so by the orthogonality conditions,
\[
\int_\mathbb{R}  \pi_j(y)y^j w_N(y) \, dy = (-1)^j \int_\mathbb{R} 
\pi_j(-y)y^j w_N(y) \, dy \neq 0;
\]
therefore,
\begin{equation}\label{eq:sym}
\pi_j(-y)=(-1)^j\pi_j(y).
\end{equation}

From \eqref{eq:sym}, we see that $\pi_N(y)=p_{\frac{N}{2}}(y^2)$, where $p_{\frac{N}{2}}$ is a polynomial of degree $N/2$. Let us
make the change of variables
\begin{equation}
\label{eq:transformation}
y^2 = \frac{2x}{N},\qquad s = \frac{u_{2,N}}{4N},\qquad n=\frac{N}{2}.
\end{equation}
As the difference between $u_{2,N}$ and its limit $u_2$ does not
affect the asymptotic results, we shall replace $u_{2,N}$ by $u_2$.

By the orthogonality of $\pi_N(y)$, we can see that (omitting the constant)
\begin{equation*}
\int_{\mathbb{R}_+}p_{\frac{N}{2}}(x)x^{j}W_{-\frac12}(x)dx=0,\quad j=0,\dots
,\frac{N}{2}-1.
\end{equation*}
So in fact, $p_{\frac{N}{2}}(x)$ is a degree $\frac{N}{2}$
polynomial orthogonal with respect to the weight $W_{-\frac12}(x)$, with
leading coefficient $\left( \frac{2}{N}\right)^{\frac{N}{2}}$. Hence,
the monic polynomials orthogonal with respect to the weight $W_{-\frac12}(x)$
are
$P_{\frac{N}{2}}(x)=\left(\frac{N}{2}\right)^{\frac{N}{2}}p_{\frac{N}{2}}(x)$.
Therefore,
\begin{equation}\label{Y11 X11}
X_{11}(x) = \left(\frac{N}{2}\right)^\frac{N}{2}Y_{11}(y).
\end{equation}

Now, let us look at $X_{12}(0)$ and $Y_{12}(0)$. As $N$ is even,
\begin{equation*} 
Y_{12}(0) = \frac{1}{2\pi i} \int_{\mathbb{R}} \frac{\pi_N(q)w_N(q)}{q} dq = 0,
\end{equation*}
since it is simply the integration of an odd function over a
symmetric interval. But if we expand $Y_{12}(y)$ near the origin, we see that
\begin{equation*}
Y_{12}(y)=\frac{1}{2\pi
i}\int_{\mathbb{R}}\frac{\pi_{N}(q)w_N(q)}{q-y}dq= \frac{1}{2\pi
i}\sum_{j=1}^{\infty}y^{j-1}\int_{\mathbb{R}}\frac{\pi_{N}(q)w_N(q)}{q^j}dq.
\end{equation*}
The term with $j=2$ in the above is given by
\begin{equation*}
\frac{y}{2\pi i}\int_{\mathbb{R}}\frac{\pi_{N}(q)w_N(q)}{q^2}dq.
\end{equation*}
We can make the same change of variables, $q^2 = 2q_1/N$, to
obtain
\begin{equation*}
\frac{1}{2\pi i}\int_{\mathbb{R}}\frac{\pi_{N}(q)w_N(q)}{q^2}dq =
\frac{1}{2}\left(\frac{N}{2}\right)^{-\frac{(N-1)}{2}}X_{12}(0).
\end{equation*}
Expanding $Y(y)$ near the origin, we can express $X_{12}(0)$ in terms of
$Y_{12}^{\prime}(0)$, \textit{i.e.}
\begin{equation}\label{Y12 X12 even}
X_{12}(0) = 2 \left( \frac{N}{2}\right)^\frac{N-1}{2}
Y^\prime_{12}(0).
\end{equation}

In order to complete the mapping $Y(y) \mapsto X(x)$, we are only left to determine
the relation between the orthogonality constants.  We have
\begin{equation*}
\int_{\mathbb{R}_+}P^2_{N/2}(x)W_{-\frac12}(x)dx=h_{N/2}^{(p)},\qquad
\int_{\mathbb{R}}\pi_N^2(y)w_N(y)dy=h_N .
\end{equation*}
Finally, the change of variable $y^2=2x/N$ gives
\[
\int_{\mathbb{R}}\pi_N^2(y)w_N(y)dy = \left( \frac{N}{2}\right)^{-N}
\frac{1}{\sqrt{2N}} \int_{\mathbb{R}_+} P^2_{\frac{N}{2}}(x)W_{-\frac12}(x)dx,
\] 
and so
\begin{equation}\label{hn kn even}
h_{N/2}^{(p)} = \left( \frac{N}{2}\right)^N\sqrt{2N}h_N.
\end{equation}

\subsubsection{$N$ Odd} 

When $N$ is odd, we set $\alpha=1/2$ and replace $n= (N-1)/2$ in
the transformation~\eqref{eq:transformation}. Thus, we have
\begin{equation}
\label{piPodd}
\pi_N(y) = yp_{\frac{N-1}{2}}(y^2),
\end{equation}
where $p_{\frac{N-1}{2}}$ is a polynomial of degree $(N-1)/2$. The
orthogonality conditions for $\pi_N(y)$ now give
\[
\int_{\mathbb{R}_+}  p_{\frac{N-1}{2}}(x) x^{j}
W_{\frac12}(x)\,dx = 0 
\qquad j = 0, \dots, \frac{N-3}{2}.
\]
As previously, the monic polynomials orthogonal with respect to $W_{\frac12}(x)$ are
\[
P_{\frac{N-1}{2}}(x)=\left(\frac{N}{2}\right)^{\frac{N-1}{2}}
p_{\frac{N-1}{2}}(x).
\]

The relation between $Y_{11}(y)$ and $X_{11}(x)$ follows immediately:
\begin{equation}\label{eq:Y X odd}
Y_{11}(y) =
y\left(\frac{N}{2}\right)^{-\frac{N-1}{2}}, \qquad P_{\frac{N-1}{2}}(x) =
y\left(\frac{N}{2}\right)^{-\frac{N-1}{2}}X_{11}(x).
\end{equation}
Differentiating both sides with respect to $y$ gives 
\begin{equation}\label{Y11 X11 odd1}
Y_{11}^\prime(y) =
\left(\frac{N}{2}\right)^{-\frac{N-1}{2}}X_{11}(x) +
\frac{Ny^2}{2}\left(\frac{N}{2}\right)^{-\frac{N-1}{2}}\frac{d}{dx}X_{11}(x).
\end{equation}
Therefore,
\begin{equation}\label{Y11 X11 odd}
X_{11}(0) = \left(\frac{N}{2}\right)^\frac{N-1}{2}Y_{11}^\prime(0).
\end{equation}

When $N$ is odd the integral
\begin{equation*}
 Y_{12}(0) = \frac{1}{2\pi
i}\int_\mathbb{R} \frac{\pi_N(q)w_N(q)}{q}dq
\end{equation*}
is different from zero. Equation~\eqref{piPodd} and the change of
variables $q^2 = 2q_1/N$ give
\begin{equation}\label{eq:Y12 X12 odd}
X_{12}(0) =\sqrt{2N} \left(\frac{N}{2}\right)^{ \frac{N-1}{2}}Y_{12}(0).
\end{equation}

The relation between the orthogonality constants can be found in the
same way as for $N$ even; we obtain
\begin{equation} \label{hn kn odd}
h_{(N-1)/2}^{(p)} = \left( \frac{N}{2}\right)^N \sqrt{2N}h_N.
\end{equation}
\subsection{Relation to PIII}
\label{sse:rel_PIII}
Now that we have obtained the relations between entries in $X(x)$ and
$Y(y)$, we can express the elements of $Y(y)$ in terms of $v_X(\omega)$.  
In turn,  as $N \to \infty$ and $N^{\frac12}z\to \sqrt{u_2}$ such relations will allow us to 
derive formulae for the canonical coordinates $P_1$ and $Q_1$ involving the PIII transcendent $v_Y(r)$.
This completes the proof of Theorem~\ref{thm:main2}.

From the relations between the entries of $X(x)$ and $Y(y)$ in
Eqs.~\eqref{Y11 X11}, \eqref{Y12 X12 even}, \eqref{hn kn even},
\eqref{eq:Y X odd}, \eqref{eq:Y12 X12 odd} and \eqref{hn kn odd}, we
obtain  
\begin{equation} 
\label{v in Y 1}
v_X(\omega) = \begin{cases}  \frac{\sqrt{N}h_N}{2\pi i \sqrt{u_2} 
Y_{11}\p_{\zeta}Y_{12}}\bigl(1+O(N^{-1})\bigr), &  \text{$N$ even,}\\
\frac{\sqrt{N}h_N}{2\pi i \sqrt{u_2} \p_{\zeta}Y_{11}Y_{12}}
\bigl(1+O(N^{-1})\bigr), & \text{$N$ odd,}
        \end{cases}
\end{equation}
where 
\begin{equation}
\label{eq:tu2_rel}
\omega = \frac{1}{2}\sqrt{\frac{u_2}{N}}.
\end{equation}
The entries $Y_{11}$, $Y_{12}$ and $\p_{\zeta}Y_{12}$ in Eq.~\eqref{v
  in Y 1} are evaluated at $u_1=y=0$ and are functions of $\omega$
through~\eqref{eq:tu2_rel}.  

Since, we are interested in the large $N$ behaviour of $v_X(\omega)$, it is more
convenient to replace $h_N$ with its asymptotic limit. From
Eqs.~\eqref{eq:para}, \eqref{eq:approxS}, \eqref{eq:S} and \eqref{eq:Tg}, we see
that $S^\infty_{12}(y) = i/y + O(y^{-3})$, which leads to
\[
Y_{-1,12} = ie^{Nl}\left(1 + O(N^{-1})\right), \qquad N \to \infty,
\]
where $Y_{-1,12}$ is the coefficient of $y^{{-N-1}}$ in $Y_{12}(y)$ and $l$ is the constant appearing in the
inequalitiess~\eqref{eq:ineq}. Therefore, we have
\begin{equation*}
h_{N} = 2\pi e^{Nl}\left(1 + O(N^{-1})\right), \qquad N \to \infty.
\end{equation*}
Inserting this formula into~\eqref{v in Y 1} gives
\begin{equation}
\label{v in Y 2}
v_X(\omega) = \begin{cases}
\frac{\sqrt{N}e^{Nl}}{ i \sqrt{u_2} Y_{11}\p_{\zeta}Y_{12}}
\bigl(1 + O(N^{-\frac{1}{2}})\bigr), 
&  \text{ $N$ even},\\
\frac{\sqrt{N}e^{Nl}}{ i \sqrt{u_2}
\p_{\zeta}Y_{11}Y_{12}}\bigl(1 + O(N^{-\frac{1}{2}})\bigr), & \text{ $N$  odd}.
\end{cases}
\end{equation}
Thus, we see that as $N \to \infty$,  $v_X(\omega) = O\left(\sqrt{N}\right)$ uniformly.

Equation~\eqref{v in Y 2} expresses a solution of the ODE~\eqref{eq:PIII_IC} in
terms of the matrix elements of $Y(y)$.  In order to complete the proof of
Proposition \ref{pro:PIII}, we need to go further and study how the connection
between $Y(y)$ and the PIII equation changes as $N^{\frac{1}{2}}z \to
\sqrt{u_2}$.  This double scaling limit is encoded in the behaviour near the origin of
the function $\hat{\Psi}_0(\zeta)$.  Recall that $\hat{\Psi}_0(\zeta)$ was
defined in~\eqref{Yasym0}, while its expansions at $\zeta=0$ and $\zeta=\infty$
in Eqs.~\eqref{eq:psihat}.  The rest of the proof consists of three parts:
firstly, we will express the matrix elements $Y_{11}(0)$, $Y_{12}(0)$ and their
derivatives $\p_{\zeta}Y_{11}(0)$, $\p_\zeta Y_{12}(0)$ in terms of the first two
coefficients, $\Psi_0^{(0)}$ and $\Psi_1^{(0)}$, of the Taylor series of
$\hat{\Psi}_0(\zeta)$ at the origin; secondly, we will write the canonical
coordinates $P_1$ and $Q_1$ as functions of $\Psi_0^{(0)}$ and $\Psi_1^{(0)}$;
finally, eliminating these coefficients from such formulae and Eq.~\eqref{v in Y
  2} lead to the statement of Proposition~\ref{pro:PIII}. The first two steps
are summarized in the next lemma.

\begin{lemma}
\label{lem:can coords}
  Consider the expansion 
\[
\hat{\Psi}_0(\zeta) = \Psi_0^{(0)} +
  \Psi_1^{(0)} \zeta + O(\zeta^2).
\]
When $u_1=0$, the quantities $Y_{11}(0)$, $Y_{12}(0)$, $\p_\zeta Y_{11}(0)$,
$\p_{\zeta}Y_{12}(0)$ have the representations
\begin{subequations} 
\label{Y11 Y12 and derivs}
\begin{align}
Y_{11}(0) &= \frac{e^\frac{Nl}{2}}{\sqrt{2}}\left( \Psi_{0,11}^{(0)} e^{- i \frac{N\pi}{2}} - \Psi_{0,21}^{(0)}e^{i \frac{N\pi}{2}}\right)+O(N^{-1}), \\
\p_{\zeta}Y_{11}(0) &= \frac{e^\frac{Nl}{2}}{\sqrt{2}} \left(\Psi_{1,11}^{(0)}e^{-i \frac{N\pi}{2}} -\Psi_{1,21}^{(0)}e^{i \frac{N\pi}{2}}\right)+O(N^{-1}), \\
Y_{12}(0) &= \frac{e^{\frac{Nl}{2}}}{\sqrt{2}}\left( \Psi_{0,12}^{(0)}e^{- i \frac{N\pi}{2}} - \Psi_{0,22}^{(0)}e^{i \frac{N\pi}{2}}\right)+O(N^{-1}), \\
\p_{\zeta}Y_{12}(0) &=
\frac{e^{\frac{Nl}{2}}}{\sqrt{2}}\left(\Psi_{1,12}^{(0)}e^{-i
\frac{N\pi}{2}} - \Psi_{1,22}^{(0)}e^{i
\frac{N\pi}{2}}\right)+O(N^{-1}).
\end{align}
\end{subequations}
In addition, the canonical coordinates $P_1$
  and $Q_1$ can be written as
\begin{subequations}
\label{eq:can_Psi}
\begin{align}
P_1 &= \frac{u_2}{2}\left(\frac{\Psi_{1,11}^{(0)}}{2\Psi_{0,11}^{(0)}} - \frac{\Psi_{1,12}^{(0)}}{2\Psi_{0,12}^{(0)}}+ \Psi_{1,21}^{(0)}\Psi_{0,12}^{(0)} + \Psi_{1,22}^{(0)}\Psi_{0,11}^{(0)}\right), \\
Q_1 &= -\frac{\Psi_{1,11}^{(0)}}{2\Psi_{0,11}^{(0)}} -
\frac{\Psi_{1,12}^{(0)}}{2\Psi_{0,12}^{(0)}} -
\Psi_{1,21}^{(0)}\Psi_{0,12}^{(0)} +
\Psi_{1,22}^{(0)}\Psi_{0,11}^{(0)}.
\end{align}
\end{subequations}
\end{lemma}
\begin{proof}
Equations~\eqref{eq:YP}, \eqref{eq:PPsi0} and~\eqref{eq:zeta0} lead to 
\begin{equation}\label{eq:YPsi}
Y(y)=e^{\frac{Nl\sigma_3}{2}}S^{\infty}(y)e^{N\tilde{g}_+(0)\sigma_3}\hat{\Psi}_0(\zeta)
e^{\frac{Ny^2}{4}\sigma_3},\qquad \zeta\rightarrow 0.
\end{equation}
Inverting the conformal map~\eqref{eq:zeta0} gives
\[
e^{\pm\frac{Ny^2}{4}} = \left(1 \pm \frac{\zeta^2}{4N} +
\frac{\zeta^4}{32N^2}+ \cdots\right)\left(1+O(N^{-1})\right).
\]
When $\ipart(\zeta)<0$, formula~\eqref{eq:para} allows write the outer
parametrix $S^\infty(y)$ in terms of $\zeta$:
\begin{equation}\label{Sinf zeta}
S^\infty(\zeta) = \frac{1}{\sqrt{2}}
\begin{pmatrix}
  1 + \frac{i\zeta}{4N} + \frac{\zeta^2}{32N^2} & -\left(1 - \frac{i\zeta}{4N} + 
\frac{\zeta^2}{32N^2}\right)  \\
  1 - \frac{i\zeta}{4N} + \frac{\zeta^2}{32N^2} & 1 + \frac{i\zeta}{4N} +
  \frac{\zeta^2}{32N^2}
\end{pmatrix}
\left(1 + O(N^{-1})\right),
\end{equation}
where we have used the expansion
\begin{equation*}
\gamma^{\pm 1} = e^{\mp i \frac{\pi}{4}}\left(1 \mp \frac{\zeta}{4N}
+ \frac{\zeta^2}{32N^2} + \cdots \right) \left( 1 + O(N^{-1})
\right).
\end{equation*}
Furthermore, Eq.~\eqref{eq:YPsi} implies
\begin{subequations}
\label{Y Psi symmetries}
\begin{align}
  Y_{11}(\zeta) &=
  e^{\frac{N(2l+y^2)}{4}}\left(S^\infty_{11}(\zeta)\left(\hat{\Psi}_0(\zeta)\right)_{11}
    e^{-i\frac{N\pi}{2}}+
    S^\infty_{12}(\zeta)\left(\hat{\Psi}_0(\zeta)\right)_{21}
e^{i\frac{N\pi}{2}}\right), \\
  Y_{12}(\zeta) &=
  e^{\frac{N(2l-y^2)}{4}}\left(S^\infty_{11}(\zeta)\left(\hat{\Psi}_0(\zeta)\right)_{12}e^{-i\frac{N\pi}{2}}+
    S^\infty_{12}(\zeta)\left(\hat{\Psi}_0(\zeta)\right)_{22}e^{i\frac{N\pi}{2}}\right).
\end{align}
\end{subequations}
Combining~\eqref{Sinf zeta} and~\eqref{Y Psi symmetries} give formulae~\eqref{Y11 Y12 and derivs}.

In order to prove Eqs.~\eqref{eq:can_Psi}, note that from~\eqref{eq:A}
and~\eqref{eq:Azeta} we have 
\[
A_3 = \frac{u_2}{2}\Psi_0^{(0)} \sigma_3
\left(\Psi_0^{(0)}\right)^{-1}.
\]
Since $u_1=0$, $P_2=a_3=0$; therefore, we can write
\[
A_3 = \begin{pmatrix} 0 & b_3 \\ c_3 & 0 \end{pmatrix} =
\frac{u_2}{2}
\begin{pmatrix} \Psi_{0,11}^{(0)}\Psi_{0,22}^{(0)} + \Psi_{0,12}^{(0)}\Psi_{0,21}^{(0)} & -2\Psi_{0,11}^{(0)}\Psi_{0,12}^{(0)} \\ 2\Psi_{0,21}^{(0)}\Psi_{0,22}^{(0)} & -\Psi_{0,11}^{(0)}\Psi_{0,22}^{(0)} - \Psi_{0,12}^{(0)}\Psi_{0,21}^{(0)} \end{pmatrix}.
\]
The condition $\det(\Psi_0^{(0)}) = 1$ implies 
\begin{subequations}
\label{eq:b3c3}
\begin{alignat}{2}
\Psi_{0,22}^{(0)}& = \frac{1}{2\Psi_{0,11}^{(0)}}, &\qquad
\Psi_{0,21}^{(0)} & = -\frac{1}{2\Psi_{0,12}^{(0)}},\\
\label{eq:b3c3b} 
b_3& =-u_2\Psi_{0,11}^{(0)}\Psi_{0,12}^{(0)},  & c_3 & = -
\frac{u_2}{4\Psi_{0,11}^{(0)}\Psi_{0,12}^{(0)}}.
\end{alignat}
\end{subequations}
These relations together with  Eqs.~\eqref{eq:A}, \eqref{eq:Azeta} and
\eqref{eq:canon} lead to formulae~\eqref{eq:can_Psi}.
\end{proof}

Using Eqs.~\eqref{v in Y 2} and~\eqref{Y11 Y12 and derivs} we can write
$v_X(\omega)$ in terms of the elements of the matrices $\Psi_0^{(0)}$ and
$\Psi_1^{(0)}$. Recall that $\omega$, $u_2$ and $N$ are related
by~\eqref{eq:tu2_rel} and that $v_X=O(\sqrt{N})$. Let us also set $r
=\sqrt{u_2}$. We have
\begin{equation}
 \label{eq:vx_vy}
  v_X(\omega)=\sqrt{N}v_Y(r)\Bigl(1+O\bigl(N^{-1/2}\bigr)\Bigr),
\end{equation}
where
\begin{equation}
\label{v in psi}
v_Y^{-1}(r)= 
\begin{cases}
\frac{ ir}{2 } 
\left(
  \Psi_{0,11}^{(0)}-\Psi_{0,21}^{(0)}\right)\left(\Psi_{1,12}^{(0)}  
- \Psi_{1,22}^{(0)}\right), & \text{$N$ even}, \\
-\frac{ir}{2}\left(\Psi_{1,11}^{(0)}+\Psi_{1,21}^{(0)}
\right)\left( \Psi_{0,12}^{(0)}+ \Psi_{0,22}^{(0)}\right), & \text{$N$ odd}.
\end{cases}
\end{equation}
By inserting Eq.~\eqref{eq:vx_vy} into the ODE~\eqref{eq:PIII_IC} and taking the
leading order term, we see that $v_Y(r)$ satisfies the PIII
equation~\eqref{eq:PIII_vy} in Proposition~\ref{pro:PIII}.




The last step of the proof of Proposition~\ref{pro:PIII} consists in
eliminating the entries of $\Psi_0^{(0)}$ and $\Psi^{(0)}_1$ from
Eqs.~\eqref{eq:can_Psi} and~\eqref{v in psi}, thereby exprissing the
canonical coordinates $P_1$ and $Q_1$ in terms of $v_Y(r)$.


From~\eqref{eq:A} and the compatibility condition
\[
\partial_{u_2} A(\zeta) - \partial_\zeta B(\zeta) +
[A(\zeta),B(\zeta)]=0
\]
 we obtain
\begin{equation} 
\label{b3}
\begin{pmatrix} 0 &\partial_{u_2} b_3 \\ \partial_{u_2} c_3 & 0 \end{pmatrix} 
= \begin{pmatrix} \frac{1}{2u_2}\left( b_1c_3-c_1b_3 \right) & \frac{b_3}{u_2}
  \\ 
\frac{c_3}{u_2} & \frac{1}{2u_2}\left( c_1b_3-b_1c_3\right)
\end{pmatrix}.
\end{equation}
Hence, $b_3 = cu_2$ for some $c \in \mathbb{C}$. To determine the
constant $c$, we will use the asymptotic expansion for
$\hat{\Psi}_0(\zeta)$ in $u_2$ derived in Sec.~\ref{se:asymu2}.
Recall that  
\begin{equation}
\label{eq:Rphi2}
\hat{\Psi}_0(\zeta)=R_{\Phi}(\zeta)\Phi^{(p)}(\zeta)e^{i\zeta\sigma_3},
\end{equation}
where $R_{\Phi}(\zeta)$ was defined in~\eqref{eq:Rphi}.  Setting $\zeta=0$
in~\eqref{eq:Rphi2}, we find $\Psi_0^{(0)}$. Then, by \eqref{eq:RPhi} we have
\begin{equation*}
\hat{\Psi}_0(0)=\Bigl(I+O\bigl(\sqrt{u_2}\bigr)\Bigr)\Phi^{(p)}(0).
\end{equation*}
From the definition of $\Phi^{(p)}$ in \eqref{eq:philocal} we see that
\begin{equation*}
\Phi^{(p)}(0)=\begin{pmatrix}1&1\\
-1&0\end{pmatrix}\begin{pmatrix}1&\left(\phi(0)-\phi_{+,0}\right)\\
0&1\end{pmatrix}.
\end{equation*}
Lemma~\ref{pro:recur} gives $\phi(0) = 0$ and $\phi_{+,0} = \frac{1}{2}$. Thus,
the first relation in Eq.~\eqref{eq:b3c3b} becomes
\begin{equation}\label{c}
- \Psi_{0,11}^{(0)}\Psi_{0,12}^{(0)} = -\frac{1}{2} = c.
\end{equation}

Next, the constraint $\det\left(\hat{\Psi}_0\right)=1$ gives the following
identities among the entries of $\Psi_0^{(0)}$ and $\Psi_1^{(0)}$: 
\begin{subequations}
\label{eq:detid}
\begin{gather}
\Psi_{0,11}^{(0)}\Psi_{0,22}^{(0)} - \Psi_{0,12}^{(0)}\Psi_{0,21}^{(0)}=1,\\
\label{eq:detidb}
\Psi_{1,11}^{(0)}\Psi_{0,22}^{(0)} +
\Psi_{1,22}^{(0)}\Psi_{0,11}^{(0)} -
\Psi_{1,21}^{(0)}\Psi_{0,12}^{(0)} -
\Psi_{1,12}^{(0)}\Psi_{0,21}^{(0)} = 0.
\end{gather}
\end{subequations}
Definition~\eqref{eq:lax_A} leads to 
\begin{equation*}
\begin{split}
A(\zeta)& =\frac{\partial \hat{\Psi}_0(\zeta)}{\partial
\zeta}\hat{\Psi}_0^{-1}(\zeta) + 
\frac{u_2}{2\zeta^3}\hat{\Psi}_0(\zeta) \sigma_3
\hat{\Psi}_0^{-1}(\zeta)\\
& = \hat{\Psi}_0(\zeta)\left(  \frac{u_2}{2\zeta^3}\sigma_3 + 
\hat{\Psi}_0^{-1}(\zeta) \Psi_1^{(0)}\bigl(I + O(\zeta)\bigr)\right)\hat{\Psi}_0^{-1}(\zeta)  \\
&= \hat{\Psi}_0(\zeta)\left( \frac{u_2}{2\zeta^3}\sigma_3 +
\left(\Psi_0^{(0)}\right)^{-1} \Psi_1^{(0)}\bigl(I +
O(\zeta)\bigr)\right)\hat{\Psi}_0^{-1}(\zeta).
\end{split}
\end{equation*}
By taking the determinant of the right-hand side and looking at the
coefficient of $\zeta^{-3}$ we obtain
\begin{equation}\label{Psi1 id2}
-2\left(b_2c_1+b_1c_2\right) =
\left(\Psi_{0,11}^{(0)}\Psi_{1,22}^{(0)}+\Psi_{0,12}^{(0)}\Psi_{1,21}^{(0)}-\Psi_{0,21}^{(0)}\Psi_{1,12}^{(0)}
-\Psi_{0,22}^{(0)}\Psi_{1,11}^{(0)}\right)u_2.
\end{equation}
By using \eqref{eq:coefrel},  \eqref{eq:red1} and the fact
that $u_1=0$, we see that
\begin{equation*}
c_1=\frac{u_2^2}{4}\frac{b_1}{b_3^2},\qquad
c_2=\frac{u_2^2}{4}\frac{b_2}{b_3^2}.
\end{equation*}
Therefore, we also have
\begin{equation}\label{eq:b1c3}
b_2c_1+b_1c_2=0.
\end{equation}
We can now insert \eqref{eq:b1c3} into~\eqref{Psi1 id2} and use
\eqref{eq:detidb} to  arrive at
\begin{equation}
\label{eq:psi1 psi0}
\Psi_{0,11}^{(0)}\Psi_{1,22}^{(0)}=\Psi_{0,21}^{(0)}\Psi_{1,12}^{(0)},\qquad
\Psi_{0,12}^{(0)}\Psi_{1,21}^{(0)}=\Psi_{0,22}^{(0)}\Psi_{1,11}^{(0)}.
\end{equation}
These relations together with Eqs.\eqref{eq:b3c3} can be used to
simplify formulae~\eqref{eq:can_Psi} in Lemma \ref{lem:can
  coords}. Finally, we obtain
\begin{equation}
\label{P Q simp}
\begin{split}
P_1 = \frac{u_2}{2}\left(
\frac{\Psi_{1,21}^{(0)}}{\Psi_{0,11}^{(0)}} +
2\Psi_{1,22}^{(0)}\Psi_{0,11}^{(0)} \right), \qquad Q_1 = -
\frac{\Psi_{1,21}^{(0)}}{\Psi_{0,11}^{(0)}} +
2\Psi_{1,22}^{(0)}\Psi_{0,11}^{(0)}.
\end{split}
\end{equation}

We are now in a position to write $P_1$ and $Q_1$ in terms of the
transcendent $v_Y(r)$. We shall outline the calculation for $N$ even; the
derivation when $N$ is odd is almost identical. 

Using \eqref{c}, \eqref{eq:b3c3}, \eqref{eq:b1c3} and \eqref{v in psi} leads to
\[
v_Y^{-1}(r) =
-2ir\Psi_{0,11}^{(0)}\Psi_{1,22}^{(0)}.
\] 
Thus, by~\eqref{P Q simp} we have
\begin{equation}
\label{v0 in P Q even}
v_Y^{-1}(r) = -ir\left(\frac{P_1(r^2)}{r^2}+
\frac{Q_1(r^2)}{2}\right).
\end{equation}
The initial
conditions~\eqref{eq:asympcan_a} and~\eqref{eq:asympcan_b} provide the
asymptotic behaviour of $v_Y^{-1}(r)$ as $r\rightarrow 0$ and
prove~\eqref{eq:vasymeven2}. Higher order terms can be computed;
however, they are rather complicated and we shall not write them down
explicitly. 

Differentiating $v_Y^{-1}(r)$ and using the relations~\eqref{eq:red1}
between $P_1$, $Q_1$ and the Hamilton equations~\eqref{eq:2ndorder} gives
\begin{equation}
\label{eq:dvy}
\begin{split}
\frac{dv_Y}{dr}
&= i v_Y^2(r)\left( \frac{\p_{r}P_1}{r} - \frac{P_1}{r^2} + \frac{r\p_{r}Q_1}{2} + \frac{Q_1}{2}\right)\\
&= v_Y(r)\frac{4P_1^2(r^2)-r^4Q_1^2(r^2)}{2r^3} + v_Y^2(r).
\end{split}
\end{equation}
Equation~\eqref{v0 in P Q even} can be rearranged as 
\begin{equation*}
P_1 = \frac{ir}{v_Y(r)} - \frac{r^2 Q_1}{2}, \qquad
Q_1 = \frac{2i}{v_Y(r)r}- \frac{2P_1}{r^2}.
\end{equation*}
Inserting these expressions into Eq.~\eqref{eq:dvy} we find
$P_1$ and $Q_1$ in terms of $v_Y^{-1}$, $v_Y^\prime$ and $r$.
Namely, we have
\begin{subequations}
\label{P in v even}
\begin{align}
P_1& =  \frac{ir}{ 2v_Y(r)} +
\frac{iu_2}{4}\left(v_Y^2(r) -
v_Y^\prime(r)\right) ,\\ Q_1&=
\frac{i}{rv_Y(r)}+\frac{i}{2}\left(v_Y^{\prime}(r)-v_Y^2(r)\right).
\end{align}
\end{subequations}
This completes the proof of Proposition~\ref{pro:PIII}.
\bibliographystyle{amsplain}

\begin{thebibliography}{10}

\bibitem{AHH90}
M.~R. Adams, J.~Harnad, and J.~Hurtubise, \emph{Isospectral hamiltonian flows
  in finite and infinite dimensions. {II}. {I}ntegration of flows},
  Commun. Math. Phys. \textbf{134} (1990), no.~3, 555--585.

\bibitem{AHH93}
\bysame, \emph{Darboux coordinates and {L}iouville-{A}rnold integration in loop
  algebra}, Commun. Math. Phys. \textbf{155} (1993), no.~2, 385--413.

\bibitem{AHP88}
M.~R. Adams, J.~Harnad, and E.~Previato, \emph{Isospectral hamiltonian flows in
  finite and infinite dimensions. {I}. {G}eneralized {M}oser systems and
  moment maps into loops algebra}, Commun. Math. Phys. \textbf{117} (1988),
  no.~3, 451--500.

\bibitem{BS08}
M.~V. Berry and P.~Shukla, \emph{Tuck's incompressibility function: statistics
  for zeta zeros and eigenvalues}, J. Phys. A: Math. Theor. \textbf{41} (2008),
  no.~38, 385202.

\bibitem{BEH06}
M.~Bertola, B.~Eynard, and J.~Harnad, \emph{Semiclassical orthgonal
  polynomials, matrix models, and isomonodromic tau functions}, Commun. Math.
  Phys. \textbf{263} (2006), no.~2, 401--437.

\bibitem{BHHP} 
M. Bertola, J. Harnad, J. Hurtubise, and G. Pusztai,
(2004). Private communication.

\bibitem{BI99}
P.~Bleher and A.~Its, \emph{Semiclassical asymptotics of orthogonal
  polynomials, {R}iemann-{H}ilbert problem, and universality in the matrix
  model}, Ann. Math. (2) \textbf{150} (1999), no.~1, 185--266.

\bibitem{BI03}
\bysame, \emph{Double scaling limit in the random matrix model: the
  {R}iemann-{H}ilbert approach}, Commun. Pure App. Math. \textbf{56} (2003),
  no.~4, 433--516.

\bibitem{BFB97}
P.~W. Brouwer, K.~M. Frahm, and C.~W.~J. Beenakker, \emph{Quantum mechanical
  time-delay matrix in chaotic scattering}, Phys. Rev. Lett. \textbf{78}
  (1997), no.~25, 4737--4740.

\bibitem{CI10}
Y.~Chen and A.~Its, \emph{{P}ainlev\'e {III} and a singular linear statistics
  in {H}ermitian random matrix ensembles, {I}}, J. Approx. Theor.
  \textbf{162} (2010), no.~2, 270--297.

\bibitem{CI09}
Y.~Chen and A.~Its, \emph{{P}ainlev\'e {III} and a singular linear statistics
  in {H}ermitian random matrix ensembles, {II}.  The Asymptotic
  analysis,} unpublished (2009).

\bibitem{CK06}
T.~Claeys and B.~J. Kuijlaars, \emph{Universality of the double scaling limit
  in random matrix models}, Commun. Pure Appl. Math. \textbf{59} (2006),
  no.~11, 1573--1603.

\bibitem{CKV08}
T.~Claeys, B.~J. Kuijlaars, and M.~Vanlessen, \emph{Multi-critical unitary
  random matrix ensembles and the general {P}ainlev\'e {II} equation}, Ann.
  Math. \textbf{168} (2008), no.~2, 601--642.

\bibitem{CV07}
T.~Claeys and M.~Vanlessen, \emph{Universality of a double scaling limit near
  singular edge points in random matrix models}, Commun. Math. Phys.
  \textbf{273} (2007), no.~2, 499--532.

\bibitem{Dei99book}
P. Deift, \emph{Orthogonal polynomials and random matrices: a
  {R}iemann-{H}ilbert approach}, Courant Lecture Notes in Mathematics, vol.~3,
  New York University Courant Institute of Mathematical Sciences, New York,
  1999. 

\bibitem{DG09}
P.~Deift and D. Gioev, \emph{Random matrix theory:invariant ensembles
  and universality}, Courant Lecture Notes in Mathematics, vol. 18,
  Courant Institute of Mathematical Sciences, New York, 2009.  

\bibitem{DFFHMP10}
E.~Due{\~n}ez, D.~W. Farmer, S.~Froehlich, C.~P. Hughes, F.~Mezzadri, and
  T.~Phan, \emph{Roots of the derivative of the {R}iemann-zeta function and of
  characteristic polynomials}, Nonlinearity \textbf{23} (2010), no.~10,
  2599--2621.

\bibitem{DKMVZ99b}
P.~Deift, T.~Kriecherbauer, K.~T.-R. McLaughlin, S.~Venakides, and X.~Zhou,
  \emph{Strong asymptotics of orthogonal polynomials with respect to
  exponential weights}, Commun. Pure Appl. Math. \textbf{52} (1999), no.~12,
  1491--1552.

\bibitem{DKMVZ99a}
\bysame, \emph{Uniform asymptotics for polynomials orthogonal with respect to
  varying exponential weights and applications to universality questions in
  random matrix theory}, Commun. Pure Appl. Math. \textbf{52} (1999), no.~11,
  1335--1425.

\bibitem{DZ93}
P.~Deift and X.~Zhou, \emph{A steepest descent method for oscillatory
  {R}iemann-{H}ilbert problems. {A}symptotics for the {MK}d{V} equation}, Ann.
   Math. (2) \textbf{137} (1993), no.~2, 295--368. 

\bibitem{FT87}
L.~D. Faddeev and L.~A. Takhtajan, \emph{Hamiltonian methods in the theory of
  solitons}, Springer Series in Soviet Mathematics, Springer-Verlag, Berlin,
  1987.

\bibitem{FIK91}
A.~S. Fokas, A.~R. Its, and A.~V. Kitaev, \emph{Discrete {P}ainlev\'e equations
  and their appearance in quantum gravity}, Commun. Math. Phys. \textbf{142}
  (1991), no.~2, 313--344. 

\bibitem{FIK92}
\bysame, \emph{The isomonodromy approach to matrix models in {$2$}{D} quantum
  gravity}, Commun. Math. Phys. \textbf{147} (1992), no.~2, 395--430.


\bibitem{FW02}
P.~J. Forrester and N.~S. Witte, \emph{Application of the {$\tau$}-function
  theory of {P}ainlev\'e equations to random matrices: {$\rm P_V$}, {$\rm
  P_{III}$}, the {LUE}, {JUE}, and {CUE}}, Comm. Pure Appl. Math. \textbf{55}
  (2002), no.~6, 679--727. 

\bibitem{FW06} \bysame, \emph{Boundary conditions associated with the
    {P}ainlev\'e {III{$'$}} and {V} evaluations of some random matrix
    averages}, J. Phys. A: Math. Gen. \textbf{39} (2006), no.~28,
  8983--8995.

\bibitem{Har94}
J.~Harnad, \emph{Dual isomonodromic deformations and moments maps to loop
  algebras}, Commun. Math. Phys. \textbf{166} (1994), no.~2, 337--365.

\bibitem{H} 
J.~Harnad, \emph{The
Haminltonian structure of the general rational isomonodromic
deformations}, Talk at the \emph{Colloque international en l' honneur
de Pierre van Moerbeke}, Poitiers, France, (2005). Unpublished.

\bibitem{HR95}
J.~Harnad and M.~Routhier, \emph{{$R$}-matrix construction of electromagnetic
  models for the {P}ainlev\'e transcendents}, J. Math. Phys. \textbf{36}
  (1995), no.~9, 4863--4881.

\bibitem{JMU81}
M.~Jimbo, T.~Miwa, and K.~Ueno, \emph{Monodromy preserving deformation of
  linear ordinary differential equations with rational coefficients: {I}.
  {G}eneral theory and $\tau$-function}, Physica D \textbf{2} (1981), no.~2,
  306--352.

\bibitem{Luk01}
S.~Lukyanov, \emph{Finite temperature expectation values of local fields in the
  $\sinh$-{G}ordon model}, Nucl. Phys. B \textbf{612} (2001), no.~3, 391--412.

\bibitem{Mez03} F.~Mezzadri, \emph{Random matrix theory and the zeros
    of {$\zeta'(s)$}}, J. Phys.  A: Math. Gen. \textbf{36} (2003),
  no.~12, 2945--2962, Random matrix theory. 

\bibitem{MM07}
M.~Mazzocco and M.~Y. Mo, \emph{The {H}amiltonian structure of the second
  {P}ainlev\'e hierarchy}, Nonlinearity \textbf{20} (2007), no.~12, 2845--2882.

\bibitem{MM09}
F.~Mezzadri and M.~Y. Mo, \emph{On an average over the {G}aussian {U}nitary
  {E}nsemble}, Int. Math. Res. Not. \textbf{2009} (2009), no.~18, 3486--3515.

\bibitem{MS12b}
F.~Mezzadri and N.~J. Simm, \emph{Tau-function theory of quantum chaotic
  transport with $\beta=1,2,4$}, Commun. Math. Phys. \textbf{324}
(2013), no.~2, 465--513.



\bibitem{Sze39}
G.~Szeg\H{o}, \emph{Orthogonal {P}olynomials}, Colloquium Publications,
  vol.~23, Amer. Math. Soc., New York, 1939. 

\bibitem{TM13}
C.~Texier and S.~N. Majumdar, \emph{Wigner time-delay distribution in
  chaotic cavities and freezing transition},
Phys. Rev. Lett. \textbf{110} (2013), no.~25, 250602. 

\bibitem{VN84}
A.~P. Veselov and S.~P. Novikov, \emph{Poisson brackets and complex tori},
  Algebraic geometry and its applications, Trudy Mat. Inst. Steklov, vol. 165,
  MAIK Nauka/Interperiodica, 1984, pp.~49--61.

\bibitem{Zho89}
X.~Zhou, \emph{The {R}iemann-{H}ilbert problem and inverse scattering}, SIAM J.
  Math. Anal. \textbf{20} (1989), no.~4, 966--986.

\end{thebibliography}
\providecommand{\bysame}{\leavevmode\hbox to3em{\hrulefill}\thinspace}
\providecommand{\MR}{\relax\ifhmode\unskip\space\fi MR }
\providecommand{\MRhref}[2]{%
  \href{http://www.ams.org/mathscinet-getitem?mr=#1}{#2}
}
\providecommand{\href}[2]{#2}

\noindent\rule{16.2cm}{.5pt}

\vspace{.25cm}

{\small \noindent {\sl School of Mathematics \\
                       University of Bristol\\
                       Bristol BS8 1TW, UK  \\
                       Email: {\tt l.brightmore@bristol.ac.uk}\\
                       Email: {\tt f.mezzadri@bristol.ac.uk}\\
                       Email: {\tt m.mo@bristol.ac.uk}

                       \vspace{.25cm}

                       \noindent 20 September 2013}

\end{document}